\let\oldvec\vec
\documentclass{llncs}
\let\vec\oldvec

\usepackage[utf8]{inputenc}
\usepackage{mdframed, float}
\usepackage{microtype,lmodern,amsmath,amsthm,amsfonts,longtable,etoolbox,tikz, colortbl, mathtools}
\usetikzlibrary{graphs,quotes,decorations.pathreplacing,petri,snakes}
\usepackage{csquotes}
\usepackage{multirow}
\usepackage{array}
\usepackage{pbox}
\bibliographystyle{splncs04}  
\usepackage{caption}
\usepackage[misc,geometry]{ifsym} 
\pagestyle{headings}

\newlength{\problemoffset}
\setlength{\problemoffset}{0.5in}

\newcommand{\escale}[1]{\ensuremath{\textbf{\scalebox{0.8}{#1}}}}
\newcommand{\nscale}[1]{\ensuremath{\textbf{\scalebox{0.8}{#1}}}}
\newcommand{\escales}[1]{\ensuremath{\textbf{\scalebox{0.6}{#1}}}}
\newcommand{\nscales}[1]{\ensuremath{\textbf{\scalebox{0.7}{#1}}}}

\newcommand{\myEdge}[2]{ \tikz[baseline=-1pt]{
\draw[#2,line width=0.3pt] (0,0) -- ++(0.6,0) node[anchor=base, yshift=3pt, pos=0.5] {\escale{$#1$}};
}}

\newcommand{\Fbedge}[1]{ \tikz[baseline=-1pt]{
\draw[<->,line width=0.3pt] (0,0) -- ++(0.6,0) node[anchor=base, yshift=5pt, pos=0.5] {\escale{$#1$}};
}}

\newcommand{\ledge}[1]{ \tikz[baseline=-1pt]{
\draw[->,line width=0.3pt] (0,0) -- ++(0.85,0) node[anchor=base, yshift=4pt, pos=0.5] {\escale{$#1$}};
}}
\newcommand{\lledge}[1]{ \tikz[baseline=-1pt]{
\draw[->,line width=0.3pt] (0,0) -- ++(1,0) node[anchor=base, yshift=4pt, pos=0.5] {\escale{$#1$}};
}}

\newcommand{\edge}[1]{\myEdge{#1}{->}}

\newcommand{\fbedge}[1]{\myEdge{#1}{<->}}

\newcommand{\nop}{\ensuremath{\textsf{nop}}}
\newcommand{\inp}{\ensuremath{\textsf{inp}}}
\newcommand{\out}{\ensuremath{\textsf{out}}}
\newcommand{\set}{\ensuremath{\textsf{set}}}
\newcommand{\res}{\ensuremath{\textsf{res}}}
\newcommand{\swap}{\ensuremath{\textsf{swap}}}
\newcommand{\free}{\ensuremath{\textsf{free}}}
\newcommand{\used}{\ensuremath{\textsf{used}}}

\newcommand{\exit}{\ensuremath{\mathfrak{exit}}}
\newcommand{\enter}{\ensuremath{\mathfrak{enter}}}
\newcommand{\saveone}{\ensuremath{\mathfrak{save}_1}}
\newcommand{\savezero}{\ensuremath{\mathfrak{save}_0}}
\newcommand{\save}{\ensuremath{\mathfrak{save}}}

\begin{document}
\title{The Complexity of Boolean State Separation (Technical Report)} 
\author{Ronny Tredup(\Letter)\inst{1} \and
Evgeny Erofeev\thanks{Supported by DFG through \mbox{grant Be 1267/16-1} {\tt ASYST}.}\inst{2}}
\institute{
Universit\"at Rostock, Institut f\"ur Informatik, Theoretische Informatik, Albert-Einstein-Stra\ss e 22, 18059, Rostock 
(\email{ronny.tredup@uni-rostock.de})
\and
Department of Computing Science, Carl von Ossietzky Universit\"at Oldenburg,\\ D-26111 Oldenburg, Germany 
(\email{evgeny.erofeev@informatik.uni-oldenburg.de})
}
\maketitle

\begin{abstract}
For a Boolean type of nets $\tau$, a transition system $A$ is synthesizeable into a $\tau$-net $N$ if and only if distinct states of $A$ correspond to distinct markings of $N$, and $N$ prevents a transition firing if there is no related transition in $A$. 
The former property is called $\tau$-state separation property ($\tau$-SSP) while the latter -- $\tau$-event/state separation property ($\tau$-ESSP). 
$A$ is embeddable into the reachability graph of a $\tau$-net $N$ if and only if $A$ has the $\tau$-SSP. 
This paper presents a complete characterization of the computational complexity of \textsc{$\tau$-SSP} for all Boolean Petri net types.
\end{abstract}

\keywords{Boolean Petri nets, Boolean state separation, complexity characterization}

\section{Introduction}%

Providing a powerful mechanism for the modeling of conflicts, dependencies and parallelism, Petri nets are widely used for studying and simulating concurrent and distributed systems. 
In system analysis, one aims to check behavioral properties of system models, and many of these properties are decidable~\cite{DBLP:journals/eatcs/EsparzaN94} for Petri nets and their reachability graphs, which represent systems' behaviors. 
The task of system synthesis is opposite: a (formal) specification of the system's behavior is given, and the goal is then to decide whether this behavior can be implemented by a Petri net. 
In case of a positive decision, such a net should be constructed. 

Boolean Petri nets form a simple yet rich and powerful family of Petri nets~\cite{DBLP:series/txtcs/BadouelBD15,DBLP:journals/acta/BadouelD95,DBLP:journals/acta/KleijnKPR13,DBLP:journals/acta/MontanariR95,DBLP:conf/apn/Pietkiewicz-Koutny97,DBLP:conf/ac/RozenbergE96,DBLP:conf/stacs/Schmitt96}, applied in asynchronous circuits design~\cite{10.1007/BFb0055644,DBLP:journals/integration/YakovlevKSK96}, concurrent constraint programs~\cite{10.1007/3-540-57787-4_18} and  analysis of biological systems models~\cite{DBLP:journals/nc/ChatainHKPT20}. 
In Boolean nets, each place contains at most one token, for any reachable marking. 
Hence, a place can be interpreted as a Boolean condition that is \emph{true} if marked and \emph{false} otherwise.
A place $p$ and a transition $t$ of such a net are related by one of the Boolean \emph{interactions} that define in which way $p$ and $t$ influence each other. 
The interaction $\inp$ ($\out$) defines that $p$ must be \emph{true} (\emph{false}) before and \emph{false} (\emph{true}) after $t$'s firing;
$\free$ ($\used$) implies that $t$'s firing proves that $p$ is \emph{false} (\emph{true});
$\nop$ means that $p$ and $t$ do not affect each other at all; 
$\res$ ($\set$) implies that $p$ may initially be both \emph{false} or \emph{true} but after $t$'s firing it is \emph{false} (\emph{true});
$\swap$ means that $t$ inverts $p$'s current Boolean value.
Boolean Petri nets are classified by the sets of interactions that can be applied. 
A set $\tau$ of Boolean interactions is called a \emph{type of net}, and a net $N$ is of type $\tau$ (a \emph{$\tau$-net}) if it applies at most the interactions of $\tau$. 
For a type $\tau$, the $\tau$-\emph{synthesis} problem consists in deciding whether a specification given in the form of a labeled transition system (TS) is isomorphic to the reachability graph of some $\tau$-net $N$, and in constructing $N$ if it exists.
The complexity of $\tau$-synthesis has been studied in different settings~\cite{tredup2019complexity,DBLP:conf/sofsem/Tredup20,DBLP:conf/tamc/TredupR19}, and varies substantially from polynomial~\cite{DBLP:conf/stacs/Schmitt96} to NP-complete~\cite{DBLP:journals/tcs/BadouelBD97}.

In order to perform synthesis, that is, to implement the behavior specified by the given TS with a $\tau$-net, two general problems have to be resolved: 
The $\tau$-net has to distinguish the global states of the TS, and the $\tau$-net has to prevent actions at states where they are not permitted by the TS. 
In the literature~\cite{DBLP:series/txtcs/BadouelBD15}, the former requirement is usually referred to as \emph{$\tau$-state separation property} ($\tau$-SSP), while the latter -- \emph{$\tau$-event/state separation property} ($\tau$-ESSP). 
Both $\tau$-SSP and $\tau$-ESSP define decision problems that ask whether a given TS fulfills the respective property.
The present work focuses exclusively on the computational complexity of $\tau$-SSP. 
The interest to state separation is motivated in several ways. 
First, many synthesis approaches are very sensitive to the size of the input's state space. 
This raises the question if some initial, so-called \emph{pre-synthesis} procedures~\cite{DBLP:journals/scp/BestD18,DBLP:journals/iandc/Wimmel20} can be employed as a quick-fail mechanism, i.e., techniques with a small computational overhead that would gain some helpful information for the main synthesis, or reject the input if exact (up to isomorphism) synthesis is not possible. 
Since $\tau$-synthesis allows a positive decision if and only if $\tau$-SSP and $\tau$-ESSP do~\cite{DBLP:series/txtcs/BadouelBD15}, an efficient decision procedure for $\tau$-SSP could serve as a quick-fail pre-process check.
Second, if exact synthesis is not possible for the given TS, one may want to have a simulating model, i.e., a $\tau$-net that over-approximates~\cite{DBLP:conf/lata/Schlachter18,DBLP:journals/corr/abs-2002-04841} the specified behavior with some possible supplement. 
Formally, the TS then has to be injectively embeddable into the reachability graph of a $\tau$-net. 
It is well known from the literature~\cite{DBLP:series/txtcs/BadouelBD15} that a TS can be embedded into the reachability graph of a $\tau$-net if and only if it has the $\tau$-SSP.
Finally, in comparison to $\tau$-ESSP, so far the complexity of $\tau$-SSP is known to be as hard \cite{DBLP:conf/tapsoft/BadouelBD95,DBLP:conf/stacs/Schmitt96,DBLP:conf/tamc/TredupR19,DBLP:conf/apn/TredupRW18} or actually less hard \cite{DBLP:conf/apn/Tredup19a,DBLP:conf/apn/Tredup19}. 
On the contrary, in this paper, for some types of nets, deciding the $\tau$-SSP is proven harder (NP-complete) than deciding the $\tau$-ESSP (polynomial), e.g., for $\tau = \{ \nop, \res, \set\}$. 
From the contribution perspective, the $\tau$-SSP has been previously considered only in the broader context of $\tau$-synthesis, and only for selected types~\cite{DBLP:conf/stacs/Schmitt96,DBLP:conf/apn/Tredup19,DBLP:conf/apn/TredupE20,DBLP:conf/tamc/TredupR19}. 
In this paper, we completely characterize the complexity of $\tau$-SSP for all 256 Boolean types of nets and discover 150 new hard types, cf.~\S1 - \S3, \S6, \S9 of ~Figure~\ref{fig:overview}, and 4 new tractable types, cf.~Figure~\ref{fig:overview}~\S10, and comprise the known results for the other 102 types as well, cf.~Figure~\ref{fig:overview}~\S4, \S5, \S7, \S8.
In particular, our characterization categorizes Boolean types with regard to their behavioral capabilities, resulting from the Boolean interactions involved. 
This reveals the internal organization of the entire class of Boolean nets, suggesting a general approach for reasoning about its subclasses.

This paper is organized as follows. 
In Section~\ref{sec:Preliminaries}, all the necessary notions and definitions will be given. 
Section~\ref{sec:nop_equipped} presents NP-completeness results for the types with $\nop$-interaction. 
$\tau$-SSP for types without $\nop$ is investigated in Section~\ref{sec:nop-free}. 
Concluding remarks are given in Section~\ref{sec:conclusion}. 
One proof is shifted to the appendix.

\begin{figure}[t!]\centering
\begin{tabular}{ p{0.5cm} | p{8cm}|  p{1.9cm}| p{1.25cm}}
$\S$ &Type of net $\tau$ & Complexity & Quantity
\\ \hline
\rowcolor[gray]{.8} 1 & $\{\nop,\res,\set,\swap\}\cup \omega$ with $\omega \subseteq \{\inp,\out,\used,\free\}$ & NP-complete & 16 \\ \hline
\rowcolor[gray]{.8}  2 & $\{\nop,\res,\swap\}\cup \omega$ with $\omega \subseteq \{\inp,\out,\used,\free\}$, \newline $\{\nop,\set,\swap\}\cup \omega$ with $\omega \subseteq \{\inp,\out,\used,\free\}$ & NP-complete & 32 \\ \hline
\rowcolor[gray]{.8}  3 & $\{\nop,\res,\set\}\cup \omega $ with $\omega\subseteq \{\inp,\out,\used,\free\}$, \newline $\{\nop,\out,\res\}\cup\omega$ with $\omega\subseteq \{\inp,\used,\free\}$, \newline $\{\nop,\inp,\set\}\cup\omega$ with $\omega\subseteq \{\out,\used,\free\}$ & NP-complete & 32 \\ \hline
4 & $\{\nop,\res\}\cup\omega$ with $\omega\subseteq \{\inp,\used,\free\}$,\newline $\{\nop,\set\}\cup\omega$ with $\omega\subseteq \{\out,\used,\free\}$ & polynomial & 16 \\ \hline
5 & $\{\nop,\inp,\out\}$ and $\{\nop,\inp,\out,\used\}$ & NP-complete & 2 \\ \hline 
\rowcolor[gray]{.8}  6 & $\{\nop,\inp,\out,\free\}$ and $\{\nop,\inp,\out,\used, \free\}$,\newline $\{\nop, \inp\}\cup\omega$ and $\{\nop, \out\}\cup\omega$ with $\omega\subseteq\{\used,\free\}$ & NP-complete & 10 \\ \hline 
7 & $\{\nop, \swap\}\cup\omega$ with $\omega\subseteq \{\inp,\out, \used, \free\}$, \newline $\{\nop\}\cup\omega$ with $\omega\subseteq \{\used, \free\}$ & polynomial & 20\\ \hline
8 & $\omega\subseteq \{\inp,\out,\res,\set,\used,\free\}$ & polynomial & 64\\ \hline
\rowcolor[gray]{.8}  9 & $\{\swap\}\cup \omega$ with $\omega\subseteq \{\inp,\out,\res,\set,\used,\free\}$ and $\omega\cap\{\res,\set,\used,\free\}\not=\emptyset$& NP-complete & 60\\ \hline
\rowcolor[gray]{.8}  10 &  $\{\swap\}\cup\omega$ with $\omega\subseteq\{\inp,\out\}$  & polynomial & 4\\
\end{tabular}
\caption{Overview of the computational complexity of $\tau$-SSP for Boolean types of nets $\tau$.
The gray area highlights the new results that this paper provides.}
\label{fig:overview}
\end{figure}

\section{Preliminaries}
\label{sec:Preliminaries}
In this section, we introduce necessary notions and definitions, supported by illustrations and examples, and some basic results that are used throughout the paper.

\textbf{Transition Systems.}
A (finite, deterministic) \emph{transition system} (TS, for short) $A=(S,E, \delta)$ is a directed labeled graph with the set of nodes $S$ (called \emph{states}), the set of labels $E$ (called \emph{events}) and partial \emph{transition function} $\delta: S\times E \longrightarrow S$. 
If $\delta(s,e)$ is defined, we say that $e$ \emph{occurs} at state $s$, denoted by $s\edge{e}$. 
By $s\edge{e}s'\in A$, we denote $\delta(s,e)=s'$.
This notation extends to paths, i.e., $q_0\edge{e_1}\dots\edge{e_n}q_n\in A$ denotes $q_{i-1}\edge{e_{i}}q_{i}\in A$ for all $i\in \{1,\dots, n\}$.
A TS $A$ is \emph{loop-free}, if $s\edge{e}s'\in A$ implies $s\not=s'$.
A loop-free TS $A$ is \emph{bi-directed} if $s'\edge{e}s\in A$ implies $s\edge{e}s'\in A$.
We say $s_0\fbedge{e_1}\dots \fbedge{e_n}s_n \in A$ is a simple bi-directed path if $s_i\not=s_j$ for all $i\not=j$ with $i,j \in \{0,\dots,n\}$.
An \emph{initialized} TS $A=(S,E,\delta, \iota)$ is a TS with a distinct \emph{initial} state $\iota\in S$, where every state $s\in S$ is \emph{reachable} from $\iota$ by a directed labeled path.
%
\begin{figure}[t!]
\begin{minipage}{1.0\textwidth}
\centering
\begin{tabular}{c|c|c|c|c|c|c|c|c}
$x$ & $\nop(x)$ & $\inp(x)$ & $\out(x)$ & $\res(x)$  & $\set(x)$ & $\swap(x)$ & $\used(x)$ & $\free(x)$\\ \hline
$0$ & $0$ & & $1$ & $0$ & $1$ & $1$ & & $0$\\
$1$ & $1$ & $0$ & & $0$ & $1$ & $0$ & $1$ & \\
\end{tabular}
\caption{
All interactions $i$ of $I$.
If a cell is empty, then $i$ is undefined on the respective $x$.
}\label{fig:interactions}
\end{minipage}

\begin{minipage}{1.0\textwidth}
\begin{center}
\begin{tikzpicture}[scale = 1.2]
\begin{scope}

\coordinate (1) at (1,0) ;
\foreach \i in {1} {\fill[red!20, rounded corners] (\i) +(-0.3,-0.4) rectangle +(0.45,0.4);}

\node (0) at (0,0) {\nscale{$0$}};
\node (1) at (1,0) {\nscale{$1$}};

\path (0) edge [->, out=-120,in=120,looseness=5] node[left, align =left] {\escale{$\nop$}} (0);
\path (1) edge [<-, out=60,in=-60,looseness=5] node[right, align=left] {\escale{$\nop$}  } (1);

\path (0) edge [<-] node[below] {\escale{$\inp$}} (1);
\node () at (0.0,-0.6) {\nscale{$ \tau $}};

\end{scope}
\begin{scope}[xshift=3cm]

\node (0) at (0,-0.6) {\nscale{$s_0$}};
\node (1) at (0,0.6) {\nscale{$s_1$}};
\draw[-latex,out=60,in=-60](0)to node[right]{$a$}(1);
\draw[-latex,out=240,in=120](1)to node[left]{$a$}(0);
\node () at (0.5,-0.6) {\nscale{$ A_1$}};
\end{scope}
\begin{scope}[xshift=4.7cm]

\node (0) at (0,-0.6) {\nscale{$r_0$}};
\node (1) at (0,0.6) {\nscale{$r_1$}};
\draw[-latex,out=60,in=-60](0)to node[right]{$b$}(1);
\draw[-latex,out=120,in=240](0)to node[left]{$c$}(1);
\node () at (0.5,-0.6) {\nscale{$ A_2$}};
\end{scope}
\begin{scope}[xshift=6.7cm]
\coordinate (1) at (1.5,0) ;
\foreach \i in {1} {\fill[red!20, rounded corners] (\i) +(-0.3,-0.4) rectangle +(0.45,0.4);}
\node (0) at (0,0) {\nscale{$0$}};
\node (1) at (1.5,0) {\nscale{$1$}};

\path (0) edge [->, out=-120,in=120,looseness=5] node[left, align =left] {\escale{$\nop$} } (0);
\path (1) edge [<-, out=60,in=-60,looseness=5] node[right, align=left] {\escale{$\nop$} \\ \escale{\used} \\ \escale{\set} } (1);

\path (0) edge [<-, bend right= 30] node[below] {\escale{$ \swap$}\  } (1);
\path (0) edge [->, bend left= 30] node[above] {\escale{$\set,\swap$}} (1);

\node () at (1.5,-0.6) {\nscale{$\tilde{\tau}$}};
\end{scope}
\end{tikzpicture}
\end{center}
\end{minipage}
\caption{%
Left: $ \tau = \{ \nop, \inp \}$. 
Right: $ \tilde{\tau} = \{ \nop,  \set, \swap, \used \}$.
The red colored area emphasizes the inside.
The only SSP atom of $A_1$ is $(s_0, s_1)$.
It is $\tilde{\tau} $-solvable by $R_1=(sup_1,sig_1)$ with $sup_1(s_0) = 0$, $sup_1(s_1) = 1$, $sig_1(a) = \swap$. 
Thus, $A_1$ has the $\tilde\tau$-separative set $\mathcal{R}=\{R_1\}$.
The SSP atom $(s_0,s_1)$ is not $\tau$-solvable.
The only SSP atom $(r_0,r_1)$ in $A_2$ can be solved by $\tilde{\tau} $-region $R_2=(sup_2, sig_2)$ with $sup_2(r_0) = 0$, $sup_2(r_1) = 1$, $sig_2(b) = \set$, $sig_2(c) = \swap$. 
Thus, $A_2$ has the $\tilde{\tau}$-SSP. 
The same atom can also be solved by $\tau$-region $R_3 = (sup_3,sig_3)$ with $sup_3(r_0) = 1$, $sup_3(r_1) = 0$, $sig_3(b) = sig_3(c) = \inp$. 
Hence, $A_2$ has the $\tau$-SSP, as well. 
}\label{fig:ex_types}
\begin{minipage}{1.0\textwidth}
\centering
\begin{tikzpicture}[new set = import nodes]
\begin{scope}[nodes={set=import nodes}]
\foreach \i in {0,...,3} {\node (\i) at (\i*1.1cm,0) {\nscale{$s_\i$}}; } 
\graph {
	(import nodes);
			0->["\escale{$a$}"]1->["\escale{$b$}"]2->["\escale{$c$}"]3;
			};
\node () at (1.8,-0.5) {\nscale{$ A_3$}};
\end{scope}
\begin{scope}[xshift=4.2cm,nodes={set=import nodes}]
\foreach \i in {0,...,3} {\coordinate (\i) at (\i*1.1cm,0); } 
\foreach \i in {0,1,3} {\fill[red!20] (\i) circle (0.2cm);}
\foreach \i in {0,1,3} {\node (\i) at (\i) {\nscale{$1$}}; } 
\node (2) at (2) {\nscale{$0$}}; 
\graph {
	(import nodes);
			0->["\escale{$\used$}"]1->["\escale{$\swap$}"]2->["\escale{$\set$}"]3;
			};
\node () at (1.8,-0.5) {\nscale{$ A_3^R$}};
\end{scope}
\begin{scope}[xshift=8.5cm,nodes={set=import nodes}]
\foreach \i in {0,...,3} {\coordinate (\i) at (\i*1.1cm,0); } 
\foreach \i in {0,1,3} {\fill[red!20] (\i) circle (0.2cm);}
\foreach \i in {0,1,3} {\node (\i) at (\i) {\nscale{$1$}}; } 
\node (2) at (2) {\nscale{$0$}}; 
\graph {
	(import nodes);
			0->["\escale{$\nop$}"]1->["\escale{$\swap$}"]2->["\escale{$\set$}"]3;
			};
\node () at (1.8,-0.5) {\nscale{$ A_3^{R'}$}};
\end{scope}
\end{tikzpicture}
\end{minipage}
\caption{%
Left: TS $A_3$, a simple directed path.
If $\tilde{\tau}$ is defined as in Figure~\ref{fig:ex_types}, then $sup(\iota)=1$, $sig(a)=\used$, $sig(b)=\swap$ and $sig(c)=\set$ implicitly defines the $\tilde{\tau}$-region $R=(sup, sig)$ of $A_3$ as follows:
$sup(s_1)=\delta_{\tilde{\tau}}(1,\used)=1$, $sup(s_2)=\delta_{\tilde{\tau}}(1,\swap)=0$ and $sup(s_3)=\delta_{\tilde{\tau}}(0,\set)=1$.
Middle: The image $A_3^R$ of $A_3$ (under R).
One easily verifies that $\delta_{A_3}(s,e)=s'$ implies $\delta_{\tilde{\tau}}(sup(s), sig(e))=sup(s')$, cf. Figure~\ref{fig:ex_types}.
In particular, $R$ is sound. 
For event $b$, the edge defined by $\delta_{A_3}(s_1,b)=s_2$ is mapped into $\delta_{\tilde{\tau}}(1,\swap)=0$ under $R$, i.e., $b$ makes a state change on the path; similar for $sig(c) = \set$. 
$R$ is not normalized, since $sup(s_0)=sup(s_1)$, but $sig(a)=\used \neq \nop$.
Right: The image $A_3^{R'}$ of $A_3$ under the normalized $\tilde{\tau}$-region $R'$ that is similar to $R$ but replaces $\used$ by $\nop$.
}\label{fig:image_of_path}
\end{figure}

\textbf{Boolean Types of Nets~\cite{DBLP:series/txtcs/BadouelBD15}.}
The following notion of Boolean types of nets allows to capture \emph{all} Boolean Petri nets in a \emph{uniform} way.
A \emph{Boolean type of net} $\tau=(\{0,1\},E_\tau,\delta_\tau)$ is a TS such that $E_\tau$ is a subset of the \emph{Boolean interactions}:
$E_\tau \subseteq I = \{\nop, \inp, \out, \res, \set, \swap, \used, \free\}$. 
Each interaction $i \in I$ is a binary partial function $i: \{0,1\} \rightarrow \{0,1\}$ as defined in Figure~\ref{fig:interactions}.
For all $x\in \{0,1\}$ and all $i\in E_\tau$, the transition function of $\tau$ is defined by $\delta_\tau(x,i)=i(x)$.
Notice that a type $\tau$ is completely determined by $E_\tau$.
Hence we often identify $\tau$ with $E_\tau$, cf. Figure~\ref{fig:overview}.
Moreover, since $I$ captures all meaningful Boolean interactions~\cite[p.~617]{DBLP:conf/tamc/TredupR19} and $\tau$ is defined by $E_\tau\subseteq I$, there are 256 Boolean types of nets at all.
For a Boolean type of net $\tau$, we say that its state $1$ is \emph{inside} and $0$ is \emph{outside}.
An interaction $i\in E_\tau$ \emph{exits} if $1\edge{i}0$, \emph{enters} if $0\edge{i}1$, \emph{saves 1} if $1\edge{i}1$ and \emph{saves 0} if $0\edge{i}0$. 
Accordingly, we group interactions together by $\exit=\{\inp,\res,\swap \}$, $\enter=\{\out,\set,\swap\}$, $\saveone=\{\nop,\set,\used\}$, $\savezero=\{\nop,\res,\free\}$ and $\save=\saveone\cup\savezero$.

For a net of type $\tau$ ($\tau$-net), the interactions of $\tau$ determine relations between \emph{places} and \emph{transitions} of the net. 
For instance, if a place $p$ and a transition $t$ are related via $\inp$, then $p$ has to be marked (\emph{true}) to allow $t$ to fire, and becomes unmarked (\emph{false}) after the firing (cf.~Figure~\ref{fig:interactions}). 
Since we are only concerned with state separation, we omit the formal definition of $\tau$-nets and rather refer to, e.g.,~\cite{DBLP:series/txtcs/BadouelBD15} for a comprehensive introduction to the topic.

\textbf{$\tau$-Regions.}
The following notion of $\tau$-regions is the key concept for state separation.
A $\tau$-region $R=(sup, sig)$ of TS $A=(S, E, \delta, \iota)$ consists of the mappings \emph{support} $sup: S \rightarrow \{0,1\}$ and \emph{signature} $sig: E \rightarrow E_\tau$, such that for every edge $s \edge{e} s'$ of $A$, the edge $sup(s) \ledge{sig(e)} sup(s')$ belongs to the type $\tau$;
we also say $sup$ \emph{allows} ($sig$ and thus the region) $R$.
If $P=q_0\edge{e_1}\dots \edge{e_n}q_n$ is a path in $A$, then $P^R=sup(q_0)\ledge{sig(e_1)}\dots\ledge{sig(e_n)}sup(q_n)$ is a path in $\tau$.
We say $P^R$ is the \emph{image} of $P$ (under $R$). 
For region $R$ and path $P$, event $e_i$ with $1 \le i \le n$ is called \emph{state changing} on the path $P^R$, if $sup(q_{i-1}) \neq sup(q_i)$ for $q_{i-1}\edge{e_i} q_i$ in $P$. 
Notice that $R$ is \emph{implicitly} defined by $sup(\iota)$ and $sig$:
Since $A$ is reachable, for every state $s\in S$, there is a path $\iota\edge{e_1}\dots \edge{e_n}s_n$ such that $s=s_n$.
Thus, since $\tau$ is deterministic, we inductively obtain $sup(s_{i+1})$ by $sup(s_{i})\lledge{sig(e_{i+1})}sup(s_{i+1})$ for all $i\in \{0,\dots, n-1\}$ and $s_0 = \iota$.
Hence, we can compute $sup$ and thus $R$ purely from $sup(\iota)$ and $sig$, cf.~Figure~\ref{fig:image_of_path}. 
If $\nop\in\tau$, then a $\tau$-region $R=(sup, sig)$ of a TS $A$ is called \emph{normalized} if $sig(e)=\nop$ for as many events $e$ as $sup$ allows: for all $e\in E$, $sig(e)\not\in \{\used,\free\}$ and if $sig(e)\in \exit\cup\enter$ then there is $s\edge{e}s'\in A$ such that $sup(s)\not=sup(s')$.  

\textbf{$\tau$-State Separation Property.}
A pair $(s, s')$ of distinct states of $A$ defines a \emph{state separation atom} (SSP atom).
A $\tau$-region $R=(sup, sig)$ \emph{solves} $(s,s')$ if $sup(s)\not=sup(s')$.
If $R$ exists, then $(s,s')$ is called \emph{$\tau$-solvable}.
If $s\in S_A$ and, for all $s'\in S_A\setminus\{s\}$, the atom $(s,s')$ is $\tau$-solvable, then $s$ is called $\tau$-solvable.
A TS has the \emph{$\tau$-state separation property} ($\tau$-SSP) if all of its SSP atoms are $\tau$-solvable.
A set $\mathcal{R}$ of $\tau$-regions of $A$ is called $\tau$-\emph{separative} if for each SSP atom of $A$ there is a $\tau$-region $R$ in $\mathcal{R}$ that solves it.
By the next lemma, if $\nop\in \tau$, then $A$ has the $\tau$-SSP if and only if it has a $\tau$-separative set of normalized $\tau$-regions:
\begin{lemma}\label{lem:irrelevance_of_test}
Let $A$ be a TS and $\tau$ be a \nop-equipped Boolean type of nets.
There is a $\tau$-separative set $\mathcal{R}$ of $A$ if and only if there is a $\tau$-separative set of normalized $\tau$-regions of $A$.
\end{lemma}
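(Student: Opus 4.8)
The plan is to prove the two directions separately, observing that only the forward implication requires work. For the backward direction, every normalized $\tau$-region is by definition a $\tau$-region, so a $\tau$-separative set of normalized regions is already a $\tau$-separative set; this gives the "if" part at once. For the forward direction, the guiding idea is that separativity depends \emph{only} on the supports of the regions: an SSP atom $(s,s')$ is solved by a region $R=(sup,sig)$ precisely when $sup(s)\neq sup(s')$. Hence I would transform each region of a given $\tau$-separative set $\mathcal{R}$ into a normalized region with the \emph{same} support, and argue that the collection of these normalized regions is still $\tau$-separative because it solves exactly the same atoms.

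Concretely, given $R=(sup,sig)\in\mathcal{R}$, I would build $R'=(sup,sig')$ by a purely local (per-event) edit of the signature, leaving $sup$ untouched. Call an event $e$ \emph{active} with respect to $sup$ if some edge $s\edge{e}s'\in A$ satisfies $sup(s)\neq sup(s')$. Define $sig'(e)=sig(e)$ when $e$ is active, and $sig'(e)=\nop$ otherwise. The whole construction rests on the dichotomy, read off from Figure~\ref{fig:interactions}, between the state-changing interactions, which are exactly those in $\exit\cup\enter=\{\inp,\out,\res,\set,\swap\}$, and the interactions $\nop,\used,\free$ that never change the Boolean value where they are defined; in particular $\nop$ is total and fixes both $0$ and $1$.

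I would then verify the three required properties. First, $R'$ is a legal $\tau$-region: for active $e$ the edges are literally those of the valid region $R$; for inactive $e$ every edge $s\edge{e}s'$ has $sup(s)=sup(s')$, so the image edge $sup(s)\edge{\nop}sup(s')$ lies in $\tau$ because $\nop$ fixes $0$ and $1$ and $\nop\in\tau$ (here $\tau$ being \nop-equipped is essential). Second, $R'$ is normalized: no $sig'(e)$ lies in $\{\used,\free\}$, since inactive events receive $\nop$ while an active event's $sig(e)$ must change state on some edge and hence belongs to $\exit\cup\enter$, which is disjoint from the state-preserving $\{\used,\free\}$; and whenever $sig'(e)\in\exit\cup\enter$ the event $e$ is active by construction, so a state-changing edge witnesses the normalization condition. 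Third, since $sup$ is unchanged, $R'$ solves precisely the atoms solved by $R$. Applying this normalization to every $R\in\mathcal{R}$ yields a $\tau$-separative set of normalized $\tau$-regions, completing the proof.

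The only genuinely delicate point, and the step I would flag as the main obstacle, is the interaction-table bookkeeping that justifies reassigning an inactive event to $\nop$ without leaving $\tau$. One must be careful precisely with the mixed interactions $\res$ and $\set$, which \emph{save} on one Boolean value but \emph{change} the other: the reassignment to $\nop$ is performed only when such an interaction in fact saves on all occurring edges (i.e., when $e$ is inactive), and it is exactly the totality of $\nop$ together with $\nop\in\tau$ that guarantees the replacement stays inside the type. Everything else is immediate once one recognizes that supports alone determine both solvability of atoms and the legality of the edited region.
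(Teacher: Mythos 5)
Your proof is correct and takes essentially the same route as the paper: both keep the support fixed and replace the signature of every event all of whose edges preserve the support by $\nop$ (legal precisely because $\tau$ is \nop-equipped), noting that atom-solving depends only on $sup$. The paper merely performs this replacement one event at a time and appeals to finiteness of $E_A$, whereas you do all events in one shot and spell out the normalization check explicitly; the two arguments are interchangeable.
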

\begin{proof}
The \emph{if}-direction is trivial.
\emph{Only-if}:
Let $R=(sup, sig)$ be a non-normalized $\tau$-region, i.e., there is $e\in E_A$ such that $s\edge{e}s'\in A$ implies $sup(s)=sup(s')$. 
Since $\tau$ is $\nop$-equipped, $sup(s)\edge{\nop}sup(s')\in \tau$ for all $s\edge{e}s'\in A$. 
Thus, a $\tau$-region $R'=(sup, sig')$ can be constructed from $R$, where $sig'$ is equal to $sig$ except for $sig'(e)=\nop$. 
Since~$E_A$ is finite, a normalized region can be obtained from $R$ by inductive application of this procedure. 
\end{proof}
By the following lemma, $\tau$-SSP and $\tilde{\tau}$-SSP are equivalent if $\tau$ and $\tilde{\tau}$ are isomorphic:
\begin{lemma}[Without proof]
\label{lem:isomorphic_types}
If $\tau$ and $\tilde{\tau}$ are isomorphic types of nets then a TS $A$ has the $\tau$-SSP if and only if $A$ has the $\tilde{\tau}$-SSP.
\end{lemma}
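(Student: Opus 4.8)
The plan is to exhibit an explicit, bijective transport of regions that preserves exactly which SSP atoms get solved. First I would unpack what ``isomorphic types'' means: since a Boolean type $\tau=(\{0,1\},E_\tau,\delta_\tau)$ is itself a transition system on the two states $0$ and $1$, an isomorphism from $\tau$ to $\tilde\tau$ is a pair $(\phi,\psi)$ of bijections, where $\phi\colon\{0,1\}\to\{0,1\}$ is a bijection on states (so either the identity or the swap of $0$ and $1$) and $\psi\colon E_\tau\to E_{\tilde\tau}$ is a bijection on interactions, such that $x\ledge{i}x'\in\tau$ if and only if $\phi(x)\ledge{\psi(i)}\phi(x')\in\tilde\tau$ for all $x,x'\in\{0,1\}$ and $i\in E_\tau$.

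Given such an isomorphism, for any $\tau$-region $R=(sup,sig)$ of $A$ I would define a candidate $\tilde\tau$-region $\tilde R=(\widetilde{sup},\widetilde{sig})$ by setting $\widetilde{sup}(s)=\phi(sup(s))$ for every $s\in S_A$ and $\widetilde{sig}(e)=\psi(sig(e))$ for every $e\in E_A$. The key verification step is that $\tilde R$ is indeed a $\tilde\tau$-region: for every edge $s\edge{e}s'\in A$, validity of $R$ gives $sup(s)\ledge{sig(e)}sup(s')\in\tau$, and applying the defining property of the isomorphism $(\phi,\psi)$ turns this into $\phi(sup(s))\ledge{\psi(sig(e))}\phi(sup(s'))\in\tilde\tau$, i.e.\ $\widetilde{sup}(s)\ledge{\widetilde{sig}(e)}\widetilde{sup}(s')\in\tilde\tau$, as required.

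It remains to check that $\tilde R$ solves precisely the SSP atoms that $R$ does. This is immediate from injectivity of $\phi$: for any atom $(s,s')$ we have $\widetilde{sup}(s)\neq\widetilde{sup}(s')$ exactly when $\phi(sup(s))\neq\phi(sup(s'))$, which, since $\phi$ is a bijection, holds exactly when $sup(s)\neq sup(s')$, i.e.\ exactly when $R$ solves $(s,s')$. Hence $R\mapsto\tilde R$ sends a $\tau$-separative set of $A$ to a $\tilde\tau$-separative set of $A$, so the $\tau$-SSP implies the $\tilde\tau$-SSP. Applying the same construction to the inverse isomorphism $(\phi^{-1},\psi^{-1})$ yields the converse direction, giving the stated equivalence.

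I do not expect a genuine obstacle here; the argument is a routine transport along an isomorphism of the two-state target systems. The only point requiring care is bookkeeping around the state bijection $\phi$: one must not assume $\phi$ fixes $0$ and $1$ (a type isomorphism may swap inside and outside), and the argument must rely only on $\phi$ being injective rather than on any particular values, which is exactly what the ``$sup(s)\neq sup(s')$'' formulation of solvability provides. A secondary detail is that $\psi$ must respect definedness of the partial interaction functions, but this is already built into the transition-system isomorphism condition and so needs only to be noted.
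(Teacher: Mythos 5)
Your proof is correct, and it is exactly the routine transport-of-regions argument the authors had in mind when stating this lemma ``without proof'': compose $sup$ and $sig$ with the state- and event-bijections of the type isomorphism, verify the region condition edge-by-edge, and use injectivity of the state bijection to see that solved SSP atoms are preserved, with the converse via the inverse isomorphism. Nothing is missing.
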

In this paper, we consider the $\tau$-SSP also as decision problem that asks whether a given TS $A$ has the $\tau$-SSP.
The decision problem $\tau$-SSP is in NP:
By definition, $A$ has at most $\lvert S \rvert^2$ SSP atoms.
Hence, a Turing-machine can (non-deterministically) guess a $\tau$-separative set $\mathcal{R}$ such that $\vert \mathcal{R}\vert \leq \vert S\vert^2 $ and (deterministically) check in polynomial time its validity if it exists.

In what follows, some of our NP-completeness results base on polynomial-time reductions of the following decision problem, which is known to be NP-complete~\cite{DBLP:journals/dcg/MooreR01}:

\textbf{Cubic Monotone 1-in-3 3Sat.}~(\textsc{CM 1-in-3 3Sat})
The input is a Boolean formula $\varphi=\{\zeta_0,\dots, \zeta_{m-1}\}$ of negation-free three-clauses $\zeta_i=\{X_{i_0}, X_{i_1}, X_{i_2}\}$, where $i\in \{0,\dots, m-1\}$, with set of variables $X=\bigcup_{i=0}^{m-1}\zeta_i$;  
every variable $v\in X$ occurs in exactly three clauses, implying $\vert X \vert=m$.
The question to decide is whether there is a (one-in-three model) $M\subseteq X$ satisfying $\vert M\cap \zeta_i\vert =1$ for all $i\in \{0,\dots, m-1\}$.
\begin{example}[\textsc{CM 1-in-3 3Sat}]\label{ex:varphi}
The instance $\varphi=\{\zeta_0,\dots, \zeta_5\}$ of \textsc{CM 1-in-3 3Sat} with set of variables $X=\{X_0,\dots, X_5\}$ and clauses $\zeta_0=\{X_0,X_1,X_2\}$, $\zeta_1=\{X_0,X_2,X_3\}$, $\zeta_2=\{X_0,X_1,X_3\}$, $\zeta_3=\{X_2,X_4,X_5\}$, $\zeta_4=\{X_1,X_4,X_5\}$ and $\zeta_5=\{X_3,X_4,X_5\}$ has the one-in-three model $M=\{X_0,X_4\}$.
\end{example}

\section{Deciding the State Separation Property for \nop-equipped Types}\label{sec:nop_equipped}%

In this section, we investigate the computational complexity of \nop-equipped Boolean types of nets.
For technical reasons, we separately consider the types that include neither $\res$ nor $\set$ (\S5 - \S7 in Figure~\ref{fig:overview}) and the ones that have at least one of them (\S1 - \S4~in Figure~\ref{fig:overview}). 

First of all, the fact that $\tau$-SSP is polynomial for the types of~\S7 in Figure~\ref{fig:overview} is implied by the results of \cite{DBLP:conf/stacs/Schmitt96}\cite[p.~619]{DBLP:conf/tamc/TredupR19}. 
Moreover, for the types of Figure~\ref{fig:overview}~\S5 the NP-completeness of $\tau$-SSP has been shown in~\cite{DBLP:conf/apn/TredupRW18} ($\tau=\{\nop,\inp,\out\}$), and in~\cite{DBLP:conf/apn/Tredup19} ($\tau=\{\nop,\inp,\out,\used\}$, there referred to as $1$-bounded P/T-nets).
Thus, in order to complete the complexity characterization for the \nop-equipped types that neither contain \res\ nor \set, it only remains to ascertain the complexity of $\tau$-SSP for the types Figure~\ref{fig:overview}~\S6.
The following Subsection~\ref{sec:setres-free} proves that $\tau$-SSP is NP-complete for these types.

Then, we proceed with the types of~\S1 - \S4 in Figure~\ref{fig:overview}.
The fact that $\tau$-SSP is polynomial for the types of~\S 4 follows from~\cite[p.~619]{DBLP:conf/tamc/TredupR19}. 
The NP-completeness of $\tau$-SSP for the remaining types (\S1 - \S3) will be demonstrated in Subsection~\ref{sec:setres-derivatives}.

\subsection{Complexity of $\tau$-SSP for \nop-equipped Types without $\res$ and $\set$}%
\label{sec:setres-free}

The following theorem summarizes the complexity for the types of~\S5 - \S7 in Figure~\ref{fig:overview}. 
\begin{theorem}\label{the:res_set_free}
Let $\tau$ be a \nop-equipped Boolean type of nets such that $\tau\cap\{\res,\set\}=\emptyset$.
The $\tau$-SSP is NP-complete if $\tau\cap\{\inp,\out\}\not=\emptyset$ and $\swap\not\in\tau$, otherwise it is polynomial.
\end{theorem}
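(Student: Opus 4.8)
The plan is to dispatch the three regimes of Figure~\ref{fig:overview}~\S5--\S7 separately, after recording that membership of $\tau$-SSP in NP is already established in Section~\ref{sec:Preliminaries}. What remains is the polynomial upper bound for the types falling under the ``otherwise'' clause and NP-hardness for the types with $\tau\cap\{\inp,\out\}\not=\emptyset$ and $\swap\notin\tau$. For the polynomial direction I would split according to which way the hypothesis fails. If $\swap\in\tau$, then $\tau$ is one of the types $\{\nop,\swap\}\cup\omega$ with $\omega\subseteq\{\inp,\out,\used,\free\}$ of \S7, and I would simply invoke the polynomiality already proved in~\cite{DBLP:conf/stacs/Schmitt96}\cite[p.~619]{DBLP:conf/tamc/TredupR19}. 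Otherwise $\swap\notin\tau$ and $\tau\cap\{\inp,\out\}=\emptyset$, whence $\tau\subseteq\{\nop,\used,\free\}$. Every such interaction lies in $\save$, so for every $\tau$-region $R=(sup,sig)$ and every edge $s\edge{e}s'$ of $A$ we have $sup(s)=sup(s')$; by reachability $sup$ is constant, no SSP atom is solvable, and $A$ has the $\tau$-SSP if and only if $\lvert S\rvert\le 1$, a polynomial-time check.

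For NP-hardness I would first use Lemma~\ref{lem:isomorphic_types} to shrink the number of cases. Relabelling the two states of a type is an isomorphism that induces the involution $\inp\leftrightarrow\out$, $\used\leftrightarrow\free$, $\res\leftrightarrow\set$, fixing $\nop$ and $\swap$; hence each $\out$-type is isomorphic to an $\inp$-type, and it suffices to treat the types $\{\nop,\inp\}\cup\omega$ with $\omega\subseteq\{\used,\free\}$ together with the self-isomorphic $\{\nop,\inp,\out,\used,\free\}$. The two types of \S5 are already known to be NP-complete~\cite{DBLP:conf/apn/TredupRW18,DBLP:conf/apn/Tredup19}, and via the involution this settles $\{\nop,\inp,\out,\free\}$ of \S6 as well. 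The core of the work is therefore a polynomial-time reduction from \textsc{CM 1-in-3 3Sat}, designed to be uniform (up to minor adaptation) across the remaining \S6 types.

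Given an instance $\varphi=\{\zeta_0,\dots,\zeta_{m-1}\}$ over variables $X$, I would construct a reachable TS $A_\varphi$ whose event set contains one event $X_v$ per variable $v\in X$, arranged so that a single distinguished SSP atom is solvable only by a normalized region whose signature marks exactly the events of a one-in-three model by a state-changing interaction. The feature to exploit is that, with $\swap\notin\tau$ and only $\inp$ available to alter the support, along any directed path the support may drop from $1$ to $0$ but can never climb back. Realizing each clause as a short directed path that traverses its three variable events then forces every such clause-path to contain exactly one state-changing event, which is precisely the constraint $\lvert M\cap\zeta_i\rvert=1$. The additional interactions $\used$ and $\free$ (and, for $\{\nop,\inp,\out,\used,\free\}$, also $\out$) only supply state-preserving or oppositely directed options on events, and I would lay out the gadgets so that these never suffice to separate the distinguished states and never destroy solvability of the remaining, ``easy'' atoms.

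The main obstacle will be exactly this robustness: guaranteeing that the single hard atom faithfully encodes the $1$-in-$3$ condition while every other SSP atom of $A_\varphi$ remains solvable uniformly across all target types, so that $A_\varphi$ has the $\tau$-SSP if and only if $\varphi$ admits a one-in-three model. Checking that $\used$, $\free$, and $\out$ open no unintended route to separating the distinguished states is the delicate step; here Lemma~\ref{lem:irrelevance_of_test} helps, since it lets me restrict attention to normalized regions, eliminating $\used$ and $\free$ from state-preserving events and keeping the required case analysis finite.
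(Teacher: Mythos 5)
Most of your outline coincides with the paper's actual proof: NP membership from the preliminaries; the \S7 types dispatched by citation; Lemma~\ref{lem:isomorphic_types} via the $0\leftrightarrow 1$ relabeling (inducing $\inp\leftrightarrow\out$, $\used\leftrightarrow\free$) to shrink the \S6 case list; a reduction from \textsc{CM 1-in-3 3Sat} whose engine is exactly the observation you state, namely that with only $\nop$ and $\inp$ the support can drop from $1$ to $0$ at most once along a directed path, so a distinguished atom forcing each clause path's image to run from $1$ to $0$ extracts a one-in-three model; and Lemma~\ref{lem:irrelevance_of_test} to absorb $\used$ and $\free$ through normalized regions. Your direct argument for $\tau\subseteq\{\nop,\used,\free\}$ (every interaction lies in $\save$, so $sup$ is constant by reachability and $A$ has the $\tau$-SSP iff $\lvert S\rvert\le 1$) is correct and more self-contained than the paper's citation, and your involution route to $\{\nop,\inp,\out,\free\}$ from the known $\{\nop,\inp,\out,\used\}$ is sound (the paper instead derives it from $\{\nop,\inp,\out\}$ via Lemma~\ref{lem:irrelevance_of_test}; both work).

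The genuine gap is your treatment of $\{\nop,\inp,\out,\used,\free\}$. You propose to fold it into the same ``uniform'' reduction, but the monotonicity feature your clause encoding rests on is destroyed by $\out$: normalization eliminates $\used$ and $\free$, but not $\out$ on state-changing events, and a clause path whose image runs from $1$ to $0$ may then have image $1\edge{\inp}0\edge{\out}1\edge{\inp}0$, selecting \emph{two} variables of that clause, so the one-in-three extraction from the distinguished atom fails for this type. Your appeal to ``laying out the gadgets'' so that $\out$ opens no unintended route is precisely the nontrivial work — the known hardness proof for $\{\nop,\inp,\out\}$ in \cite{DBLP:conf/apn/TredupRW18} is a separate, more involved reduction. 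The repair is cheap and is what the paper does: by Lemma~\ref{lem:irrelevance_of_test}, a TS has a $\{\nop,\inp,\out,\used,\free\}$-separative set iff it has one consisting of normalized regions, and these are $\{\nop,\inp,\out\}$-regions; hence the two SSP problems coincide and NP-hardness transfers directly from the known \S5 result, with no new reduction. With that replacement — keeping your reduction only for its intended target $\{\nop,\inp\}$ and extending by Lemmas~\ref{lem:irrelevance_of_test} and~\ref{lem:isomorphic_types} to $\{\nop,\inp\}\cup\omega$ and $\{\nop,\out\}\cup\omega$ for $\omega\subseteq\{\used,\free\}$ — your plan matches the paper's proof; what remains, as in the paper, is the concrete gadget construction and the explicit separating regions for the converse direction, which your sketch leaves open and the paper supplies in Lemmas~\ref{lem:res_set_free_ssp_implies_model} and~\ref{lem:res_set_free_model_implies_ssp}.
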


As just discussed in Section~\ref{sec:nop_equipped}, to complete the proof of Theorem~\ref{the:res_set_free} it remains to characterize the complexity of $\tau$-SSP for the types of Figure~\ref{fig:overview}~\S6.
Since $\tau$-SSP is NP-complete if $\tau=\{\nop,\inp,\out\}$ \cite{DBLP:conf/apn/TredupRW18}, by Lemma~\ref{lem:irrelevance_of_test}, the $\tau$-SSP is also NP-complete if $\tau=\{\nop,\inp,\out,\free\}$ and $\tau=\{\nop,\inp,\out,\free,\used\}$.

Thus, in what follows, we restrict ourselves to the types $\tau=\{\nop,\inp\}\cup\omega$ and $\tau=\{\nop,\out\}\cup\omega$, where $\omega\subseteq \{\used,\free\}$, and argue that their $\tau$-SSP are NP-complete.
To do so, we let $\tau=\{\nop,\inp\}$ and show the hardness of $\tau$-SSP by a reduction of \textsc{CM 1-in-3 3Sat}.
By Lemma~\ref{lem:irrelevance_of_test}, this also implies the hardness of $\tau\cup\omega$-SSP, where $\omega\subseteq \{\used,\free\}$.
Furthermore, by Lemma~\ref{lem:isomorphic_types}, the latter shows the NP-completeness of $\tau$-SSP if $\tau=\{\nop,\out\}\cup\omega$ and $\omega\subseteq \{\used,\free\}$.
The following paragraph introduces the intuition of our reduction approach.

\textbf{The Roadmap of Reduction.}
Let $\tau=\{\nop,\inp\}$ and $\varphi=\{\zeta_0,\dots, \zeta_{m-1}\}$ be an input of  \textsc{CM 1-in-3 3Sat} with variables $X=\{X_0,\dots, X_{m-1}\}$ and clauses $\zeta_i=\{X_{i_0}, X_{i_1}, X_{i_2}\}$ for all $i\in \{0,\dots, m-1\}$.
To show the NP-completeness of $\tau$-SSP, we reduce a given input $\varphi$ to a TS $A_\varphi$ (i.e., an input of $\tau$-SSP) as follows:
For every clause $\zeta_i=\{X_{i_0}, X_{i_1}, X_{i_2}\}$, the TS $A_\varphi$ has a directed labeled path $P_i$ that represents $\zeta_i$ by using its variables as events:
\begin{center} 
\begin{tikzpicture}[new set = import nodes]
\begin{scope}[nodes={set=import nodes}]%
		\foreach \i in {0,...,4} { \coordinate (\i) at (\i*1.7cm,0) ;}
		\foreach \i in {0,...,3} { \node (\i) at (\i) {\nscale{$t_{i,\i}$}};}
		\node (p) at (-0.75cm, 0) {$P_i=$};
		\graph {
	(import nodes);
			0 ->["\escale{$X_{i_0}$}"]1->["\escale{$X_{i_1}$}"]2->["\escale{$X_{i_2}$}"]3;
		};
\end{scope}
\end{tikzpicture}
\end{center}
We ensure by construction that $A_\varphi$ has an SSP atom $\alpha$ such that if $R=(sup, sig)$ is a $\tau$-region solving $\alpha$, then $sup(t_{i,0})=1$ and $sup(t_{i,3})=0$ for all $i\in \{0,\dots, m-1\}$.
Thus, for all $i\in \{0,\dots, m-1\}$, the path $P_i^R$ is a path from $1$ to $0$ in $\tau$.
First, this obviously implies that there is an event $e\in \{X_{i_0}, X_{i_1}, X_{i_2}\}$ such that $sig(e)=\inp$.
Second, it is easy to see that there is no path in $\tau$ on which $\inp$ occurs twice, cf. Figure~\ref{fig:ex_types}.
The following figure sketches all possibilities of $P_i^R$, i.e., $sig(X_{i_0})=\inp$ and $sig(X_{i_1})=sig(X_{i_2})=\nop$, $sig(X_{i_1})=\inp$ and $sig(X_{i_0})=sig(X_{i_2})=\nop$, and $sig(X_{i_2})=\inp$ and $sig(X_{i_0})=sig(X_{i_1})=\nop$, respectively:
\begin{center} 
\begin{tikzpicture}[new set = import nodes]
\begin{scope}[nodes={set=import nodes}]%
		\foreach \i in {0,...,3} { \coordinate (\i) at (\i*1.1cm,0) ;}
		\foreach \i in {0} {\fill[red!20] (\i) circle (0.2cm);}
		\foreach \i in {0} { \node (\i) at (\i) {\nscale{$1$}};}
		\foreach \i in {1,...,3} { \node (\i) at (\i) {\nscale{$0$}};}
		\graph {
	(import nodes);
			0 ->["\escale{$\inp$}"]1->["\escale{$\nop$}"]2->["\escale{$\nop$}"]3;
		};
\end{scope}
\begin{scope}[xshift=4.3cm,nodes={set=import nodes}]%
		\foreach \i in {0,...,3} { \coordinate (\i) at (\i*1.1cm,0) ;}
		\foreach \i in {0,1} {\fill[red!20] (\i) circle (0.2cm);}
		\foreach \i in {0,1} { \node (\i) at (\i) {\nscale{$1$}};}
		\foreach \i in {2,3} { \node (\i) at (\i) {\nscale{$0$}};}
		\graph {
	(import nodes);
			0 ->["\escale{$\nop$}"]1->["\escale{$\inp$}"]2->["\escale{$\nop$}"]3;
		};
\end{scope}
\begin{scope}[xshift=8.6cm,nodes={set=import nodes}]%
		\foreach \i in {0,...,3} { \coordinate (\i) at (\i*1.1cm,0) ;}
		\foreach \i in {0,1,2} {\fill[red!20] (\i) circle (0.2cm);}
		\foreach \i in {0,1,2} { \node (\i) at (\i) {\nscale{$1$}};}
		\foreach \i in {3} { \node (\i) at (\i) {\nscale{$0$}};}
		\graph {
	(import nodes);
			0 ->["\escale{$\nop$}"]1->["\escale{$\nop$}"]2->["\escale{$\inp$}"]3;
		};
\end{scope}
\end{tikzpicture}
\end{center}
Hence, the event $e$ is unique.
Since this is simultaneously true for all paths $P_0,\dots, P_{m-1}$, the set $M=\{e\in X \mid sig(e)=\inp\}$ selects exactly one variable per clause and thus defines a one-in-three model of $\varphi$.
Altogether, this approach shows that if $A_\varphi$ has the $\tau$-SSP, which implies that $\alpha$ is $\tau$-solvable, then $\varphi$ has a one-in-three model.

Conversely, our construction ensures that if $\varphi$ has a one-in-three model, then $\alpha$ and the other separation atoms of $A_\varphi$ are $\tau$-solvable, that is, $A_\varphi$ has the $\tau$-SSP.

\textbf{The Reduction of $A_\varphi$ for $\tau=\{\nop,\inp\}$.}
In the following, we introduce the announced TS $A_\varphi$, cf. Figure~\ref{fig:reduction_example}.
The initial state of $A_\varphi$ is $t_{0,0}$.
First of all, the TS $A_\varphi$ has the following path $P$ that provides the announced SSP atom $\alpha=(t_{m,0}, t_{m+1,0})$:
\begin{center} 
\begin{tikzpicture}[new set = import nodes]
\begin{scope}[nodes={set=import nodes}]%
		\foreach \i in {0,...,7} { \coordinate (\i) at (\i*1.65cm,0) ;}
		\node (0) at (0) {\nscale{$t_{0,0}$}};
		\node (1) at (1) {$\dots$};
		\node (2) at (2) {\nscale{$t_{i,0}$}};
		\node (3) at (3) {$\dots$};
		\node (4) at (4) {\nscale{$t_{m-1,0}$}};
		\node (5) at (5) {\nscale{$t_{m,0}$}};
		\node (6) at (6) {\nscale{$t_{m+1,0}$}};
		\node (7) at (7) {\nscale{$\top$}};
		\graph {
	(import nodes);
			0 ->["\escale{$w_0$}"]1->["\escale{$w_{i-1}$}"]2->["\escale{$w_i$}"]3->["\escale{$w_{m-2}$}"]4->["\escale{$w_{m-1}$}"]5->["\escale{$k$}"]6->["\escale{$v$}"]7;
		};
\end{scope}
\end{tikzpicture}
\end{center}
Moreover, for every $i\in \{0,\dots, m-1\}$, the TS $A_\varphi$ has the following path $T_i$ that uses the variables of $\zeta_i=\{X_{i_0}, X_{i_1}, X_{i_2}\}$ as events and provides the subpath $P_i=t_{i,0}\edge{X_{i_0}}\dots\edge{X_{i_2}}t_{i,3}$:
\begin{center} 
\begin{tikzpicture}[new set = import nodes]
\begin{scope}[nodes={set=import nodes}]%
		\foreach \i in {0,...,4} { \coordinate (\i) at (\i*1.7cm,0) ;}
		\foreach \i in {0,...,3} { \node (\i) at (\i) {\nscale{$t_{i,\i}$}};}
		\node (4) at (4) {\nscale{$\top$}};
		\graph {
	(import nodes);
			0 ->["\escale{$X_{i_0}$}"]1->["\escale{$X_{i_1}$}"]2->["\escale{$X_{i_2}$}"]3->["\escale{$u_i$}"]4;
		};
\end{scope}
\end{tikzpicture}
\end{center}
Finally, the TS $A_\varphi$ has, for all $i\in \{0,\dots, m-1\}$, the following path $G_i$:
\begin{center} 
\begin{tikzpicture}[new set = import nodes]
\begin{scope}[nodes={set=import nodes}]%
		\foreach \i in {0,...,3} { \coordinate (\i) at (\i*1.7cm,0) ;}
		\foreach \i in {1,...,3} {\pgfmathparse{int(\i-1)} \node (\i) at (\i) {\nscale{$g_{i,\pgfmathresult}$}};}
		\node (0) at (0) {\nscale{$t_{0,0}$}};
		\graph {
	(import nodes);
			0 ->["\escale{$y_i$}"]1->["\escale{$u_i$}"]2->["\escale{$k$}"]3;
		};
\end{scope}
\end{tikzpicture}
\end{center}
\begin{figure}[t!]
\begin{center} 
\begin{tikzpicture}[new set = import nodes]

\begin{scope}[nodes={set=import nodes}]
		\foreach \i in {0,...,4} { \coordinate (\i) at (\i*1.4cm,0) ;}
		\foreach \i in {0} {\fill[red!20] (\i) circle (0.35cm);}
		\foreach \i in {0} { \node (t0\i) at (\i) {\nscale{$t_{0,\i}$}};}
		\foreach \i in {1,...,3} { \node (\i) at (\i) {\nscale{$t_{0,\i}$}};}
		\node (top) at (4*1.5cm, 3*1.2cm) {\nscale{$\top$}};
		\graph {
	(import nodes);
			t00 ->["\escale{$X_0$}"]1->["\escale{$X_1$}"]2->["\escale{$X_2$}"]3->[swap, bend right=25,"\escale{$u_0$}"]top;
		};
\end{scope}
\begin{scope}[yshift=1.2cm, nodes={set=import nodes}]
		\foreach \i in {0,...,3} { \coordinate (\i) at (\i*1.4cm,0) ;}
		\foreach \i in {0} {\fill[red!20] (\i) circle (0.35cm);}
		\foreach \i in {0} { \node (t1\i) at (\i) {\nscale{$t_{1,\i}$}};}
		\foreach \i in {1,...,3} { \node (\i) at (\i) {\nscale{$t_{1,\i}$}};}
		\graph {
	(import nodes);
			t10 ->["\escale{$X_0$}"]1->["\escale{$X_2$}"]2->["\escale{$X_3$}"]3->[swap, bend right=10,"\escale{$u_1$}"]top;
		};
\end{scope}
\begin{scope}[yshift=2.4cm, nodes={set=import nodes}]
		\foreach \i in {0,...,3} { \coordinate (\i) at (\i*1.4cm,0) ;}
		\foreach \i in {0} {\fill[red!20] (\i) circle (0.35cm);}
		\foreach \i in {0} { \node (t2\i) at (\i) {\nscale{$t_{2,\i}$}};}
		\foreach \i in {1,...,3} { \node (\i) at (\i) {\nscale{$t_{2,\i}$}};}
		\graph {
	(import nodes);
			t20 ->["\escale{$X_0$}"]1->["\escale{$X_1$}"]2->["\escale{$X_3$}"]3->["\escale{$u_2$}"]top;
		};
\end{scope}
\begin{scope}[yshift=3.6cm, nodes={set=import nodes}]
		\foreach \i in {0,...,3} { \coordinate (\i) at (\i*1.4cm,0) ;}
		\foreach \i in {0,1} {\fill[red!20] (\i) circle (0.35cm);}
		\foreach \i in {0} { \node (t3\i) at (\i) {\nscale{$t_{3,\i}$}};}
		\foreach \i in {1,...,3} { \node (\i) at (\i) {\nscale{$t_{3,\i}$}};}
		\graph {
	(import nodes);
			t30 ->["\escale{$X_2$}"]1->["\escale{$X_4$}"]2->["\escale{$X_5$}"]3->["\escale{$u_3$}"]top;
		};
\end{scope}
\begin{scope}[yshift=4.8cm, nodes={set=import nodes}]
		\foreach \i in {0,...,3} { \coordinate (\i) at (\i*1.4cm,0) ;}
		\foreach \i in {0,1} {\fill[red!20] (\i) circle (0.35cm);}
		\foreach \i in {0} { \node (t4\i) at (\i) {\nscale{$t_{4,\i}$}};}
		\foreach \i in {1,...,3} { \node (\i) at (\i) {\nscale{$t_{4,\i}$}};}
		\graph {
	(import nodes);
			t40 ->["\escale{$X_1$}"]1->["\escale{$X_4$}"]2->["\escale{$X_5$}"]3->["\escale{$u_4$}"]top;
		};
\end{scope}
\begin{scope}[yshift=6cm, nodes={set=import nodes}]
		\foreach \i in {0,...,3} { \coordinate (\i) at (\i*1.4cm,0) ;}
		\foreach \i in {0,1} {\fill[red!20] (\i) circle (0.35cm);}
		\foreach \i in {0} { \node (t5\i) at (\i) {\nscale{$t_{5,\i}$}};}
		\foreach \i in {1,...,3} { \node (\i) at (\i) {\nscale{$t_{5,\i}$}};}
		\graph {
	(import nodes);
			t50 ->["\escale{$X_3$}"]1->["\escale{$X_4$}"]2->["\escale{$X_5$}"]3->[bend left =10,"\escale{$u_5$}"]top;
		};
\end{scope}
\begin{scope}[yshift=7.2cm, nodes={set=import nodes}]%
		\foreach \i in {0,1} { \coordinate (\i) at (\i*4cm,0) ;}
		\foreach \i in {0} {\fill[red!20] (\i) circle (0.35cm);}
		\node (t60) at (0) {\nscale{$t_{6,0}$}};
		\node (1) at (1) {\nscale{$t_{7,0}$}};
		\graph {
	(import nodes);
			t60 ->["\escale{$k$}"]1->[bend left =40,"\escale{$v$}"]top;
		};
\end{scope}
\begin{scope}[yshift=7.2cm, nodes={set=import nodes}]%
		\graph {
	(import nodes);
			t00 ->["\escale{$w_0$}"]t10 ->["\escale{$w_1$}"]t20->["\escale{$w_2$}"]t30 ->["\escale{$w_3$}"]t40 ->["\escale{$w_4$}"]t50 ->["\escale{$w_5$}"]t60 ;
		};
\end{scope}
\begin{scope}[nodes={set=import nodes}]
		\foreach \i in {0,...,2} { \coordinate (\i) at (-\i*1.2cm-2.5cm,0) ;}
		\foreach \i in {0,1} {\fill[red!20] (\i) circle (0.35cm);}
		\foreach \i in {0} { \node (g0\i) at (\i) {\nscale{$g_{0,\i}$}};}
		\foreach \i in {1,2} { \node (\i) at (\i) {\nscale{$g_{0,\i}$}};}
		\graph {
	(import nodes);
			g00 ->[swap, "\escale{$u_0$}"]1->[swap, "\escale{$k$}"]2;
		};
\end{scope}
\begin{scope}[yshift=1.2cm,nodes={set=import nodes}]
		\foreach \i in {0,...,2} { \coordinate (\i) at (-\i*1.2cm-2.5cm,0) ;}
		\foreach \i in {0,1} {\fill[red!20] (\i) circle (0.35cm);}
		\foreach \i in {0} { \node (g1\i) at (\i) {\nscale{$g_{1,\i}$}};}
		\foreach \i in {1,2} { \node (\i) at (\i) {\nscale{$g_{1,\i}$}};}
		\graph {
	(import nodes);
			g10 ->[swap, "\escale{$u_1$}"]1->[swap, "\escale{$k$}"]2;
		};
\end{scope}
\begin{scope}[yshift=2.4cm,nodes={set=import nodes}]
		\foreach \i in {0,...,2} { \coordinate (\i) at (-\i*1.2cm-2.5cm,0) ;}
		\foreach \i in {0,1} {\fill[red!20] (\i) circle (0.35cm);}
		\foreach \i in {0} { \node (g2\i) at (\i) {\nscale{$g_{2,\i}$}};}
		\foreach \i in {1,2} { \node (\i) at (\i) {\nscale{$g_{2,\i}$}};}
		\graph {
	(import nodes);
			g20 ->[swap, "\escale{$u_2$}"]1->[swap, "\escale{$k$}"]2;
		};
\end{scope}
\begin{scope}[yshift=3.6cm,nodes={set=import nodes}]
		\foreach \i in {0,...,2} { \coordinate (\i) at (-\i*1.2cm-2.5cm,0) ;}
		\foreach \i in {0,1} {\fill[red!20] (\i) circle (0.35cm);}
		\foreach \i in {0} { \node (g3\i) at (\i) {\nscale{$g_{3,\i}$}};}
		\foreach \i in {1,2} { \node (\i) at (\i) {\nscale{$g_{3,\i}$}};}
		\graph {
	(import nodes);
			g30 ->[swap, "\escale{$u_3$}"]1->[swap, "\escale{$k$}"]2;
		};
\end{scope}
\begin{scope}[yshift=4.8cm,nodes={set=import nodes}]
		\foreach \i in {0,...,2} { \coordinate (\i) at (-\i*1.2cm-2.5cm,0) ;}
		\foreach \i in {0,1} {\fill[red!20] (\i) circle (0.35cm);}
		\foreach \i in {0} { \node (g4\i) at (\i) {\nscale{$g_{4,\i}$}};}
		\foreach \i in {1,2} { \node (\i) at (\i) {\nscale{$g_{4,\i}$}};}
		\graph {
	(import nodes);
			g40 ->[swap, "\escale{$u_4$}"]1->[swap, "\escale{$k$}"]2;
		};
\end{scope}
\begin{scope}[yshift=6cm,nodes={set=import nodes}]
		\foreach \i in {0,...,2} { \coordinate (\i) at (-\i*1.2cm-2.5cm,0) ;}
		\foreach \i in {0,1} {\fill[red!20] (\i) circle (0.35cm);}
		\foreach \i in {0} { \node (g5\i) at (\i) {\nscale{$g_{5,\i}$}};}
		\foreach \i in {1,2} { \node (\i) at (\i) {\nscale{$g_{5,\i}$}};}
		\graph {
	(import nodes);
			g50 ->[swap, "\escale{$u_5$}"]1->[swap, "\escale{$k$}"]2;
		};
\end{scope}
\begin{scope}[yshift=6cm,nodes={set=import nodes}]
		
		\graph {
	(import nodes);
			t00 ->[pos=0.85,"\escale{$y_0$}"]g00;
		};
\end{scope}
\coordinate (c0) at (-1.7,0);
\coordinate (c01) at (-1.7,1.2);
\draw[->] (t00)--(c0)--(c01)--(g10) node [pos=0.4, above] {\escale{$y_1$}};
\coordinate (c1) at (-1.5,0);
\coordinate (c11) at (-1.5,2.4);
\draw[->] (t00)--(c1)--(c11)--(g20) node [pos=0.4, above] {\escale{$y_2$}};
\coordinate (c2) at (-1.3,0);
\coordinate (c21) at (-1.3,3.6);
\draw[->] (t00)--(c2)--(c21)--(g30) node [pos=0.4, above] {\escale{$y_3$}};
\coordinate (c3) at (-1.1,0);
\coordinate (c31) at (-1.1,4.8);
\draw[->] (t00)--(c3)--(c31)--(g40) node [pos=0.4, above] {\escale{$y_4$}};
\coordinate (c4) at (-0.9,0);
\coordinate (c41) at (-0.9,6);
\draw[->] (t00)--(c4)--(c41)--(g50) node [pos=0.4, above] {\escale{$y_5$}};

\end{tikzpicture}
\end{center}
\caption{The TS $A_\varphi$ originating from the input $\varphi$ of Example~\ref{ex:varphi}, which has the one-in-three model $M=\{X_0,X_4\}$.
The colored area sketches the region $R_M$ of Lemma~\ref{lem:res_set_free_model_implies_ssp} that solves $\alpha=(t_{6,0}, t_{7,0})$.}\label{fig:reduction_example}
\end{figure}
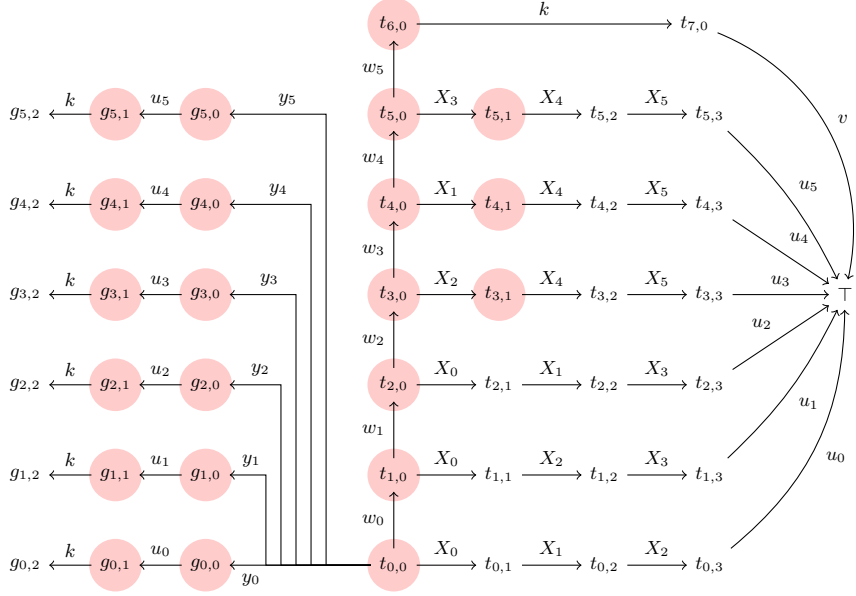
Notice that the paths $P$ and $T_0,\dots, T_{m-1}$ have the same \enquote{final}-state $\bot$.
Obviously, the size of $A_\varphi$ is polynomial in the size of $\varphi$.
The following Lemma~\ref{lem:res_set_free_ssp_implies_model} and Lemma~\ref{lem:res_set_free_model_implies_ssp} prove the validity of our reduction and thus complete the proof of Theorem~\ref{the:res_set_free}.
\begin{lemma}\label{lem:res_set_free_ssp_implies_model}
Let $\tau=\{\nop,\inp\}$.
If $A_\varphi$ has the $\tau$-SSP, then $\varphi$ has a one-in-three model.
\end{lemma}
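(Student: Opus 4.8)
The plan is to assume $A_\varphi$ has the $\tau$-SSP and to extract a one-in-three model from any $\tau$-region $R=(sup,sig)$ that solves the designated atom $\alpha=(t_{m,0},t_{m+1,0})$. Everything rests on the rigidity of $\tau=\{\nop,\inp\}$, whose only edges are $0\edge{\nop}0$, $1\edge{\nop}1$ and $1\edge{\inp}0$. From this I read off three structural facts to be used repeatedly: (i) the only state-changing edge is $1\edge{\inp}0$, so any $A_\varphi$-edge whose endpoints receive different support must be signed $\inp$, with source support $1$ and target support $0$; (ii) the only edge entering $1$ is $1\edge{\nop}1$, hence any edge whose target has support $1$ is signed $\nop$ and has source support $1$; (iii) dually, the only edge leaving $0$ is $0\edge{\nop}0$.

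First I would fix such an $R$, so that $sup(t_{m,0})\neq sup(t_{m+1,0})$. Applying (i) to the edge $t_{m,0}\edge{k}t_{m+1,0}$ forces $sup(t_{m,0})=1$, $sup(t_{m+1,0})=0$ and $sig(k)=\inp$; no case distinction is needed, since $\inp$ runs only from $1$ to $0$. Along the path $P$, a backward induction starting from $sup(t_{m,0})=1$ and applying (ii) to each edge $t_{i,0}\edge{w_i}t_{i+1,0}$ yields $sup(t_{i,0})=1$ for all $i\le m$. From the edge $t_{m+1,0}\edge{v}\top$ together with (iii) I obtain $sup(\top)=0$.

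The second propagation pins down the bottom of every clause path. For each $i$, the edge $g_{i,1}\edge{k}g_{i,2}$ with $sig(k)=\inp$ forces $sup(g_{i,1})=1$ by (i); the edge $g_{i,0}\edge{u_i}g_{i,1}$ then has target support $1$, so (ii) gives $sig(u_i)=\nop$. Feeding this back into the edge $t_{i,3}\edge{u_i}\top$ of $T_i$, now a $\nop$-edge ending at the common final state $\top$, yields $sup(t_{i,3})=sup(\top)=0$. Hence $sup(t_{i,0})=1$ and $sup(t_{i,3})=0$ for every clause index $i$.

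Finally I would conclude as in the roadmap: the image $P_i^R$ of $P_i=t_{i,0}\edge{X_{i_0}}t_{i,1}\edge{X_{i_1}}t_{i,2}\edge{X_{i_2}}t_{i,3}$ is a path from $1$ to $0$ in $\tau$. Since the only state-changing move is $\inp$ and $\inp$ is disabled at $0$, exactly one of the three events $X_{i_0},X_{i_1},X_{i_2}$ is signed $\inp$, the other two being $\nop$. As $sig$ assigns a single value to each variable-event across all clause paths, the set $M=\{e\in X\mid sig(e)=\inp\}$ meets every clause in exactly one variable and is therefore a one-in-three model. I expect the main obstacle to be the support-propagation bookkeeping: all quantitative content has to be forced from the single solved atom $\alpha$, and this works only because the gadget paths $G_i$ and the shared final state $\top$ chain the local $\tau$-constraints together; one must check that this chaining is airtight, in particular that $sig(u_i)=\nop$ genuinely transports $sup(\top)=0$ down to each $t_{i,3}$.
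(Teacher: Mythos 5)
Your proposal is correct and follows essentially the same route as the paper's proof: solving $\alpha$ forces $sig(k)=\inp$ with $sup(t_{m,0})=1$ and $sup(t_{m+1,0})=0$, support propagation along $P$ and the gadgets $G_i$ yields $sup(t_{i,0})=1$, $sup(\top)=0$ and $sig(u_i)=\nop$, hence $sup(t_{i,3})=0$, and the images $P_i^R$ from $1$ to $0$ in $\tau$ give exactly one $\inp$-labeled variable per clause, defining the model $M$. The only cosmetic difference is that the paper packages the propagation as one statement about any path through a $k$-labeled edge, whereas you apply the local facts about $\tau$'s edges step by step; the mathematical content is identical.
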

\begin{proof}
Let $R=(sup, sig)$ be a $\tau$-region that solves $\alpha$, that is, $sup(t_{m,0})\not=sup(t_{m+1,0})$.
By $t_{m,0}\edge{k}t_{m+1,0}$ and $\tau=\{\nop,\inp\}$, this implies $sig(k)=\inp$, $sup(t_{m,0})=1$ and $sup(t_{m+1,0})=0$.
If $s_0\edge{e_0}\dots\edge{e_{i}}s_i\edge{k}s_{i+1}\edge{e_{i+2}}\dots\edge{e_n}s_n$ is a path in $A_\varphi$, then, by $sig(k)=\inp$, we get $sup(s_j)=1$ for all $j\in \{0,\dots, i\}$, $sup(s_j)=0$ for all $j\in \{i+1,\dots, n\}$ and $sig(e_j)=\nop$ for all $j\in \{1,\dots, i, i+2,\dots, n\}$.
This implies $sup(\top)=0$, $sig(v)=\nop$ as well as $sig(u_i)=\nop$ and $sup(t_{i,0})=1$ for all $i\in \{0,\dots, m-1\}$.
Furthermore, by $sup(\top)=0$ and $sig(u_i)=\nop$, we get $sup(t_{i,3})=0$ for all $i\in \{0,\dots, m-1\}$.
Hence, for all $i\in \{0,\dots, m-1\}$, the image $P_i^R$ of $P_i$ is a path from $1$ to $0$ in $\tau$.
Hence, as just discussed above, $M=\{e\in X\mid sig(e)=\inp\}$ selects exactly one variable per clause and defines a one-in-three model of $\varphi$.
\end{proof}

\begin{lemma}\label{lem:res_set_free_model_implies_ssp}
Let $\tau=\{\nop,\inp\}$.
If $\varphi$ has a one-in-three model, then $A_\varphi$ has the $\tau$-SSP.
\end{lemma}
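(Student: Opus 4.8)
The plan is to establish the statement by exhibiting a $\tau$-separative set of $A_\varphi$: a family $\mathcal{R}$ of $\tau$-regions such that every SSP atom $(s,s')$ is solved by some region in $\mathcal{R}$. By the definition of the $\tau$-SSP, producing such an $\mathcal{R}$ proves that $A_\varphi$ has the $\tau$-SSP. Before constructing any region I would record the structural fact that drives everything: $A_\varphi$ is a DAG in which every state except $\top$ is reached from $\iota=t_{0,0}$ by a \emph{unique} directed path. Consequently, fixing $sup(\iota)=1$ (the alternative $sup(\iota)=0$ forces $sup\equiv 0$ and is useless for $\tau=\{\nop,\inp\}$), a signature $sig\colon E\to\{\nop,\inp\}$ determines a well-defined valid region if and only if (i) every $\inp$-labelled edge emanates from a support-$1$ state, and (ii) the two kinds of edges entering the unique join $\top$, namely $t_{m+1,0}\edge{v}\top$ and $t_{i,3}\edge{u_i}\top$, all yield the same value $sup(\top)$. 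Condition (ii) is decisive: since $\tau$ cannot raise a support from $0$ back to $1$, any $\inp$ on a path to $\top$ forces $sup(\top)=0$ and hence forces \emph{every} path to $\top$ to drop.

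For the distinguished atom $\alpha=(t_{m,0},t_{m+1,0})$ I use the model $M$. Put $sig(k)=\inp$, $sig(X_j)=\inp$ for each variable $X_j\in M$, and $sig(e)=\nop$ for all remaining events; call this region $R_M$. Two observations make $R_M$ valid and solve $\alpha$. First, $sig(k)=\inp$ together with the gadget path $g_{i,0}\edge{u_i}g_{i,1}\edge{k}g_{i,2}$ forces $sup(g_{i,1})=1$, hence $sig(u_i)=\nop$; so the clause paths cannot drop via $u_i$ and must drop inside $P_i$. Second, because $M$ is a one-in-three model, each clause $\zeta_i$ contains exactly one $\inp$-variable $X_{i_{c_i}}$, and every variable preceding it in $P_i$ is $\nop$; thus $X_{i_{c_i}}$ fires at a support-$1$ state in \emph{every} clause containing it, giving $sup(t_{i,3})=0$ and a consistent $sup(\top)=0$. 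Along the spine all $w_j,v$ are $\nop$, so $sup(t_{m,0})=1\neq 0=sup(t_{m+1,0})$, and $\alpha$ is solved.

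The remaining atoms are solved by auxiliary regions, none of which needs $M$. To separate spine states I use, for $a\in\{0,\dots,m-1\}$, the region $Q_a$ with $sig(w_a)=\inp$, $sig(u_i)=\inp$ for $i\le a$, and $\nop$ elsewhere; it yields $sup(t_{j,0})=1$ for $j\le a$ and $0$ for $j>a$, while every clause-path still drops (the clauses $i\le a$ via $u_i$, the clauses $i>a$ being already at $0$). To separate the interior states of a single clause I use, for each $i$ and each position $p\in\{0,1,2\}$, a region $R_{i,p}$ that drops clause $i$ exactly at its $p$-th variable $X_{i_p}$: set $sig(X_{i_p})=\inp$, $sig(w_{m-1})=\inp$ for the spine, and $sig(u_{i'})=\inp$ for every clause $\zeta_{i'}$ \emph{not} containing $X_{i_p}$ (the clauses that do contain it drop through $X_{i_p}$ itself). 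Finally, a region $Y_i$ with $sig(y_i)=\inp$ and all else $\nop$ isolates the gadget states $g_{i,0},g_{i,1},g_{i,2}$ (they become $0$ while $sup(\top)=1$ and every other state is $1$), and a region $R_\top$ with $sig(v)=\inp$ and $sig(u_i)=\inp$ for all $i$ separates $\top$ from $t_{m+1,0}$. I would then verify the validity of each region through criteria (i)--(ii) and check, atom by atom, that $R_M$ together with the $Q_a$, $R_{i,p}$, $Y_i$ and $R_\top$ separates every pair of distinct states.

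The main obstacle is constructing these auxiliary regions so that they simultaneously satisfy (i) and (ii), because the events $k$ and $u_i$ are shared between the spine/clause paths and the gadgets, and each variable is shared among three clauses. For instance, one cannot naively drop the spine via $sig(k)=\inp$ inside $R_{i,p}$: in a gadget whose $g_{i',1}$ has already been pushed to $0$, the edge $g_{i',1}\edge{k}g_{i',2}$ would start an $\inp$ at support $0$ and violate (i); this is exactly why $R_{i,p}$ drops the spine at $w_{m-1}$ instead. Checking that, for every target separation, the variable/$u_i$/spine drops can be chosen to keep each $\inp$-edge starting at support $1$ while still forcing a consistent $sup(\top)=0$ is the delicate part of the argument. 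The one-in-three property of $M$ is precisely what guarantees this can be done for the single atom $\alpha$, whereas the acyclic, almost-tree shape of $A_\varphi$ makes it routine for all the remaining atoms.
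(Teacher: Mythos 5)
Your proposal is correct and follows essentially the same route as the paper's proof: you exhibit an explicit $\tau$-separative set in which the model-based region (your $R_M$, with $sig(k)=\inp$ and $sig(e)=\inp$ exactly for $e\in M$) is literally the paper's $R_M$, and your auxiliary regions $Q_a$, $R_{i,p}$, $Y_i$, $R_\top$ are minor variants of the paper's $R^T_i$, $R^X_{i_0}$, $R_1$/$R^G_i$ and $R_2$ (e.g., dropping the spine at $w_{m-1}$ instead of via $v$, and using one $y_i$ at a time rather than suffixes of $y$-events). Checking atom by atom, your family does cover all SSP atoms, so the deviations from the paper are purely bookkeeping.
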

\begin{proof}
Let $M$ be a one-in-three model of $\varphi$.

The following region $R_1=(sup, sig)$ solves $(s,s')$ for all $s\in \bigcup_{i=0}^{m-1} S(P_i)$ and all $s'\in \bigcup_{i=0}^{m-1}S(G_i)$, where $s'\not=t_{0,0}$:
$sup(t_{0,0})=1$;
for all $e\in E_{A_\varphi}$, if $e\in \{y_0,\dots, y_{m-1}\}$, then $sig(e)=\inp$, otherwise $sig(e)=\nop$.
Note Section~\ref{sec:res_set_free_model_implies_ssp}, Figure~\ref{fig:R_1} for an example of $R_1$.

Let $i\in \{0,\dots, m-1\}$ be arbitrary but fixed.
The following region $R^T_i=(sup, sig)$ solves $(s,s')$ for all $s\in \{t_{i,0}, \dots, t_{i,3}\}$ and all $s'\in \bigcup_{j=i+1}^{m-1}S(T_j)\cup\{t_{m,0}, t_{m+1,0}\}$:
$sup(t_{0,0})=1$;
for all $e\in E_{A_\varphi}$, if $e\in\{w_i\}\cup\{u_0,\dots, u_i\}\cup \{y_{i+1},\dots, y_{m-1}\}$, then $sig(e)=\inp$, otherwise $sig(e)=\nop$.
Note Section~\ref{sec:res_set_free_model_implies_ssp}, Figure~\ref{fig:RT_1} for an example of $R^T_1$.

The following region $R_2=(sup, sig)$ solves $(t_{m+1},\top)$ and $(g_{i,0}, g_{i,1})$ and $(g_{i,0}, g_{i,2})$ for all $i\in \{0,\dots, m-1\}$:
$sup(t_{0,0})=1$;
for all $e\in E_{A_\varphi}$, if $e\in \{v\}\cup \{u_0,\dots, u_{m-1}\}$, then $sig(e)=\inp$, otherwise $sig(e)=\nop$.
Note Section~\ref{sec:res_set_free_model_implies_ssp}, Figure~\ref{fig:R_2} for an example of $R_2$.

The following region $R_M=(sup, sig )$ uses the one-in-three model $M$ of $\varphi$ and solves $\alpha$ as well as $(g_{i,1}, g_{i,2})$ for all $i\in \{0,\dots, m-1\}$:
$sup(t_{0,0})=1$;
for all $e\in E_{A_\varphi}$, if $e\in \{k\}\cup M$, then $sig(e)=\inp$, otherwise $sig(e)=\nop$.
Note Figure~\ref{fig:reduction_example} for an example of $R_M$.

Let $i\in \{0,\dots, m-2\}$ be arbitrary but fixed.
The following region $R^G_i=(sup, sig)$ solves $(s, s')$ for all $s\in \{g_{i,0}, g_{i,1}, g_{i,2}\}$ and all $s'\in \bigcup_{j=i+1}^{m-1}S(G_j)$, where $s'\not=t_{0,0}$:
$sup(t_{0,0})=1$;
for all $e\in E_{A_\varphi}$, if $e\in \{y_{i+1}, \dots, y_{m-1}\}$, then $sig(e)=\inp$, otherwise $sig(e)=\nop$.
Note Section~\ref{sec:res_set_free_model_implies_ssp}, Figure~\ref{fig:RG_1} for an example of $R^G_1$.

By the arbitrariness of $i$ for $R^T_i$ and $R^G_i$, it remains to show that $t_{i,0},\dots, t_{i,3}$ are pairwise separable for all $i\in \{0,\dots, m-1\}$.
Let $i\in \{0,\dots, m-1\}$ be arbitrary but fixed.
We present only a region $R^X_{i_0}=(sup, sig)$ that solves $(t_{i,0}, s)$ for all $s\in \{t_{i,1}, t_{i,2}, t_{i,3}\}$.
It is then easy to see that the remaining atoms are similarly solvable. 
Let $j\not=\ell \in \{0,\dots, m-1\}\setminus\{i\} $ select the other two clauses of $\varphi$ that contain $X_{i_0}$, that is, $X_{i_0}\in \zeta_j\cap\zeta_\ell$.
$R^X_{i_0}=(sup, sig)$ is defined as follows:
$sup(t_{0,0})=1$;
for all $e\in E_{A_\varphi}$, if $e\in \{X_{i_0}\}\cup \{v\}\cup\{u_0,\dots,u_{m-1}\}\setminus\{u_i,u_j,u_\ell\}$, then $sig(e)=\inp$, otherwise $sig(e)=\nop$.
Note Section~\ref{sec:res_set_free_model_implies_ssp}, Figure~\ref{fig:RX_4_0} for an example of $R^X_{4,0}$.

Similarly, one gets regions where $X_{i_1}$ or $X_{i_2}$ has \inp-signature.
These regions solve the remaining atoms of $S(T_i)\setminus\{\top\}$.
Since $i$ was arbitrary, this completes the proof.
\end{proof}

\subsection{Complexity of $\tau$-SSP for \nop-equipped Types with $\res$ or $\set$.}%
\label{sec:setres-derivatives}

The next theorem states that $\tau$-SSP is NP-complete for the \nop-equipped types that have not yet been considered, cf~Figure~\ref{fig:overview}~\S1 - \S3.
Moreover, it summarizes the complexity of $\tau$-SSP for all types of~Figure~\ref{fig:overview}~\S1 - \S4: 
\begin{theorem}\label{the:res_set_derivatives}
Let $\tau$ and $\tilde{\tau}$ be Boolean type of nets and $\{\nop,\res\}\subseteq \tau$ and $\{\nop,\set\}\subseteq \tilde{\tau}$.
\begin{enumerate}
\item\label{the:res_set_derivatives_res}
The $\tau$-SSP is NP-complete if $\tau\cap \enter\not=\emptyset$, otherwise it is polynomial.
\item\label{the:res_set_derivatives_set}
The $\tilde{\tau}$-SSP is NP-complete if $\tilde{\tau}\cap \exit\not=\emptyset$, otherwise it is polynomial.
\end{enumerate}
\end{theorem}

In this section we complete the proof of Theorem~\ref{the:res_set_derivatives} as follows.
Firstly, we let $\tau_0=\{\nop,\inp,\out\}$ and, by a reduction of $\tau_0$-SSP, we show that the $\tau$-SSP is NP-complete if $\tau=\{\nop,\out,\res\}\cup \omega$ and $\omega\subseteq\{\inp,\used,\free\}$ or if $\{\nop,\res,\set\}\subseteq \tau$.
By Lemma~\ref{lem:isomorphic_types}, the former also implies the NP-completeness of $\tau$-SSP if $\tau=\{\nop,\inp,\set\}\cup \omega$ and $\omega\subseteq\{\out,\used,\free\}$.
Altogether, this proves the claim for all the types listed in \S1 and \S3 of Figure~\ref{fig:overview}.

Secondly, we let $\tau_1=\{\nop,\inp\}$ and reduce $\tau_1$-SSP to $\tau$-SSP, where $\tau=\{\nop,\res,\swap\}\cup\omega$ and $\omega\subseteq\{\inp,\used,\free\}$.
Again by Lemma~\ref{lem:isomorphic_types}, this also implies the NP-completeness of $\tau$-SSP if $\tau=\{\nop,\set,\swap\}\cup\omega$ and $\omega\subseteq\{\out,\used,\free\}$.
Hence, this proves the claim for all the types listed in \S2 of Figure~\ref{fig:overview} and thus completes the proof of Theorem~\ref{the:res_set_derivatives}.

For the announced reductions, we use the following extensions of a TS $A$, cf. Figure~\ref{fig:extensions}.
Let $A=(S_A,E_A,\delta_A,\iota_A)$ be a loop-free TS, and let $\overline{E_A}=\{\overline{e}\mid e\in E_A\}$ be the set containing for every event $e\in E_A$ the unambiguous and fresh event $\overline{e}$ that is \emph{associated with} $e$.
\emph{The backward-extension} $B=(S_A, E_A\cup \overline{E_A}, \delta_B, \iota_A)$ of $A$ extends $A$ by $\overline{E_A}$ and additional backward edges: for all $e\in E_A$ and all $s,s'\in S_A$, if $\delta_A(s,e)=s'$, then $\delta_B(s,e)=s'$ and $\delta_B(s',\overline{e})=s$.
The \emph{oneway loop-extension} $C=(S_A, E_A\cup\overline{E_A}, \delta_C,\iota_A)$ of a TS $A$ extends $B$ by some additional loops: for all $x\in E_A\cup \overline{E_A}$ and all $s\in S_A$, we define $\delta_C(s,e)=\delta_B(s,e)$ and, for all $e\in E_A$ and all $s,s'\in S_A$, if $\delta_A(s,e)=s'$, then $\delta_C(s',e)=s'$.
Finally, the \emph{loop-extension} $D=(S_A, E_A\cup\overline{E_A}, \delta_D,\iota_A)$ of $A$ is an extension of $C$, where for all $x\in E_A\cup \overline{E_A}$ and all $s\in S_A$, we define $\delta_D(s,x)=\delta_C(s,x)$ and, for all $e\in E_A$ and all $s,s'\in S_A$, if $\delta_A(s,e)=s'$, then $\delta_D(s,\overline{e})=s$.

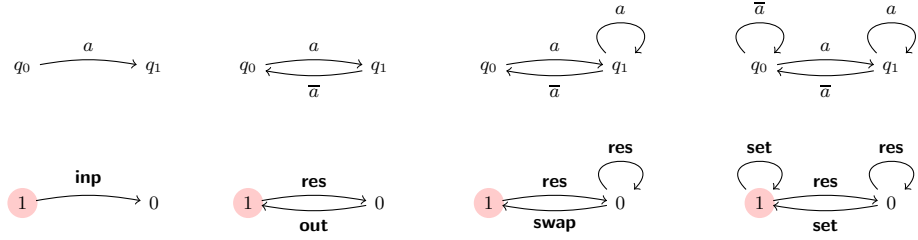
\begin{figure}[t!]
\begin{center} 
\begin{tikzpicture}[new set = import nodes]
\begin{scope}[nodes={set=import nodes}]%
		\foreach \i in {0,1} { \coordinate (\i) at (\i*1.75cm,0) ;}
		\foreach \i in {0,1} { \node (\i) at (\i) {\nscale{$q_\i$}};}
		\graph {
	(import nodes);
			0 ->[bend left =10,"\escale{$a$}"]1;
		};
\end{scope}
\begin{scope}[xshift=3cm,nodes={set=import nodes}]%
		\foreach \i in {0,1} { \coordinate (\i) at (\i*1.75cm,0) ;}
		\foreach \i in {0,1} { \node (\i) at (\i) {\nscale{$q_\i$}};}
		\graph {
	(import nodes);
			0 ->[bend left =10,"\escale{$a$}"]1;
			0 <-[bend right =10,swap, "\escale{$\overline{a}$}"]1;
		};
\end{scope}
\begin{scope}[xshift=6.2cm,nodes={set=import nodes}]%
		\foreach \i in {0,1} { \coordinate (\i) at (\i*1.75cm,0) ;}
		\foreach \i in {0,1} { \node (\i) at (\i) {\nscale{$q_\i$}};}
		\path (1) edge [<-, in=135, out=45,looseness=5] node[above] {\nscale{$a$} } (1);
		\graph {
	(import nodes);
			0 ->[bend left =10,"\escale{$a$}"]1;
			0 <-[bend right =10,swap, "\escale{$\overline{a}$}"]1;
		};
\end{scope}
\begin{scope}[xshift=9.8cm,nodes={set=import nodes}]%
		\foreach \i in {0,1} { \coordinate (\i) at (\i*1.75cm,0) ;}
		\foreach \i in {0,1} { \node (\i) at (\i) {\nscale{$q_\i$}};}
		\path (1) edge [<-, in=135, out=45,looseness=5] node[above] {\nscale{$a$} } (1);
		\path (0) edge [<-, in=135, out=45,looseness=5] node[above] {\nscale{$\overline{a}$} } (0);
		\graph {
	(import nodes);
			0 ->[bend left =10,"\escale{$a$}"]1;
			0 <-[bend right =10,swap, "\escale{$\overline{a}$}"]1;
		};
\end{scope}
\begin{scope}[yshift=-1.8cm]
\begin{scope}[nodes={set=import nodes}]%
		\foreach \i in {0,1} { \coordinate (\i) at (\i*1.75cm,0) ;}
		\foreach \i in {0} {\fill[red!20] (\i) circle (0.2cm);}
		\foreach \i in {0,1} {\pgfmathparse{int(1-\i)} \node (\i) at (\i) {\nscale{$\pgfmathresult$}};}
		\graph {
	(import nodes);
			0 ->[bend left =10,"\escale{$\inp$}"]1;
		};
\end{scope}
\begin{scope}[xshift=3cm,nodes={set=import nodes}]%
		\foreach \i in {0,1} { \coordinate (\i) at (\i*1.75cm,0) ;}
		\foreach \i in {0} {\fill[red!20] (\i) circle (0.2cm);}
		\foreach \i in {0,1} {\pgfmathparse{int(1-\i)} \node (\i) at (\i) {\nscale{$\pgfmathresult$}};}
		\graph {
	(import nodes);
			0 ->[bend left =10,"\escale{$\res$}"]1;
			0 <-[bend right =10,swap, "\escale{$\out$}"]1;
		};
\end{scope}
\begin{scope}[xshift=6.2cm,nodes={set=import nodes}]%
		\foreach \i in {0,1} { \coordinate (\i) at (\i*1.75cm,0) ;}
		\foreach \i in {0} {\fill[red!20] (\i) circle (0.2cm);}
		\foreach \i in {0,1} {\pgfmathparse{int(1-\i)} \node (\i) at (\i) {\nscale{$\pgfmathresult$}};}
		\path (1) edge [<-, in=135, out=45,looseness=5] node[above] {\nscale{$\res$} } (1);
		\graph {
	(import nodes);
			0 ->[bend left =10,"\escale{$\res$}"]1;
			0 <-[bend right =10,swap, "\escale{$\swap$}"]1;
		};
\end{scope}
\begin{scope}[xshift=9.8cm,nodes={set=import nodes}]%
		\foreach \i in {0,1} { \coordinate (\i) at (\i*1.75cm,0) ;}
		\foreach \i in {0} {\fill[red!20] (\i) circle (0.2cm);}
		\foreach \i in {0,1} {\pgfmathparse{int(1-\i)} \node (\i) at (\i) {\nscale{$\pgfmathresult$}};}
		\path (1) edge [<-, in=135, out=45,looseness=5] node[above] {\nscale{$\res$} } (1);
		\path (0) edge [<-, in=135, out=45,looseness=5] node[above] {\nscale{$\set$} } (0);
		\graph {
	(import nodes);
			0 ->[bend left =10,"\escale{$\res$}"]1;
			0 <-[bend right =10,swap, "\escale{$\set$}"]1;
		};
\end{scope}
\end{scope}
\end{tikzpicture}
\end{center}
\caption{Top, left to right: A TS $A$ consisting of a single edge; backward-extension $B$ of $A$;  oneway loop-extension $C$ of $A$; loop-extension $D$ of $A$.
Bottom, left to right: images of $A$ and its extensions $B,C,D$ under regions corresponding to the types of Lemma~\ref{lem:extensions} solving $(q_0,q_1)$: 
a $\{\nop,\inp\}$- ($\{\nop,\inp,\out\}$-) region of $A$;
a $\{\nop,\res, \out\}$-region of $B$;
a $\{\nop,\res, \swap\}$-region of $C$;
a $\{\nop,\res, \set\}$-region of $D$.
}\label{fig:extensions}
\end{figure}
Depending on the considered type $\tau$, we let $\tilde{\tau}=\tau_0$ or $\tilde{\tau}=\tau_1$ and reduce a loop-free TS $A=(S_A,E_A,\delta_A,\iota_A)$ either to its backward-, oneway loop- or loop-extension and show that $sup:S_A\rightarrow\{0,1\}$ allows a $\tilde{\tau}$-region of $A$ if and only if it allows a $\tau$-region of the extension:
\begin{lemma}\label{lem:extensions}
Let $\tau_0=\{\nop,\inp,\out\}$, $\tau_1=\{\nop,\inp\}$, $A$ 
a loop-free TS, $sup: S_A\rightarrow\{0,1\}$ and $B$, $C$ and $D$ the backward-, oneway loop- and loop-extension of $A$, respectively.
\begin{enumerate}
\item\label{lem:extensions_nop_res_out}
If $\tau=\{\nop,\out,\res\}\cup\omega$ and $\omega\subseteq \{\inp,\used,\free\}$, then $sup$ allows a $\tau_0$-region $R=(sup, sig)$ of $A$ if and only if it allows a normalized $\tau$-region $R'=(sup, sig')$ of $B$.
\item\label{lem:extensions_nop_res_set}
If $\tau\supseteq\{\nop,\res,\set\}$, then $sup$ allows a $\tau_0$-region $R=(sup, sig)$ of $A$ if and only if it allows a normalized $\tau$-region $R'=(sup, sig')$ of $D$.
\item\label{lem:extensions_nop_res_swap}
If $\tau=\{\nop,\res,\swap\}\cup\omega$ and $\omega\subseteq \{\inp,\used,\free\}$, then $sup$ allows a $\tau_1$-region $R=(sup, sig)$ of $A$ if and only if it allows a normalized $\tau$-region $R'=(sup, sig')$ of $C$.
\end{enumerate}
\end{lemma}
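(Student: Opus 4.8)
The plan is to prove all three items by the same two-step scheme, exploiting that the extensions $B$, $C$, $D$ share the state set $S_A$ with $A$, so a single support $sup\colon S_A\to\{0,1\}$ is simultaneously a candidate for a region of $A$ and of its extension. For each item I would establish the two implications separately: the \emph{only-if} direction by turning a given $\tilde\tau$-region of $A$ (with $\tilde\tau=\tau_0$ in items~\ref{lem:extensions_nop_res_out},~\ref{lem:extensions_nop_res_set} and $\tilde\tau=\tau_1$ in item~\ref{lem:extensions_nop_res_swap}) into a normalized $\tau$-region of the extension, and the \emph{if} direction by recovering a $\tilde\tau$-region of $A$ from a normalized $\tau$-region of the extension. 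The guiding reformulation, which I would verify first, is this: classifying each edge $s\to s'$ of $A$ by the jump $sup$ induces on it ($\nop$-type if $sup(s)=sup(s')$, $\inp$-type if $1\to 0$, $\out$-type if $0\to 1$), $sup$ allows a $\tau_0$-region of $A$ exactly when every event is uniformly of one type, and a $\tau_1$-region exactly when moreover no event is of $\out$-type.

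For the \emph{only-if} direction I would read the signatures off the bottom row of Figure~\ref{fig:extensions}. Given a uniform $\tilde\tau$-region $(sup,sig)$, I set for an $\inp$-type event $e$ the pair $(sig'(e),sig'(\overline e))$ to $(\res,\out)$ in item~\ref{lem:extensions_nop_res_out}, to $(\res,\set)$ in item~\ref{lem:extensions_nop_res_set}, and to $(\res,\swap)$ in item~\ref{lem:extensions_nop_res_swap}; $\out$-type events (only present in items~\ref{lem:extensions_nop_res_out},~\ref{lem:extensions_nop_res_set}) are treated symmetrically, and $\nop$-type events keep $sig'(e)=sig'(\overline e)=\nop$. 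A short check against Figure~\ref{fig:interactions} then confirms that every added loop is consistent, because the chosen interaction fixes the relevant support value ($\res$ fixes $0$, $\set$ fixes $1$), so the forward loop at the target of $e$ (in $C$, $D$) and the backward loop at the source (in $D$) are honoured. All used interactions lie in $\tau$; as non-$\nop$ labels are placed only where a genuine jump already occurs and $\used,\free$ are never used, the region is normalized.

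The substance is the \emph{if} direction, which I expect to be the main obstacle, as it needs a per-interaction case analysis ruling out mixed jumps. The key observation is that a signature is a single partial function on $\{0,1\}$, so the only way an event can fail to be uniform is through one of the total interactions $\res,\set,\swap$ with both source values occurring (the remaining total interaction $\nop$ is harmless, and every partial interaction forces its source and hence is automatically uniform). The three extensions are engineered to destroy exactly these freedoms. For the backward-extension I would argue that if $sig'(e)=\res$ mixed a $0\to0$ edge with a $1\to 0$ edge, then the associated backward event $\overline e$ would have to realise both $0\to 0$ and $0\to 1$, which no single interaction can; hence every event is uniform and $sup$ allows a $\tau_0$-region of $A$ ($\set,\swap$ do not occur here). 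For the oneway loop- and loop-extensions I would use that the forward loop at the target forces $sup(s')$ to be a fixed point of $sig'(e)$, excluding the fixed-point-free $\swap,\inp,\out$ from forward events and pinning $sup(s')$; the backward edge then pins $sup(s)=sig'(\overline e)(sup(s'))$ to a common value, giving uniformity. In the oneway case the only surviving jump-producing label is $\res$ (since $\set\notin\tau$), which yields no $\out$-type edge, matching $\tau_1$; in the loop-extension $\set$ also survives (it fixes $1$) and supplies the $\out$-type edges, while the extra backward loop discharges the symmetric constraint, recovering the full $\tau_0$. In each case I read $sig(e)\in\{\nop,\inp,\out\}$ off the now-uniform jump. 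Finally, since $B$, $C$, $D$ retain the forward edges of $A$ they are reachable, so the correspondence of supports is exactly the correspondence of solvable SSP atoms needed for the reductions.
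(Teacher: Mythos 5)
Your proposal is correct and follows essentially the same route as the paper's proof: the same signature translations $(\res,\out)$, $(\res,\set)$, $(\res,\swap)$ in the only-if directions, and the same mechanism in the if directions, where the loops pin the target support to a fixed point of $sig'(e)$ and the backward event $\overline{e}$ pins the source, forcing each event's jump to be uniform across all its edges. Your ``uniform jump type'' reformulation and the observation that only the total interactions $\res,\set,\swap$ can mix jumps is just a cleaner packaging of the paper's direct case analysis (and, by reading $sig(e)$ off the uniform jump rather than off $sig'(e)$, it even sidesteps the paper's reliance on normalization in part~3), not a genuinely different argument.
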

\begin{proof}
(1):
\emph{Only-if}:
Let $R=(sup, sig)$ be a $\tau_0$-region of $A$.
Recall that $sig(e)=\nop$, $sig(e)=\inp$, $sig(e)=\out$ imply $sup(s)=sup(t)$, $sup(s)=1$ and $sup(t)=0$, $sup(s)=0$ and $sup(t)=1$ for all edges $s\edge{e}t$ of $A$, respectively.
Thus, it is easy to see that $R$ induces a normalized $\tau$-region $R'=(sup, sig')$ of $B$ as follows, cf. Figure~\ref{fig:extensions}:
For all $e\in E_A$ and its associated event $\overline{e}\in \overline{E_A}$, if $sig(e)=\nop$, then $sig'(e)=sig'(\overline{e})=\nop$;
if $sig(e)=\inp$, then $sig'(e)=\res$ and $sig'(\overline{e})=\out$;
if $sig(e)=\out$, then $sig'(e)=\out$ and $sig'(\overline{e})=\res$.

\emph{If}:
Let $R'=(sup, sig')$ be a normalized $\tau$-region of $B$, and let $e\in E_A$ and $s\edge{e}t\in B$ and $s'\edge{e}t'\in B$ be arbitrary but fixed.
First of all, we argue that if $sup(s)\not=sup(t)$, then $sup(s)=sup(s')$ and $sup(t)=sup(t')$:
By definition of $B$, we have that $t\edge{\overline{e}}s$ and $t'\edge{\overline{e}}s'$ are present. 
If $sup(s)=1$ and $sup(t)=0$, then $s\edge{e}t$ and $t\edge{\overline{e}}s$ imply $sig'(e)\in \{\inp,\res\}$ and $sig'(\overline{e})=\out$.
By $\edge{e}t'$ and $\edge{\overline{e}}s'$, this immediately implies $sup(s')=1$ and $sup(t')=0$.
Similarly, if $sup(s)=0$ and $sup(t)=1$, then $sig'(e)=\out$ and $sig'(\overline{e})\in \{\inp,\res\}$, which implies $sup(s')=0$ and $sup(t')=1$.
Consequently, since $s'\edge{e}t'$ was arbitrary, the claim follows. 
Note that this implies in return that if $sup(s)=sup(t)$, then $sup(s')=sup(t')$.
Since both edges were arbitrary, it is easy to see that the following $\tau_0$-region $R=(sup, sig)$ of $A$ is well defined:
for all $e\in E_A$, if there is $s\edge{e}t\in B$ such that $sup(s)\not=sup(t)$, then $sig(e)=\inp$ if $sup(s)=1$ and $sup(t)=0$, else $sig(e)=\out$;
otherwise, $sig(e)=\nop$.

(2):
\emph{Only-if}:
Recall that $0\edge{\res}0$ and $1\edge{\set}1$ are present in $\tau$.
Consequently, if $R=(sup, sig)$ is a $\tau_0$-region of $A$, then the region $R'=(sup, sig')$, similarly defined to the one for the Only-if-direction of~(1), but replacing \out\ by \set, is a $\tau$-region of $D$, cf. Figure~\ref{fig:extensions}.

\emph{If}:
If $R'=(sup, sig')$ is a normalized $\tau$-region of $D$, then it holds $sig'(e)\in \{\nop, \res,\set\}$ for all $e\in E_D$.
This is due to the fact that if $s\edge{x}s'\in D$, then $s'\edge{x}s'\in D$ for all $x\in E_D$ and all $s,t\in S_D$.
Thus, $R=(sup, sig)$ is obtained from $R'$ by the same arguments as the ones presented for the If-direction of~(1). 

(3):
\emph{Only-if}:
Let $R=(sup, sig)$ be a $\tau_1$-region of $A$.
Recall that $sig(e)=\nop$ and $sig(e)=\inp$ imply $sup(s)=sup(t)$ and $sup(s)=1, sup(t)=0$ for all edges $s\edge{e}t$ of $A$, respectively.
Moreover, $1\edge{\res}0$, $0\edge{\res}0$ and $0\edge{\swap}1$ are present in $\tau$.
Thus, we get a normalized $\tau$-region $R'=(sup, sig')$ of $C$ as follows, cf. Figure~\ref{fig:extensions}:
For all $e\in E_A$ and its associated event $\overline{e}\in \overline{E_A}$, if $sig(e)=\nop$, then $sig'(e)=sig'(\overline{e})=\nop$;
if $sig(e)=\inp$, then $sig'(e)=\res$ and $sig'(\overline{e})=\swap$.

\emph{If}:
Let $R'=(sup, sig')$ be a normalized $\tau$-region of $C$, and let $e\in E_A$ and $s\edge{e}t\in C$ and $s'\edge{e}t'\in C$ be arbitrary but fixed.
We argue that $sup(s)\not=sup(t)$ implies $sup(s)=sup(s')=1$ and $sup(t)=sup(t')=0$:
By definition of $C$ and $s\edge{e}t\in C$, we get $t\edge{e}t\in C$. 
Thus, $sig'(e)\in \{\nop,\res\}$.
Thus, if $sup(s)\not=sup(t)$, then $sig'(e)=\res$, which implies $sup(s)=1$ and $sup(t)=0$.
Moreover, by $t\edge{\overline{e}}s\in C$, this also implies $sig'(\overline{e})=\swap$.
Finally, by $sig'(e)=\res$, $\edge{e}t'$, $sig'(\overline{e})=\swap$ and $t'\edge{\overline{e}}s'$, we get $sup(t')=0$ and $sup(s')=1$.
Consequently, the following definition of $sig$ yields a well-defined $\tau_1$-region $R=(sup, sig)$ of $A$:
for all $e\in E_A$, if $sig'(e)=\res$, then $sig(e)=\inp$, otherwise $sig(e)=\nop$.
\end{proof}

Notice that the TS $A_\varphi$ of Section~\ref{sec:setres-free} is loop-free.
Furthermore, in \cite{DBLP:conf/apn/TredupRW18}, it has been shown that $\{\nop,\inp,\out\}$-SSP is NP-complete even if $A$ is a simple directed path.
Moreover, the introduced extensions of $A$ are constructible in polynomial time. 
Thus, by Lemma~\ref{lem:irrelevance_of_test} and Lemma~\ref{lem:isomorphic_types}, the following corollary, which is easily implied by Lemma~\ref{lem:extensions}, completes the proof of Theorem~\ref{the:res_set_derivatives}. 
\begin{corollary}[Without Proof]\label{cor:extensions}
Let $\tau_0=\{\nop,\inp,\out\}$ and $\tau_1=\{\nop,\inp\}$, and let $A$ be a loop-free TS and $B$, $C$ and $D$ its backward-, oneway loop- and loop-extension, respectively.
\begin{enumerate}
\item
If $\tau=\{\nop,\out,\res\}\cup\omega$ and $\omega\subseteq \{\inp,\used,\free\}$, then $A$ has the $\tau_0$-SSP if and only if $B$ has the $\tau$-SSP.
\item
If $\tau \supseteq \{\nop,\res,\set\}$, then $A$ has the $\tau_0$-SSP if and only if $D$ has the $\tau$-SSP.
\item
If $\tau=\{\nop,\res,\swap\}\cup\omega$ and $\omega\subseteq \{\inp,\used,\free\}$, then $A$ has the $\tau_1$-SSP if and only if $C$ has the $\tau$-SSP.
\end{enumerate}
\end{corollary}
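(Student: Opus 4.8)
The plan is to observe that each of the three extensions $B$, $C$ and $D$ has exactly the same state set $S_A$ as $A$. Consequently, the SSP atoms of $A$ and those of the corresponding extension are literally the same set of pairs $(s,s')$ of distinct states. Since having the $\sigma$-SSP means that \emph{every} SSP atom is $\sigma$-solvable, the whole statement reduces, in each of the three cases, to the pointwise equivalence: an atom $(s,s')$ is $\tilde{\tau}$-solvable in $A$ if and only if it is $\tau$-solvable in the extension, where $\tilde{\tau}=\tau_0$ in the cases of $B$ and $D$, and $\tilde{\tau}=\tau_1$ in the case of $C$.

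Next I would phrase solvability purely in terms of support functions. By definition, $(s,s')$ is $\tilde{\tau}$-solvable in $A$ exactly when there is some $sup:S_A\to\{0,1\}$ with $sup(s)\neq sup(s')$ that allows a $\tilde{\tau}$-region of $A$. This is precisely where Lemma~\ref{lem:extensions} enters: in each case it identifies the set of support maps that allow a $\tilde{\tau}$-region of $A$ with the set of support maps that allow a \emph{normalized} $\tau$-region of the extension, and, crucially, it does so without altering $sup$, since the region produced in Lemma~\ref{lem:extensions} shares the same support. Hence $(s,s')$ is $\tilde{\tau}$-solvable in $A$ if and only if some support separating $s$ from $s'$ allows a normalized $\tau$-region of the extension.

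It then remains to bridge between \enquote{$\tau$-solvable} (by an arbitrary $\tau$-region) and \enquote{solvable by a normalized $\tau$-region} in the extension. Since $\nop\in\tau$ in all three cases, I would invoke the normalization argument of Lemma~\ref{lem:irrelevance_of_test}: its procedure only rewrites signatures to $\nop$ on events that are not state-changing and never touches $sup$. Therefore a region and its normalization have identical support and thus solve exactly the same atoms, so an atom is $\tau$-solvable in the extension if and only if it is solved by a normalized $\tau$-region. Chaining the three equivalences — $\tilde{\tau}$-solvable in $A$ $\Leftrightarrow$ separated by a support allowing a normalized $\tau$-region of the extension $\Leftrightarrow$ $\tau$-solvable in the extension — and quantifying over all atoms yields the equivalence of the two SSP properties in each of the three cases.

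The argument is essentially bookkeeping on top of Lemma~\ref{lem:extensions}, so there is no deep obstacle; the only points demanding care are (i) checking that the atom sets of $A$ and of the extension genuinely coincide, which rests on $S_B=S_C=S_D=S_A$, and (ii) making sure the normalization step is applied in the extension and not in $A$, so that it interacts correctly with the support-level identification supplied by Lemma~\ref{lem:extensions}. The wider translation to all types listed in \S1--\S3 of Figure~\ref{fig:overview} is then obtained by combining this corollary with Lemma~\ref{lem:isomorphic_types}, as indicated in the surrounding discussion.
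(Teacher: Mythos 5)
Your proposal is correct and follows exactly the route the paper intends but leaves implicit: the corollary is stated there \enquote{Without Proof} as being \enquote{easily implied} by Lemma~\ref{lem:extensions}, and your bookkeeping — identical state sets $S_B=S_C=S_D=S_A$ (hence identical SSP atoms), the support-preserving equivalence of Lemma~\ref{lem:extensions}, and the normalization bridge of Lemma~\ref{lem:irrelevance_of_test}, applicable since $\nop\in\tau$ in all three cases — is precisely the omitted argument. You also correctly observe that Lemma~\ref{lem:isomorphic_types} is needed only for the subsequent transfer to the isomorphic types of \S1--\S3, not for the corollary itself.
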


\section{Deciding the State Separation Property for \nop-free Types}%
\label{sec:nop-free}

The following theorem summarizes the complexity of $\tau$-SSP for \nop-free Boolean types:
\begin{theorem}\label{the:nop_free}
Let $\tau$ be a \nop-free type of nets and $A$ a TS.
\begin{enumerate}
\item\label{the:nop_free_polynomial}
If $\swap\not\in \tau$ or $\swap\in\tau$ and $\tau\cap\save=\emptyset$, then deciding if $A$ has the $\tau$-SSP is polynomial.
\item\label{the:nop_free_hardness}
If $\swap\in \tau$ and $\tau\cap\save\not=\emptyset$, then deciding if $A$ has the $\tau$-SSP is NP-complete.
\end{enumerate}
\end{theorem}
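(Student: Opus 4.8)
The plan is to prove the two statements separately, splitting the polynomial part further according to whether $\swap\in\tau$. Membership of $\tau$-SSP in NP is already available from the generic guess-and-check argument given in Section~\ref{sec:Preliminaries}, so for Part~\ref{the:nop_free_hardness} only the hardness remains to be shown.

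For Part~\ref{the:nop_free_polynomial} I would argue as follows. If $\swap\notin\tau$, then since $\tau$ is \nop-free we have $\tau\subseteq\{\inp,\out,\res,\set,\used,\free\}$; these are exactly the types of Figure~\ref{fig:overview}, \S8, whose polynomiality is known~\cite{DBLP:conf/tamc/TredupR19}, so this case can be cited. The new case is $\swap\in\tau$ together with $\tau\cap\save=\emptyset$, which (as $\nop\notin\tau$) forces $\tau\subseteq\{\inp,\out,\swap\}$. The key observation is that every interaction occurring here \emph{flips}, so $sup(s)\neq sup(s')$ holds for every edge $s\to s'$ of \emph{any} $\tau$-region; hence the support of a region is a proper $2$-colouring of the underlying undirected graph of $A$. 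Conversely, signing every event by $\swap$ turns an arbitrary proper $2$-colouring into a valid $\tau$-region. Therefore a region exists iff this graph is bipartite, its support being the (up to complement unique) bipartition. An atom $(s,s')$ is then solvable iff $s,s'$ lie in different colour classes, so $A$ has the $\tau$-SSP iff each colour class is a singleton, which is decidable in polynomial time by a bipartiteness test plus a size check.

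For Part~\ref{the:nop_free_hardness} we have $\swap\in\tau$ and $\tau$ contains at least one of $\res,\set,\used,\free$. First I would cut down the casework using Lemma~\ref{lem:isomorphic_types}: the inversion isomorphism $0\leftrightarrow 1$ exchanges $\res\leftrightarrow\set$ and $\used\leftrightarrow\free$, so it suffices to handle the presence of a \savezero-interaction, the \saveone-cases following by isomorphism. The device that replaces the missing \nop is the following: an $e$-labelled self-loop $s\to s$ cannot be signed by $\swap$ (which always flips) and must therefore use a save interaction, which \emph{pins} $sup(s)$ to a fixed value ($0$ for $\res,\free$; $1$ for $\set,\used$), while $\swap$ realises the genuine state changes. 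This recovers, on pinned states, the rigidity that \nop\ provided in the \nop-equipped hard cases.

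Building on this, I would reduce from \textsc{CM 1-in-3 3Sat}, constructing a (bi-directed) TS $A_\varphi$ in the spirit of Section~\ref{sec:setres-free}: one clause-gadget per $\zeta_i$ whose image under any separating region is forced to run from support $1$ to support $0$, so that exactly one of its three variable-events must carry the flipping interaction $\swap$; collecting these selected events yields a one-in-three model, and a single distinguished atom $\alpha$ (as in Section~\ref{sec:setres-free}) triggers this forcing. Self-loops signed by the available save interaction pin the clause-path endpoints and, crucially, the intermediate states, so as to keep the flipping positions constrained. The main obstacle is precisely the absence of \nop: unlike the clean ``one $\inp$ per clause'' argument, here both $\swap$ and a save interaction (e.g.\ $\res$, which admits both the $1\to 0$ and $0\to 0$ behaviours) can reach $0$, so spurious regions that exploit $\swap$'s two-way flipping or the save interaction's slack must be ruled out by pinning enough states. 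Showing that the \emph{only} separating regions are those induced by genuine one-in-three models, uniformly across the admissible save interactions (which pin to different values and therefore demand adapted gadgets), is the careful case analysis I expect to be the technical core of the proof.
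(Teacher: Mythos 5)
Your Part~\ref{the:nop_free_polynomial} is essentially correct and matches the paper: the \swap-free case is cited, and your $2$-colouring argument for $\tau=\{\swap\}\cup\omega$, $\omega\subseteq\{\inp,\out\}$ (every available interaction flips the support, so a region exists iff the underlying graph is bipartite, and separability coincides with the unique bipartition) is a clean reformulation of the paper's Lemma~\ref{lem:nop_free_polynomial}, which concludes that separable inputs have at most two states; only the degenerate one-state TS with a self-loop (vacuously separated, yet admitting no region) needs a side remark. Your reduction of the hardness casework to \savezero-interactions via the $0\leftrightarrow 1$ isomorphism also matches the paper's appendix.

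Part~\ref{the:nop_free_hardness}, however, has a genuine gap at its core. Your claim that forcing a clause path to run from support $1$ to support $0$ pins down \emph{exactly one} \swap-signed variable event is a non sequitur: with the auxiliary events forced to \swap, differing endpoint supports only force an \emph{odd} number of \swap{}s among the three variable events, i.e., one \emph{or three}, and since any cycle- or pin-based auxiliary gadget can only impose parity constraints, no such gadget can separate these two odd cases. The paper's construction works with the opposite parity: the clause-path endpoints are pinned to \emph{equal} support (via $k_0$ receiving a \save-signature, Lemma~\ref{lem:swap_basics}.\ref{lem:swap_basics_keep}), so by Lemma~\ref{lem:swap_basics}.\ref{lem:swap_basics_even} zero or two variables are \swap, the zero case is killed by the short gadget $Q$ (Lemma~\ref{lem:nop_free_ingredients}.\ref{lem:nop_free_ingredients_model}), and the model consists of the variables whose signature is \emph{not} \swap --- the inverse of your selection rule.

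Your pinning device is also defective. A self-loop $s\edge{e}s$ forces $sig(e)\in\save$ in \emph{every} region, so the pinning is unconditional; for $\tau=\{\swap,\free\}$ (which your proof must cover, and where pinning to $1$ is impossible at all) any two self-looped states are pinned to $0$ in every region and are therefore \emph{inseparable}, so the converse direction (model $\Rightarrow$ SSP) breaks as soon as you pin more than one state this way. The paper's $A_\varphi$ is loop-free and bi-directed: bi-directedness eliminates \inp/\out-signatures outright, and the \save-pinning of $k_0$ or $k_1$ is forced only \emph{conditionally}, for regions solving the distinguished atom $\alpha=(g_{0,2},g_{0,4})$; other regions may sign $k_0$ by \swap, which is exactly what makes the remaining atoms solvable. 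Finally, since solving $\alpha$ forces only the symmetric disjunction ``one of $k_0,k_1$ is \save, the other \swap'' (Lemma~\ref{lem:nop_free_ingredients}.\ref{lem:nop_free_ingredients_alpha}), the paper needs the second, renamed copy $\varphi'$ of the formula so that \emph{either} outcome yields a one-in-three model; your sketch has no counterpart to this two-copy symmetry, and unconditional pins cannot supply the missing asymmetry without destroying separability elsewhere.
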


The tractability of $\tau$-SSP for \nop-free types that are also \swap-free has been shown in the broader context of $\tau$-synthesis in \cite{DBLP:conf/apn/TredupE20}.
Thus, restricted to Theorem~\ref{the:nop_free}.\ref{the:nop_free_polynomial}, it remains to argue that $\tau$-SSP is polynomial if $\tau=\{\swap\}\cup\omega$ and $\omega\subseteq \{\inp,\out\}$.
The following lemma states that separable inputs of $\tau$-SSP are trivial for these types and thus proves its tractability:
\begin{lemma}\label{lem:nop_free_polynomial}
Let $\tau=\{\swap\}\cup\omega$, where $\omega\subseteq \{\inp,\out\}$, and $A=(S,E,\delta,\iota)$ be a TS.
If $A$ has the $\tau$-SSP, then it has at most two states.
\end{lemma}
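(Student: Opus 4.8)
The plan is to exploit the fact that every interaction available to $\tau$ forces the support to change across an edge. Concretely, the first step is to observe that for $\tau=\{\swap\}\cup\omega$ with $\omega\subseteq\{\inp,\out\}$, each of $\swap$, $\inp$, $\out$ is \emph{state changing}: reading off Figure~\ref{fig:interactions}, $\swap$ flips $0$ and $1$, $\inp$ only fires as $1\edge{\inp}0$, and $\out$ only fires as $0\edge{\out}1$. Hence, for \emph{every} $\tau$-region $R=(sup,sig)$ of $A$ and \emph{every} edge $s\edge{e}s'$ of $A$, we necessarily have $sup(s)\neq sup(s')$.

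The second step turns this local property into a global one via reachability. Since $A$ is initialized, every state $s$ is reached from $\iota$ by some directed path $\iota\edge{e_1}\dots\edge{e_n}s$; by the first step and a straightforward induction on $n$, we get $sup(s)=sup(\iota)$ if $n$ is even and $sup(s)=1-sup(\iota)$ if $n$ is odd. Thus the value $sup(s)$ is completely determined by $sup(\iota)$ together with the parity of the length of any path from $\iota$ to $s$. In particular, if $A$ admits at least one $\tau$-region, then all directed paths from $\iota$ to a fixed state must share the same length-parity, so each state $s$ carries a well-defined parity $\pi(s)\in\{0,1\}$ and every region satisfies $sup(s)=sup(\iota)\oplus\pi(s)$.

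The final step is the counting argument. Assume $A$ has the $\tau$-SSP; we may assume $A$ has at least two states, for otherwise there is nothing to prove. Then $A$ has an SSP atom, which must be solvable, so a $\tau$-region exists and, by the previous step, the parity $\pi$ is well defined. Now suppose $s\neq s'$ with $\pi(s)=\pi(s')$. For any $\tau$-region $R$ we obtain $sup(s)=sup(\iota)\oplus\pi(s)=sup(\iota)\oplus\pi(s')=sup(s')$, so $R$ does not solve $(s,s')$; hence $(s,s')$ is not $\tau$-solvable, contradicting the $\tau$-SSP. Therefore distinct states have distinct parities, and since $\pi$ takes only the two values $0$ and $1$, the TS $A$ has at most two states.

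The main obstacle to watch is the case where some state is reachable by paths of differing parities: there the two local constraints clash and no $\tau$-region exists at all. This case is not an exception but is absorbed by the argument — with at least two states and no region, no SSP atom can be solved, so $A$ fails the $\tau$-SSP; restricting to the situation where $A$ does have the $\tau$-SSP guarantees region existence and hence the well-definedness of $\pi$ used above.
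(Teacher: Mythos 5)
Your proof is correct, and it rests on the same key observation as the paper's proof --- each of $\swap$, $\inp$, $\out$ is state-changing, so any $\tau$-region must satisfy $sup(s)\neq sup(s')$ across every edge $s\edge{e}s'$ --- but you develop that observation along a genuinely different route. The paper argues locally about the structure of $A$ near the initial state: a self-loop kills all regions outright; in the loop-free case, two distinct successors of $\iota$ would receive equal support in every region and hence be inseparable, and any two-step path $\iota\edge{e}s\edge{e'}s''$ must satisfy $s''=\iota$ for the same reason; together these facts pin the reachable state set down to $\{\iota,s\}$. You instead extract a global invariant: whenever a region exists at all, $sup(s)$ equals $sup(\iota)$ shifted by the parity of the length of any path from $\iota$ to $s$, so all such paths share one parity, each state carries a well-defined parity $\pi(s)\in\{0,1\}$, and every region is determined up to complement by $\pi$; the $\tau$-SSP then forces $\pi$ to be injective, giving $\lvert S\rvert\leq 2$ by pigeonhole. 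Your final remark correctly absorbs the degenerate case of clashing parities (which subsumes self-loops): there no region exists, so with two or more states some atom is unsolvable and the $\tau$-SSP fails, matching the hypothesis of the lemma. As for what each approach buys: the paper's case analysis yields explicit structural information --- a TS with the $\tau$-SSP is essentially $\iota$ and one successor $s$ with all edges shuttling between them --- which is useful for intuition about these types, whereas your invariant argument is shorter, avoids the case distinctions entirely, and makes the underlying reason for the bound transparent, namely that over $\{\swap\}\cup\omega$ with $\omega\subseteq\{\inp,\out\}$ a region can only ever distinguish states by their path-length parity.
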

\begin{proof}
If there is $s\edge{e}s\in A$, then $A$ has no $\tau$-regions.
Hence, if such $A$ has more than one state, then it does not have the $\tau$-SSP.

Let's consider the case when $A$ is loop-free, that is, $s\edge{e}s'\in A$ implies $s\not=s'$.
First of all, note that there is at most one outgoing edge $\iota\edge{e}s$ at $\iota$, since if $\iota\edge{e}s$, $\iota\edge{e'}s'$ and $s\not=s'$, then $s$ and $s'$ are not separable.
This can be seen as follows:
If $R=(sup, sig)$ is a $\tau$-region that solves $(s,s')$, then $sup(s)=0$ and $sup(s')=1$ or $sup(s)=1$ and $sup(s')=0$.
If $sup(s)=0$ and $sup(s')=1$, then $sup(\iota)=0$ contradicts $sig(e)\in \tau$, and $sup(\iota)=1$ contradicts $sig(e')\in \tau$.
Similarly, $sup(s)=1$ and $sup(s')=0$ yields a contradiction.
Thus, a separating region $R$ does not exist, which proves the claim.
Secondly, if $\iota\edge{e}s\edge{e'}s''\in A$, then $\iota=s''$, since otherwise, $\iota$ and $s''$ are not separable.
This can be seen as follows:
If $R=(sup, sig)$ is a $\tau$-region that solves $(\iota,s'')$, then $sup(\iota)=0$ and $sup(s'')=1$ or $sup(\iota)=1$ and $sup(s'')=0$.
If $sup(\iota)=0$ and $sup(s'')=1$, then $sup(s')=0$ contradicts $sig(e)\in\tau$, and $sup(s')=1$ contradicts $sig(e')\in\tau$.
Similarly, $sup(\iota)=1$ and $sup(s'')=0$ yields a contradiction.
This implies again that a separating region does not exist, which proves the claim and, moreover, proves the lemma.
\end{proof}

To complete the proof of Theorem~\ref{the:nop_free}, it remains to prove the NP-completeness of $\tau$-SSP for the types listed in \S9 of Figure~\ref{fig:overview}, which are exactly covered by Theorem~\ref{the:nop_free}.\ref{the:nop_free_hardness}.
Thus, in the remainder of this section, if not stated explicitly otherwise, we let $\tau$ be a \nop-free type such that $\swap\in \tau$ and $\tau\cap\save\not=\emptyset$.
Moreover, we reduce \textsc{CM 1-in-3 3Sat} to $\tau$-SSP again.

\textbf{Basic Ideas of the Reduction.}
Similar to our previous approach we build a TS $A_\varphi$ that has for every clause $\zeta_i=\{X_{i_0}, X_{i_1}, X_{i_2}\}$, where $i\in \{0,\dots, m-1\}$, a (bi-directed) path $ P_i = \dots \fbedge{X_{i_0}}\dots \fbedge{X_{i_1}} \dots \fbedge{X_{i_2}} \dots$ on which the elements of $\zeta_i$ occur as events.
Together, the corresponding paths $P_0,\dots, P_{m-1}$ are meant to represent $\varphi$.
However, the current types are more diverse than $\{\nop,\inp\}$, since they also allow $\swap$ and other interactions.
Simultaneously, they are more restricted than $\{\nop,\inp\}$, since they lack of \nop.
One of the main obstacles that occur is that if $s\edge{e}s'\in A_\varphi$, then the current types basically allow  $sup(s)=1$ and $sup(s')=0$ as well as $sup(s)=0$ and $sup(s')=1$ for a $\tau$-region $R=(sup, sig)$ that solves $(s,s')$.
It turns out that this requires a second representation of $\varphi$.
To do so, we use a copy $\varphi'$ that originates from $\varphi$ by simply renaming its variables.
That is, $\varphi'$ originates from $\varphi$ by replacing every variable $v\in X$ of $\varphi$ by a unique and fresh variable $v'$.
\begin{example}[Renaming of $\varphi$]\label{ex:varphi_copy}
The instance $\varphi'=\{\zeta_0',\dots, \zeta_5'\}$ that originates from $\varphi$ of Example~\ref{ex:varphi} is defined by $X'=\{X_0',\dots, X_5'\}$ and $\zeta_0'=\{X_0',X_1',X_2'\}$, $\zeta_1'=\{X_0',X_2',X_3'\}$, $\zeta_2'=\{X_0',X_1',X_3'\}$, $\zeta_3'=\{X_2',X_4',X_5'\}$, $\zeta_4'=\{X_1',X_4',X_5'\}$ and $\zeta_5'=\{X_3',X_4',X_5'\}$.
\end{example}
It is immediately clear that $\varphi$ is one-in-three satisfiable if and only if $\varphi'$ is one-in-three satisfiable.
The TS $A_\varphi$ additionally has for every clause $\zeta_i'=\{X_{i_0}', X_{i_1}', X_{i_2}'\}$, where $i\in \{0,\dots, m-1\}$, of $\varphi'$ also a (bi-directed) path $ P_i' = \dots \Fbedge{X_{i_0}'}\dots \Fbedge{X_{i_1}'} \dots \Fbedge{X_{i_2}'} \dots$ on which the elements of $\zeta_i'$ occur as events.
Moreover, by the construction, the TS $A_\varphi$ has a SSP atom $\alpha=(s,s')$ such that if a $\tau$-region solves $\alpha$, then either the signatures of the variable events $X$ of $\varphi$ define a one-in-three model of $\varphi$ or the signatures of the variable events $X'$ of $\varphi'$ define a one-in-three model of $\varphi'$.
Obviously, both cases imply the one-in-three satisfiability of $\varphi$.

Conversely, the construction ensures, if $\varphi$ has a one-in-three model then $A_\varphi$ has the~$\tau$-SSP.

Similar to our approach for Theorem~\ref{the:res_set_free}, the TS $A_\varphi$ is a composition of several gadgets.
The next Lemma~\ref{lem:swap_basics} introduces some basic properties of $\tau$-regions in bi-directed TS for \nop-free types that we use to prove the functionailty of $A_\varphi$'s gadgets.
After that, Lemma~\ref{lem:nop_free_ingredients} introduces TS that are the (isomorphic) prototypes of the gadgets of  $A_\varphi$ and additionally proves the essential parts of their intended functionality.
\begin{lemma}\label{lem:swap_basics}
Let $\tau$ be a \nop-free Boolean type of nets and $A$ a bi-directed TS; 
let $s\edge{e}s'$ be an edge of $A$, 
$P_0=s_0\fbedge{e_1}\dots\fbedge{e_m}s_m$ and $P_1=q_0\fbedge{e_1}\dots\fbedge{e_m}q_m$ be two simple paths of $A$ that both apply the same sequence $e_1\dots e_m$ of events, 
and $R=(sup, sig)$ be a $\tau$-region of $A$.
\begin{enumerate}
\item\label{lem:swap_basics_keep}
If $sig(e)\in\save$, then $sup(s)=sup(s')=sup(q)=sup(q')$ for every edge $q\edge{e}q'\in A$. 
\item\label{lem:swap_basics_swap}
If $sup(s_m)\not=sup(q_m)$, then $sig(e_i)=\swap$ for all $i\in \{1,\dots, m\}$.
\item\label{lem:swap_basics_even}
If $sup(s_0)=sup(s_m)$, then $\vert \{e\in \{e_1,\dots, e_m\}\mid sig(e)=\swap\}\vert$ is even. 
\end{enumerate}
\end{lemma}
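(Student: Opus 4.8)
The plan is to first isolate a single structural fact about bi-directed edges from which all three parts follow. Since $A$ is bi-directed, every edge $s\edge{e}s'$ comes with its reverse $s'\edge{e}s$, so a $\tau$-region $R=(sup,sig)$ must simultaneously satisfy $sup(s)\edge{sig(e)}sup(s')\in\tau$ and $sup(s')\edge{sig(e)}sup(s)\in\tau$. Reading these constraints off Figure~\ref{fig:interactions} and using that $\tau$ is \nop-free yields a clean dichotomy: if $sup(s)\not=sup(s')$ then $sig(e)=\swap$, because $\swap$ is the only interaction realising both $0\to1$ and $1\to0$; and if $sup(s)=sup(s')$ then $sig(e)\in\save\setminus\{\nop\}=\{\set,\used,\res,\free\}$, the common support value being $1$ for $sig(e)\in\{\set,\used\}$ and $0$ for $sig(e)\in\{\res,\free\}$. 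I expect the verification of this dichotomy --- an exhaustive check over the eight interactions, where the absence of $\nop$ restricts the value-preserving signatures to $\{\set,\used,\res,\free\}$ --- to be the only genuinely delicate point; everything afterwards is propagation bookkeeping.

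For part~\ref{lem:swap_basics_keep}, the assumption $sig(e)\in\save$ is, by the dichotomy, incompatible with differing endpoints, so $sup(s)=sup(s')$ and their common value is the constant fixed by $sig(e)$ alone. As this constant is independent of the chosen edge, every $e$-labelled edge $q\edge{e}q'$ has $sup(q)=sup(q')$ equal to that same constant, giving $sup(s)=sup(s')=sup(q)=sup(q')$.

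For parts~\ref{lem:swap_basics_swap} and~\ref{lem:swap_basics_even}, I would propagate $sup$ along the paths using the dichotomy: a $\swap$-edge flips the support of both of its endpoints, while a $\save$-edge leaves the support unchanged and, by part~\ref{lem:swap_basics_keep}, pins it to a global constant. Part~\ref{lem:swap_basics_even} then follows at once, since along $P_0$ the value $sup(s_m)$ equals $sup(s_0)$ exactly when the number of $\swap$-signatures among $e_1,\dots,e_m$ is even; hence $sup(s_0)=sup(s_m)$ forces that count to be even.

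For part~\ref{lem:swap_basics_swap}, I would track the boolean quantity $d_i$ recording whether $sup(s_i)\not=sup(q_i)$. A $\swap$-edge flips both supports and therefore preserves $d_i$, whereas any $\save$-edge forces $sup(s_i)=sup(q_i)$ by part~\ref{lem:swap_basics_keep} and hence resets $d_i$ to $0$; once $d_i=0$ it stays $0$ under all subsequent edges. Thus a single $\save$-signature anywhere along the sequence would yield $sup(s_m)=sup(q_m)$, contradicting the hypothesis, so the dichotomy leaves no option but $sig(e_i)=\swap$ for every $i$. The only thing to confirm here is that each $sig(e_i)$ is indeed either $\swap$ or a save interaction, which is precisely the dichotomy applied to the edges of the simple bi-directed paths $P_0$ and $P_1$.
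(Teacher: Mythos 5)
Your proposal is correct and follows essentially the same route as the paper: your up-front dichotomy (bi-directedness forces $sig(e)\in\{\swap\}\cup\save$, with $\swap$ exactly when the endpoints' supports differ, and each $\save$-interaction pinning both endpoints to a fixed constant since $\nop\notin\tau$) is precisely the paper's observation that $sig(e)\notin\{\inp,\out\}$ together with the unique-target property of $\save$-interactions, from which parts (1)--(3) follow the same way. The only cosmetic difference is that you prove (2) by forward propagation of the disagreement indicator $d_i$ (with a ``once zero, stays zero'' argument) where the paper runs a backward induction from $sup(s_m)\neq sup(q_m)$; the two are interchangeable.
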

\begin{proof}
(1):
$A$ is bi-directed and $\edge{i}p\in \tau$ and $\edge{i}p'\in \tau$ imply $p=p'$ for all $i\in \save$.
 
(2):
By definition of $\tau$, if $sup(s_m)\not=sup(q_m)$, then, by $\edge{e_m}s_m$ and $\edge{e_m}q_m$, we get $sig(e_m)=\swap$.
Clearly, by $sup(s_m)\not=sup(q_m)$, this implies $sup(s_{m-1})\not=sup(q_{m-1})$.
Thus, the claim follows easily by induction on $m$.

(3):
Since $sup(s_0)=sup(s_m)$, the image $P_0^R$ of $P_0$ is a path of $\tau$ that starts and terminates at the same state.
Consequently, the number of changes between $0$ and $1$ on $P_0^R$ is even.
Since $A$ is bi-directed, $sup(s)\not=sup(s')$ if and only if $sig(e)=\swap$ for all $s\edge{e}s'\in P_0$.
Thus, the number of events of $P_0$ with a \swap-signature must be even. 
Hence, the claim.
\end{proof}

\begin{lemma}[Basic Components of $A_\varphi$]\label{lem:nop_free_ingredients}
Let $\tau$ be a $\nop$-free Boolean type and $A$ a bi-directed TS with the following paths $G$, $F$, $T$ and $Q$, and let $R=(sup, sig)$ be a $\tau$-region of $A$: 
\begin{center}
\begin{tikzpicture}[new set = import nodes]
\begin{scope}
\node (T) at (-0.5,0) {\nscale{$G=$}};
\foreach \i in {0,...,4} {\coordinate (\i) at (\i*1.1cm+0.2cm,0);}
\foreach \i in {0,...,4} {\node (t\i) at (\i) {\nscale{$g_\i$}};}
\graph { 
(t0) <->["\escale{$v$}"] (t1) <->["\escale{$w$}"] (t2) <->["\escale{$k_0$}"] (t3) <->["\escale{$k_1$}"] (t4);
};
\end{scope}
\begin{scope}[xshift=6.5cm]
\node (T) at (-0.5,0) {\nscale{$F=$}};
\foreach \i in {0,...,4} {\coordinate (\i) at (\i*1.1cm+0.2cm,0);}
\foreach \i in {0,...,4} {\node (t\i) at (\i) {\nscale{$f_\i$}};}
\graph { 
(t0) <->["\escale{$v$}"] (t1) <->["\escale{$w$}"] (t2) <->["\escale{$k_1$}"] (t3) <->["\escale{$k_0$}"] (t4); 
};
\end{scope}
\begin{scope}
\begin{scope}[yshift=-1.2cm]
\node (T) at (-0.5,0) {\nscale{$T=$}};
\foreach \i in {0,...,9} {\coordinate (\i) at (\i*1.15cm,0);}
\coordinate (10) at (1.15cm*9, -1.2cm);
\coordinate (11) at (1.15cm*8, -1.2cm);
\foreach \i in {0,...,11} {\node (t\i) at (\i) {\nscale{$t_{\i}$}};}
\graph { 
(t0) <->["\escale{$k_0$}"] (t1) <->["\escale{$v_0$}"] (t2) <->["\escale{$v_{1}$}"] (t3) <->["\escale{$X_{0}$}"] (t4) <->["\escale{$v_{2}$}"] (t5)<->["\escale{$X_{1}$}"](t6)<->["\escale{$v_{3}$}"](t7)<->["\escale{$X_{2}$}"](t8)<->["\escale{$v_{4}$}"](t9)<->["\escale{$v_{5}$}"](t10)<->["\escale{$k_0$}"](t11);
};
\end{scope}
\begin{scope}[yshift=-2.4cm]
\node (T) at (-0.5,0) {\nscale{$Q=$}};
\foreach \i in {0,...,3} {\coordinate (\i) at (\i*1.15cm,0);}
\foreach \i in {0,...,3} {\node (t\i) at (\i) {\nscale{$q_{\i}$}};}
\graph { 
(t0) <->["\escale{$X_0$}"] (t1) <->["\escale{$v_6$}"] (t2) <->["\escale{$X_2$}"] (t3);
};
\end{scope}
\end{scope}
\end{tikzpicture}
\end{center}

\begin{enumerate}
\item\label{lem:nop_free_ingredients_alpha}
If $R$ solves $\alpha=(g_2,g_4)$, then either $sig(k_0)\in \save $ and $sig(k_1)=\swap$ or $sig(k_0)=\swap$ and $sig(k_1)\in \save $;
\item\label{lem:nop_free_ingredients_swaps}
If $sig(k_0)\in \save $ and $sig(k_1)=\swap$ or $sig(k_0)=\swap$ and $sig(k_1)\in \save $, then $sig(v)=sig(w)=\swap$;
\item \label{lem:nop_free_ingredients_model}
If $sig(k_0)\in \save$ and $sig(v_i)=\swap$ for all $i\in \{0,\dots, 6\}$, then there is exactly one $X\in \{X_0, X_1,X_2\}$ such that $sig(X)\not=\swap$.
\end{enumerate}
\end{lemma}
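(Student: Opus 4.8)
The plan is to base all three parts on a single dichotomy available in any bi-directed TS over a \nop-free type: for an edge $s\fbedge{e}s'$ of $A$ and a $\tau$-region $R=(sup,sig)$, either $sig(e)=\swap$, in which case $sup(s)\neq sup(s')$, or $sig(e)\in\save$, in which case $sup(s)=sup(s')$. No other signature can occur, since neither $\inp$ nor $\out$ admits transitions in both directions along a bi-directed edge and $\nop\notin\tau$; this is precisely the equivalence already exploited inside the proof of Lemma~\ref{lem:swap_basics}.\ref{lem:swap_basics_even}. Together with the three items of Lemma~\ref{lem:swap_basics}, this dichotomy drives everything below.

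For part~\ref{lem:nop_free_ingredients_alpha} I would look only at the two consecutive edges $g_2\fbedge{k_0}g_3\fbedge{k_1}g_4$ of $G$. Since $R$ solves $\alpha$, we have $sup(g_2)\neq sup(g_4)$, so exactly one of these edges changes the support value: if both changed it or both preserved it, the endpoints $g_2$ and $g_4$ would receive the same value. By the dichotomy the state-changing edge is labelled $\swap$ and the other by an interaction of $\save$, which is exactly the stated alternative.

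For part~\ref{lem:nop_free_ingredients_swaps} the target is to prove $sup(g_2)\neq sup(f_2)$ and then to apply Lemma~\ref{lem:swap_basics}.\ref{lem:swap_basics_swap} to the two prefixes $g_0\fbedge{v}g_1\fbedge{w}g_2$ and $f_0\fbedge{v}f_1\fbedge{w}f_2$, which run through the common event sequence $v\,w$; this immediately yields $sig(v)=sig(w)=\swap$. To establish $sup(g_2)\neq sup(f_2)$ I treat the two hypotheses symmetrically. Assume first $sig(k_0)\in\save$ and $sig(k_1)=\swap$: by Lemma~\ref{lem:swap_basics}.\ref{lem:swap_basics_keep} every $k_0$-edge joins states of one common value $c$, so the edge $g_2\fbedge{k_0}g_3$ gives $sup(g_2)=c$, whereas in $F$ the \swap-edge $f_2\fbedge{k_1}f_3$ followed by the $k_0$-edge $f_3\fbedge{k_0}f_4$ forces $sup(f_2)=1-c$; hence $sup(g_2)\neq sup(f_2)$. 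The case $sig(k_0)=\swap$, $sig(k_1)\in\save$ is identical after exchanging the roles of $k_0$ and $k_1$ and of the paths $G$ and $F$.

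Part~\ref{lem:nop_free_ingredients_model} is the main obstacle, and it is the only place where the auxiliary path $Q$ is needed. From $sig(k_0)\in\save$ and Lemma~\ref{lem:swap_basics}.\ref{lem:swap_basics_keep} the two $k_0$-edges pin $sup(t_1)=sup(t_{10})=c_0$, the value saved by $k_0$. Applying the parity item Lemma~\ref{lem:swap_basics}.\ref{lem:swap_basics_even} to the subpath from $t_1$ to $t_{10}$, whose events are $v_0,v_1,X_0,v_2,X_1,v_3,X_2,v_4,v_5$, the six events $v_0,\dots,v_5$ have signature \swap by hypothesis and hence contribute an even count, so the number of $X_j$ with $sig(X_j)=\swap$ must also be even, i.e.\ $0$ or $2$. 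It remains to exclude the case that all three $X_j$ lie in $\save$. Assuming this, I would trace the support forward along $T$ starting from $sup(t_1)=c_0$: the two \swap-edges $v_0,v_1$ give $sup(t_3)=c_0$, and then the save-edge $X_0$, the \swap-edge $v_2$, the save-edge $X_1$ and the \swap-edge $v_3$ give $sup(t_7)=c_0$ as well. Since $X_0$ and $X_2$ are save-events, Lemma~\ref{lem:swap_basics}.\ref{lem:swap_basics_keep} makes their saved values global, so in $Q$ the edge $q_0\fbedge{X_0}q_1$ forces $sup(q_1)=sup(t_3)=c_0$ and the edge $q_2\fbedge{X_2}q_3$ forces $sup(q_2)=sup(t_7)=c_0$; but the intervening edge $q_1\fbedge{v_6}q_2$ has $sig(v_6)=\swap$ and therefore $sup(q_1)\neq sup(q_2)$, a contradiction. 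Thus exactly one $X_j$ is non-\swap. The delicate point is exactly this coupling: the parity argument on $T$ by itself only narrows the count of non-\swap\ variables to $1$ or $3$, and it is solely the path $Q$, read through the globality of save-values, that excludes $3$.
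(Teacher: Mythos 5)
Your proposal is correct and follows essentially the same route as the paper's proof: part (1) by the save/swap dichotomy on the two edges $g_2\fbedge{k_0}g_3\fbedge{k_1}g_4$, part (2) by deriving $sup(g_2)\not=sup(f_2)$ from the globality of the saved value (Lemma~\ref{lem:swap_basics}.\ref{lem:swap_basics_keep}) and then invoking Lemma~\ref{lem:swap_basics}.\ref{lem:swap_basics_swap}, and part (3) by the parity argument of Lemma~\ref{lem:swap_basics}.\ref{lem:swap_basics_even} followed by excluding the all-save case through the path $Q$ and $sig(v_6)=\swap$. Your explicit support trace in part (3) is just a value-level rendering of the paper's chain $sup(t_3)=sup(t_4)\not=sup(t_5)=sup(t_6)\not=sup(t_7)=sup(t_8)$, so the two arguments coincide.
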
 
\begin{proof}
Since $A$ is bi-directed, we have $sig(e)\not\in \{\inp,\out\}$ for all $e\in E_A$.

(1): $R$ solves $\alpha$, thus $sup(g_2)\not=sup(g_4)$.
If $sig(k_0)\not=\swap\not=sig(k_1)$ or $sig(k_0)=sig(k_1)=\swap$, then $sup(g_2)=sup(g_4)$, a contradiction.
Hence the claim, cf. Figure~\ref{fig:nop_free_ingredients}.

(2):
If $sig(k_0)\in \save$ and $sig(k_1)=\swap$, then we have $sup(g_2)=sup(f_3)\not=sup(f_2)$, cf. Figure~\ref{fig:nop_free_ingredients}.
By symmetry, the latter is also true if $sig(k_0)=\swap$ and $sig(k_1)\in \save$.
Hence, the claim follows from Lemma~\ref{lem:swap_basics}.\ref{lem:swap_basics_swap}.

(3):
Since $sig(k_0)\in \save $, we get $sup(t_1)=sup(t_{10})$.
By Lemma~\ref{lem:swap_basics}.\ref{lem:swap_basics_even}, this implies that $\vert \{e\in E(T) \mid sig(e)=\swap\}\vert $ is even.
Moreover, since $sig(v_0)=\dots sig(v_5)=\swap$, this implies $\vert \{e\in \{X_0,X_1,X_2\} \mid sig(e)=\swap\}\vert \in \{0,2\}$.
If $\vert \{e\in \{X_0,X_1,X_2\} \mid sig(e)=\swap\}\vert =0$ then, we get $sup(t_3)=sup(t_4)\not=sup(t_5)=sup(t_6)\not=sup(t_7)=sup(t_8)$ by Lemma~\ref{lem:swap_basics}.\ref{lem:swap_basics_keep}.
This particularly implies $sup(t_4)=sup(t_7)$ and, again by Lemma~\ref{lem:swap_basics}.\ref{lem:swap_basics_keep}, also $sup(t_4)=sup(q_1)=sup(q_2)$ and contradicts $sig(v_6)=\swap$, cf. Figure~\ref{fig:nop_free_ingredients_counterexample}.
Thus, we have $\vert \{e\in \{X_0,X_1,X_2\} \mid sig(e)=\swap\}\vert =2$, which proves the claim, cf. Figure~\ref{fig:nop_free_ingredients}.
\end{proof}
\begin{figure}[t!]
\begin{minipage}{\textwidth}
\begin{center}
\begin{tikzpicture}[new set = import nodes]
\begin{scope}
\node (T) at (-0.5,0) {\nscale{$G^R=$}};
\foreach \i in {0,...,4} {\coordinate (\i) at (\i*1.1cm+0.2cm,0);}
\foreach \i in {1,4} {\fill[red!20] (\i) circle (0.2cm);}
\foreach \i in {0,2,3} {\node (t\i) at (\i) {\nscale{$0$}};}
\foreach \i in {1,4} {\node (t\i) at (\i) {\nscale{$1$}};   }
\graph { 
(t0) <->["\escale{$\swap$}"] (t1) <->["\escale{$\swap$}"] (t2) <->["\escale{$\free$}"] (t3) <->["\escale{$\swap$}"] (t4);
};
\end{scope}
\begin{scope}[xshift=6.5cm]
\node (T) at (-0.5,0) {\nscale{$F^R=$}};
\foreach \i in {0,...,4} {\coordinate (\i) at (\i*1.1cm+0.2cm,0);}
\foreach \i in {0,2} {\fill[red!20] (\i) circle (0.2cm);}
\foreach \i in {1,3,4} {\node (t\i) at (\i) {\nscale{$0$}};}
\foreach \i in {0,2} {\node (t\i) at (\i) {\nscale{$1$}};   }
\graph { 
(t0) <->["\escale{$\swap$}"] (t1) <->["\escale{$\swap$}"] (t2) <->["\escale{$\swap$}"] (t3) <->["\escale{$\free$}"] (t4); 
};
\end{scope}
\begin{scope}
\begin{scope}[yshift=-1.2cm]
\node (T) at (-0.5,0) {\nscale{$T^R=$}};
\foreach \i in {0,...,9} {\coordinate (\i) at (\i*1.15cm,0);}
\coordinate (10) at (1.15cm*9, -1.2cm);
\coordinate (11) at (1.15cm*8, -1.2cm);
\foreach \i in {2,5,7,9} {\fill[red!20] (\i) circle (0.2cm);}
\foreach \i in {0,1,3,4,6,8,10,11} {\node (t\i) at (\i) {\nscale{$0$}};}
\foreach \i in {2,5,7,9} {\node (t\i) at (\i) {\nscale{$1$}};   }
\graph { 
(t0) <->["\escale{$\free$}"] (t1) <->["\escale{$\swap$}"] (t2) <->["\escale{$\swap$}"] (t3) <->["\escale{$\free$}"] (t4) <->["\escale{$\swap$}"] (t5)<->["\escale{$\swap$}"](t6)<->["\escale{$\swap$}"](t7)<->["\escale{$\swap$}"](t8)<->["\escale{$\swap$}"](t9)<->["\escale{$\swap$}"](t10)<->["\escale{$\free$}"](t11);
};
\end{scope}
\begin{scope}[yshift=-2.4cm]
\node (T) at (-0.5,0) {\nscale{$Q^R=$}};
\foreach \i in {0,...,3} {\coordinate (\i) at (\i*1.15cm,0);}
\foreach \i in {2} {\fill[red!20] (\i) circle (0.2cm);}
\foreach \i in {0,1,3} {\node (t\i) at (\i) {\nscale{$0$}};}
\foreach \i in {2} {\node (t\i) at (\i) {\nscale{$1$}};   }
\graph { 
(t0) <->["\escale{$\free$}"] (t1) <->["\escale{$\swap$}"] (t2) <->["\escale{$\swap$}"] (t3);
};
\end{scope}
\end{scope}
\end{tikzpicture}
\end{center}
\caption{Illustrations for Lemma~\ref{lem:nop_free_ingredients}.
The images $G^R$, $F^R$, $T^R$ and $Q^R$, where $R=(sup, sig)$ is a $\{\swap,\free\}$-region that solves $(g_2,g_4)$ and satisfies $sig(k_0)=sig(X_0)=\free$ and $sig(k_1)=sig(v)=sig(w)=sig(X_1)=sig(X_2)=sig(v_i)=\swap$ for all $i\in \{0,\dots, 6\}$.}\label{fig:nop_free_ingredients}
\end{minipage}

\vspace{0.25cm}

\begin{minipage}{\textwidth}
\begin{center}
\begin{tikzpicture}[new set = import nodes]
\begin{scope}
\node (T) at (-0.5,0) {\nscale{$T^R=$}};
\foreach \i in {0,...,9} {\coordinate (\i) at (\i*1.15cm,0);}
\coordinate (10) at (1.15cm*9, -1.2cm);
\coordinate (11) at (1.15cm*8, -1.2cm);
\foreach \i in {2,5,6,9} {\fill[red!20] (\i) circle (0.2cm);}
\foreach \i in {0,1,3,4,7,8,10,11} {\node (t\i) at (\i) {\nscale{$0$}};}
\foreach \i in {2,5,6,9} {\node (t\i) at (\i) {\nscale{$1$}};   }
\graph { 
(t0) <->["\escale{$\free$}"] (t1) <->["\escale{$\swap$}"] (t2) <->["\escale{$\swap$}"] (t3) <->["\escale{$\free$}"] (t4) <->["\escale{$\swap$}"] (t5)<->["\escale{$\used$}"](t6)<->["\escale{$\swap$}"](t7)<->["\escale{$\free$}"](t8)<->["\escale{$\swap$}"](t9)<->["\escale{$\swap$}"](t10)<->["\escale{$\free$}"](t11);
};
\end{scope}
\end{tikzpicture}
\end{center}
\caption{Illustration for Lemma~\ref{lem:nop_free_ingredients}.
A $\{\swap,\used,\free\}$-region $R=(sup, sig)$, restricted to $T$, where $sig(k_0)=sig(X_0)=sig(X_2)=\free$, $sig(X_1)=\used$ and $sig(v_i)=\swap$ for all $i\in \{0,\dots,5\}$ and, hence, $\vert \{e\in \{X_0,X_1,X_2\}\mid sig(e)=\swap\}\vert =0$.
$R$ is not extendable to a region of a TS that has $T$ and $Q$ such that $sig(v_6)=\swap$, since $sig(v_6)=\swap$ would contradict $sup(q_0)=\dots=sup(q_3)=0$, which would be required by $sig(X_0)=sig(X_2)=\free$.}\label{fig:nop_free_ingredients_counterexample}
\end{minipage}
\end{figure}

Let $\varphi$ be an instance of \textsc{CM 1-in-3 3Sat} with the set of variables $X=\{X_0,\dots, X_{m-1}\}$ and $\varphi'$ its renamed copy with event set $X'=\{X_0',\dots, X_{m-1}'\}$.
In the following, we introduce the construction of $A_\varphi$.

Firstly, for every $i\in \{0,\dots, 7m-1\}$, the TS $A_i$ has the following gadgets $G_i$, $F_i$, $G_i'$ and $F_i'$ with starting states $g_{i,0}$, $f_{i,0}$, $g_{i,0}'$ and $f_{i,0}'$, respectively, providing the atom $\alpha=(g_{0,2}, g_{0,4})$:
\begin{center}
\begin{tikzpicture}[new set = import nodes]
\begin{scope}
\node (T) at (-0.4,0) {\nscale{$G_i=$}};
\foreach \i in {0,...,4} {\coordinate (\i) at (\i*1.15cm+0.2cm,0);}
\foreach \i in {0,...,4} {\node (t\i) at (\i) {\nscale{$g_{i,\i}$}};}
\graph { 
(t0) <->["\escale{$v_i$}"] (t1) <->["\escale{$w_i$}"] (t2) <->["\escale{$k_0$}"] (t3) <->["\escale{$k_1$}"] (t4);
};
\end{scope}
\begin{scope}[xshift=6.2cm]
\node (T) at (-0.4,0) {\nscale{$F_i=$}};
\foreach \i in {0,...,4} {\coordinate (\i) at (\i*1.15cm+0.2cm,0);}
\foreach \i in {0,...,4} {\node (t\i) at (\i) {\nscale{$f_{i,\i}$}};}
\graph { 
(t0) <->["\escale{$v_i$}"] (t1) <->["\escale{$w_i$}"] (t2) <->["\escale{$k_1$}"] (t3) <->["\escale{$k_0$}"] (t4);
};
\end{scope}
\begin{scope}[yshift=-1.2cm]
\begin{scope}
\node (T) at (-0.4,0) {\nscale{$G_i'=$}};
\foreach \i in {0,...,4} {\coordinate (\i) at (\i*1.15cm+0.2cm,0);}
\foreach \i in {0,...,4} {\node (t\i) at (\i) {\nscale{$g_{i,\i}'$}};}
\graph { 
(t0) <->["\escale{$v_i'$}"] (t1) <->["\escale{$w_i'$}"] (t2) <->["\escale{$k_1$}"] (t3) <->["\escale{$k_0$}"] (t4);
};
\end{scope}
\begin{scope}[xshift=6.2cm]
\node (T) at (-0.4,0) {\nscale{$F_i'=$}};
\foreach \i in {0,...,4} {\coordinate (\i) at (\i*1.15cm+0.2cm,0);}
\foreach \i in {0,...,4} {\node (t\i) at (\i) {\nscale{$f_{i,\i}'$}};}
\graph { 
(t0) <->["\escale{$v_i'$}"] (t1) <->["\escale{$w_i'$}"] (t2) <->["\escale{$k_0$}"] (t3) <->["\escale{$k_1$}"] (t4);
};
\end{scope}
\end{scope}
\end{tikzpicture}
\end{center}
Secondly,  for every $i\in \{0,\dots, m-1\}$, the TS $A_\varphi$ has the following gadgets $T_{i,0}, T_{i,1}$ and $T_{i,0}', T_{i,1}'$ that use the elements of $\zeta_i=\{X_{i_0},X_{i_1}, X_{i_2}\}$ and $\zeta_i'=\{X_{i_0}',X_{i_1}', X_{i_2}'\}$ as events, respectively; 
their starting states are $t_{i,0,0}$, $t_{i,1,0}$, $t_{i,0,0}'$ and $t_{i,1,0}'$:
\begin{center}
\begin{tikzpicture}[new set = import nodes]
\begin{scope}
\node (T) at (-0.8, 0) {\nscale{$T_{i,0}=$}};
\foreach \i in {0,...,7} {\coordinate (\i) at (\i*1.5cm,0);}
\foreach \i in {8,...,11} {\pgfmathparse{\i-8}\coordinate (\i) at (10.5cm-\pgfmathresult*1.6cm,-1.3);}
\foreach \i in {0,...,11} {\node (t\i) at (\i) {\nscale{$t_{i,0,\i}$}};}
\graph { 
(t0) <->["\escale{$k_0$}"] (t1) <->["\escale{$v_{7i}$}"] (t2) <->["\escale{$v_{7i+1}$}"] (t3) <->["\escale{$X_{i_0}$}"] (t4) <->["\escale{$v_{7i+2}$}"] (t5)<->["\escale{$X_{i_1}$}"](t6)<->["\escale{$v_{7i+3}$}"](t7)<->[swap, "\escale{$X_{i_2}$}"](t8)<->["\escale{$v_{7i+4}$}"](t9)<->["\escale{$v_{7i+5}$}"](t10)<->["\escale{$k_0$}"](t11);
};
\end{scope}
\begin{scope}[ yshift=-1.3cm]
\node (T) at (-0.8,0) {\nscale{$T_{i,1}=$}};
\foreach \i in {0,...,3} { \coordinate (\i) at (\i*1.4cm,0) ;}
\foreach \i in {0,...,3} {\node (t\i) at (\i) {\nscale{$t_{i,1,\i}$}};}
\graph { 
(t0) <->["\escale{$X_{i_0}$}"] (t1) <->["\escale{$v_{7i+6}$}"] (t2) <->["\escale{$X_{i_2}$}"] (t3);%
};
\end{scope}
\end{tikzpicture}
\end{center}
\begin{center}
\begin{tikzpicture}[new set = import nodes]
\begin{scope}
\node (T) at (-0.8, 0) {\nscale{$T_{i,0}'=$}};
\foreach \i in {0,...,7} {\coordinate (\i) at (\i*1.5cm,0);}
\foreach \i in {8,...,11} {\pgfmathparse{\i-8}\coordinate (\i) at (10.5cm-\pgfmathresult*1.7cm,-1.3);}
\foreach \i in {0,...,11} {\node (t\i) at (\i) {\nscale{$t_{i,0,\i}'$}};}
\graph { 
(t0) <->["\escale{$k_1$}"] (t1) <->["\escale{$v_{7i}'$}"] (t2) <->["\escale{$v_{7i+1}'$}"] (t3) <->["\escale{$X_{i_0}'$}"] (t4) <->["\escale{$v_{7i+2}'$}"] (t5)<->["\escale{$X_{i_1}'$}"](t6)<->["\escale{$v_{7i+3}'$}"](t7)<->[swap,"\escale{$X_{i_2}'$}"](t8)<->[ "\escale{$v_{7i+4}'$}"](t9)<->["\escale{$v_{7i+5}'$}"](t10)<->["\escale{$k_1$}"](t11);
};
\end{scope}
\begin{scope}[ yshift=-1.3cm]
\node (T) at (-0.8,0) {\nscale{$T_{i,1}'=$}};
\foreach \i in {0,...,3} { \coordinate (\i) at (\i*1.4cm,0) ;}
\foreach \i in {0,...,3} {\node (t\i) at (\i) {\nscale{$t_{i,1,\i}'$}};}
\graph { 
(t0) <->["\escale{$X_{i_0}'$}"] (t1) <->["\escale{$v_{7i+6}'$}"] (t2) <->["\escale{$X_{i_2}'$}"] (t3);%
};
\end{scope}
\end{tikzpicture}
\end{center}
Finally, the gadgets are connected via their starting states to finally build $A_\varphi$ as follows:
\begin{center}
\begin{tikzpicture}[scale=0.9,new set = import nodes]
\begin{scope}[nodes={set = import nodes}]
\node (0) at (0,0) {\nscales{$\iota$}};
\node (1) at (1.2,0) {\nscales{$\top_0$}};
\node (2) at (2.4,0) {\nscales{$\top_1$}};
\node (dots) at (3.1,0) {$\dots$};
\node (4) at (4,0) {\nscales{$\top_{14m-2}$}};
\node (5) at (6.2,0) {\nscales{$\top_{14m-1}$}};
\node (10) at (1.2,-1.2) {\nscales{$G_0$}};
\node (20) at (2.4,-1.2) {\nscales{$F_0$}};
\node (dots0) at (3.1,-1.2) {$\dots$};
\node (40) at (4,-1.2) {\nscales{$G_{7m-1}$}};
\node (50) at (6.2,-1.2) {\nscales{$F_{7m-1}$}};
\node (11) at (1.2,1.2) {\nscales{$G_0'$}};
\node (21) at (2.4,1.2) {\nscales{$F_0'$}};
\node (dots1) at (3.1,1.2) {$\dots$};
\node (41) at (4,1.2) {\nscales{$G_{7m-1}'$}};
\node (51) at (6.2,1.2) {\nscales{$F_{7m-1}'$}};
\node (_1) at (-1.2,0) {\nscales{$\bot_0$}};
\node (_2) at (-2.4,0) {\nscales{$\bot_1$}};
\node (_dots) at (-3.1,0) {$\dots$};
\node (_4) at (-4,0) {\nscales{$\bot_{2m-2}$}};
\node (_5) at (-6.2,0) {\nscales{$\bot_{2m-1}$}};
\node (_10) at (-1.2,-1.2) {\nscales{$T_{0,0}$}};
\node (_20) at (-2.4,-1.2) {\nscales{$T_{0,1}$}};
\node (_dots0) at (-3.1,-1.2) {$\dots$};
\node (_40) at (-4,-1.2) {\nscales{$T_{m-1,0}$}};
\node (_50) at (-6.2,-1.2) {\nscales{$T_{m-1,1}$}};
\node (_11) at (-1.2,1.2) {\nscales{$T_{0,0}'$}};
\node (_21) at (-2.4,1.2) {\nscales{$T_{0,1}'$}};
\node (_dots1) at (-3.1,1.2) {$\dots$};
\node (_41) at (-4,1.2) {\nscales{$T_{m-1,0}'$}};
\node (_51) at (-6.2,1.2) {\nscales{$T_{m-1,1}'$}};
\graph { (import nodes);
 0 <->["\escales{$\otimes_0$}"] 1<->["\escales{$\otimes_1$}"] 2; 
  4 <->["\escales{$\otimes_{14m-1}$}"] 5;
 0 <->[swap,"\escales{$\oplus_0$}"] _1<->[swap, "\escales{$\oplus_1$}"] _2;
   _4 <->[swap, "\escales{$\oplus_{2m-1}$}"] _5;
1 <->[swap, "\escales{$\odot_0$}"] 10;
2 <->[swap, "\escales{$\odot_1$}"] 20;
4 <->[swap, "\escales{$\odot_{14m-2}$}"] 40;
5 <->[swap, "\escales{$\odot_{14m-1}$}"] 50;
1 <->[ "\escales{$\odot_0'$}"] 11;
2 <->["\escales{$\odot_1'$}"] 21;
4 <->["\escales{$\odot_{14m-2}'$}"] 41;
5 <->[ "\escales{$\odot_{14m-1}'$}"] 51;
_1 <->["\escales{$\ominus_0$}"] _10;
_2 <->[ "\escales{$\ominus_1$}"] _20;
_4 <->["\escales{$\ominus_{2m-2}$}"] _40;
_5 <->[ "\escales{$\ominus_{2m-1}$}"] _50;
_1 <->[ swap, "\escales{$\ominus_0'$}"] _11;
_2 <->[swap, "\escales{$\ominus_1'$}"] _21;
_4 <->[swap, "\escales{$\ominus_{2m-2}'$}"] _41;
_5 <->[swap,  "\escales{$\ominus_{2m-1}'$}"] _51;
};
\end{scope}
\end{tikzpicture}
\end{center}

The following Lemma~\ref{lem:nop_inp_ssp_implies_model} and Lemma~\ref{lem:nop_free_model_implies_ssp} prove the validity of our polynomial-time reduction and, hence, complete the proof of Theorem~\ref{the:nop_free}.

\begin{lemma}\label{lem:nop_inp_ssp_implies_model}
If $A_\varphi$ has the $\tau$-SSP, then $\varphi$ is one-in-three satisfiable.
\end{lemma}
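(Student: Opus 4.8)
The plan is to read a one-in-three model directly off the signature of a region solving the distinguished atom, invoking the three parts of Lemma~\ref{lem:nop_free_ingredients} as black boxes applied to the gadgets of $A_\varphi$ through the evident isomorphisms to the prototypes $G$, $F$, $T$, $Q$. First note that $A_\varphi$ is bi-directed and loop-free, so every region $R=(sup,sig)$ satisfies $sig(e)\notin\{\inp,\out\}$ for all $e$. Assuming $A_\varphi$ has the $\tau$-SSP, fix a $\tau$-region $R=(sup,sig)$ solving $\alpha=(g_{0,2},g_{0,4})$. Since $G_0$ (resp.\ $F_0$) is isomorphic to the prototype $G$ (resp.\ $F$), and $k_0,k_1$ are global events shared by all gadgets, Lemma~\ref{lem:nop_free_ingredients}.\ref{lem:nop_free_ingredients_alpha} yields the dichotomy: either $sig(k_0)\in\save$ and $sig(k_1)=\swap$, or $sig(k_0)=\swap$ and $sig(k_1)\in\save$. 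As $k_0,k_1$ are global, the chosen case holds uniformly throughout $A_\varphi$.

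Next I force every auxiliary event to be $\swap$. For each $i\in\{0,\dots,7m-1\}$ the pair $(G_i,F_i)$ is isomorphic to $(G,F)$ and shares $v_i,w_i,k_0,k_1$, so Lemma~\ref{lem:nop_free_ingredients}.\ref{lem:nop_free_ingredients_swaps} gives $sig(v_i)=sig(w_i)=\swap$; the pair $(F_i',G_i')$ is likewise isomorphic to $(G,F)$ and shares $v_i',w_i',k_0,k_1$, giving $sig(v_i')=sig(w_i')=\swap$. In particular $sig(v_j)=sig(v_j')=\swap$ for every $j\in\{0,\dots,7m-1\}$. The feature that makes this cross-gadget argument work without any connecting path is Lemma~\ref{lem:swap_basics}.\ref{lem:swap_basics_keep}: a $\save$-signed global event has globally constant support, which is precisely what ties $g_{i,2}$ to $f_{i,3}$ (and their primed analogues) inside the proof of Lemma~\ref{lem:nop_free_ingredients}.\ref{lem:nop_free_ingredients_swaps}.

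Now I split on the dichotomy. Suppose $sig(k_0)\in\save$. For each clause index $i\in\{0,\dots,m-1\}$ the union $T_{i,0}\cup T_{i,1}$ is isomorphic to $T\cup Q$ under $v_\ell\mapsto v_{7i+\ell}$ $(0\le\ell\le 6)$ and $X_\ell\mapsto X_{i_\ell}$, with $k_0$ occurring at both ends of $T_{i,0}$. Since $sig(k_0)\in\save$ and $sig(v_{7i+\ell})=\swap$ for $0\le\ell\le 6$ by the previous step, Lemma~\ref{lem:nop_free_ingredients}.\ref{lem:nop_free_ingredients_model} shows that exactly one of $X_{i_0},X_{i_1},X_{i_2}$ has signature different from $\swap$. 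Because $sig$ is a single well-defined function on events, the set $M=\{e\in X\mid sig(e)\neq\swap\}$ meets every clause in exactly one variable, i.e.\ $M$ is a one-in-three model of $\varphi$. In the symmetric case $sig(k_1)\in\save$, the identical argument applied to the primed gadgets $T_{i,0}'\cup T_{i,1}'$ (in which $k_1$ plays the role of $k_0$) produces a one-in-three model of $\varphi'$; as $\varphi'$ is one-in-three satisfiable if and only if $\varphi$ is, $\varphi$ is one-in-three satisfiable in either case.

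The main obstacle is not a calculation but the orchestration of this case split: solving a single edge in a $\nop$-free $\swap$-type cannot fix the \emph{orientation} of support, so part~\ref{lem:nop_free_ingredients_alpha} only tells us that one of $k_0,k_1$ is $\save$ and the other $\swap$, without revealing which. The construction must guarantee that the branch with $sig(k_0)\in\save$ activates exactly the unprimed clause-gadgets (yielding a model of $\varphi$) while the branch with $sig(k_1)\in\save$ activates exactly the primed ones (yielding a model of $\varphi'$). Making this precise amounts to checking that the prototype isomorphisms respect the global events $k_0,k_1,v_j,X_j$ and that the restriction of $R$ to each gadget is again a $\tau$-region, after which Lemma~\ref{lem:nop_free_ingredients} carries the entire argument.
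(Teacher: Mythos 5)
Your proposal is correct and takes essentially the same route as the paper's own proof: fix a region solving $\alpha=(g_{0,2},g_{0,4})$, apply Lemma~\ref{lem:nop_free_ingredients}.\ref{lem:nop_free_ingredients_alpha} and~\ref{lem:nop_free_ingredients_swaps} through the gadget-to-prototype isomorphisms to obtain the $k_0/k_1$ dichotomy and force all $v$-, $w$-events (primed and unprimed) to \swap, and then use Lemma~\ref{lem:nop_free_ingredients}.\ref{lem:nop_free_ingredients_model} on the $T$-gadgets to extract $M=\{e\in X\mid sig(e)\neq\swap\}$ (or $M'$ for $\varphi'$) as a one-in-three model. Your explicit bookkeeping of the role exchange of $k_0$ and $k_1$ in the primed gadgets is left implicit in the paper's appeal to the isomorphisms $G\cong G_i'$, $F\cong F_i'$, $T\cong T_{i,0}'$, but the argument is the same.
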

\begin{proof}
Let $G$, $F$, $T$ and $Q$ be the paths defined in Lemma~\ref{lem:nop_free_ingredients}.
First of all, we observe that $G\cong G_{7i+j}\cong G_{7i+j}'$ and $F\cong F_{7i+j}\cong F_{7i+j}'$ and $T\cong T_{i,0} \cong T_{i,0}'$ and $Q\cong  T_{i,1}\cong T_{i,1}'$ for all $i\in \{0,\dots, m-1\}$ and $j\in \{0,\dots,6\}$.
Let $R=(sup, sig)$ be a $\tau$-region that solves $\alpha=(g_{0,2}, g_{0,4})$ (which exists, since $A_\varphi$ has the $\tau$-SSP) and let $i\in \{0,\dots, m-1\}$ be arbitrary but fixed.
By Lemma~\ref{lem:nop_free_ingredients}.\ref{lem:nop_free_ingredients_alpha}, we have either $sig(k_0)\in \save$ and $sig(k_1)=\swap$ or $sig(k_0)=\swap$ and $sig(k_1)\in \save$.
This implies $sig(u_{7i})=\dots=sig(u_{7i+6})=sig(u_{7i}')=\dots=sig(u_{7i+6}')=\swap$ by Lemma~\ref{lem:nop_free_ingredients}.\ref{lem:nop_free_ingredients_swaps}.
If $sig(k_0)\in \save$ and $sig(k_1)=\swap$, then by Lemma~\ref{lem:nop_free_ingredients}.\ref{lem:nop_free_ingredients_model}, this implies that there is exactly one event $e\in \{X_{i_0}, X_{i_1}, X_{i_2}\}$ such that $sig(e)\not=\swap$.
Consequently, since $i$ was arbitrary, if $sig(k_0)\in \save$ and $sig(k_1)=\swap$, then $M=\{e\in X \mid sig(e)\not=\swap\}$ selects exactly one variable of every clause $\zeta_i$ for all $i\in \{0,\dots, m-1\}$.
Thus, $M$ is a one in three-model of $\varphi$.
Otherwise, if $sig(k_0)=\swap$ and $sig(k_1)\in \save$, then we similarly obtain that $M'=\{e\in X' \mid sig(e)\not=\swap\}$ defines a one-in-three-model of $\varphi'$, which also implies the one-in-three satisfiability of $\varphi$.
\end{proof}
\begin{lemma}\label{lem:nop_free_model_implies_ssp}
If $\varphi$ is one-in-three satisfiable, then $A_\varphi$ has the $\tau$-SSP.
\end{lemma}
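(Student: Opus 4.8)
The plan is to mirror the structure of the proof of Lemma~\ref{lem:res_set_free_model_implies_ssp}: starting from a fixed one-in-three model $M$ of $\varphi$ (hence a model $M'=\{X'\mid X\in M\}$ of $\varphi'$), I would exhibit an explicit $\tau$-separative set $\mathcal{R}$, that is, a family of $\tau$-regions such that every SSP atom of $A_\varphi$ is solved by at least one member. First I fix a save interaction $z\in\tau\cap\save$; by Lemma~\ref{lem:isomorphic_types} together with the inside/outside isomorphism (which exchanges $\res\leftrightarrow\set$ and $\used\leftrightarrow\free$ and fixes $\swap$) I may assume $z\in\{\res,\free\}$. In the bi-directed, $\nop$-free setting this means that every $z$-signed event connects two outside ($0$) states, while every $\swap$-signed event connects two states of different support. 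Recalling that a region is fixed by $sup(\iota)$ and $sig$, such data yield a valid $\tau$-region precisely when every event is homogeneous (all its occurrences of the same edge-type) and the induced set of $1$-states is independent; this is the only global condition I must verify for each candidate region.

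The central region is the model region $R_M=(sup,sig)$ solving $\alpha=(g_{0,2},g_{0,4})$. I would set $sig(k_0)=z$, $sig(k_1)=\swap$, $sig(X)=z$ for $X\in M$ and $sig(X)=\swap$ for every variable event $X\notin M$, assign $\swap$ to all auxiliary events $v_j,w_j$ and to every primed variable event, and choose the connector signatures so that the hub states $\iota,\top_j,\bot_j$ all receive support $0$. The crucial point is well-definedness on the clause gadgets $T_{i,0}$ and $T_{i,1}$: since $k_0=z$ pins the endpoints of both $k_0$-edges of $T_{i,0}$ to $0$, the number of $\swap$-events along its inner chain must be even by Lemma~\ref{lem:swap_basics}.\ref{lem:swap_basics_even}. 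The one-in-three property supplies exactly one save-signed variable per clause and hence exactly two $\swap$-signed ones, which together with the six $\swap$-signed $v$'s yields the required even count and simultaneously places the save-signed variable at an outside state, exactly the converse of Lemma~\ref{lem:nop_free_ingredients}.\ref{lem:nop_free_ingredients_model} and as depicted in Figure~\ref{fig:nop_free_ingredients}. Thus $R_M$ is a valid $\tau$-region, it solves $\alpha$, and it already separates every atom whose two states receive distinct support.

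For the remaining atoms I would construct a bounded family of local regions, one scheme per atom type (hub/hub, hub/gadget, two states inside a single gadget, and two states in different gadgets), in direct analogy with the regions $R_1,R^T_i,R^G_i,R^X_{i_0}$ of Lemma~\ref{lem:res_set_free_model_implies_ssp}. Each such region colours almost all states $0$ and toggles, via $\swap$ along a short path emanating from a suitable hub, a sparse independent set of states to $1$, while keeping $k_0$, $k_1$ and every $z$-signed event inside the $0$-region. Since all gadgets are isomorphic to the prototypes $G,F,T,Q$ and $A_\varphi$ is bi-directed, Lemma~\ref{lem:swap_basics} guarantees that these colourings extend consistently to $\tau$-regions and that they separate the intended states. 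Iterating over the finitely many atom types and using the arbitrariness of the gadget index, as in Lemma~\ref{lem:res_set_free_model_implies_ssp}, then shows that $\mathcal{R}$ is $\tau$-separative.

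The hard part will be the global consistency bookkeeping rather than any individual separation: because $k_0$, $k_1$ and the events $v_{7i},\dots,v_{7i+6}$ are \emph{shared} across the $G/F$-gadgets and the $T$-gadgets, fixing a signature in one gadget propagates it everywhere, so I must check that no two gadgets impose contradictory support constraints and that the independence/outside condition for $z$ is met at \emph{every} occurrence of a save-signed event. The one-in-three hypothesis is precisely what reconciles the even-parity requirements of all clause gadgets simultaneously; once $R_M$ is shown to be well defined, the local regions are routine but each must still be verified against these same shared events. I expect this simultaneous consistency check, together with the enumeration of the atom types, to constitute the bulk of the argument.
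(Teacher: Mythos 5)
Your plan coincides with the paper's actual proof in all of its structural choices: the reduction, without loss of generality, to $\{\free,\swap\}$-regions (with $\free$ replaced by $\res$ and Lemma~\ref{lem:isomorphic_types} invoked for the $\used$/$\set$ variants) is exactly the paper's opening move; your model region $R_M$ is, up to notation, the paper's region $R^g_{7i+j,2}$, whose instance at $i=j=0$ solves $\alpha=(g_{0,2},g_{0,4})$, and your parity count (one $\save$-signed variable per clause, hence two $\swap$-signed variables plus six $\swap$-signed $v$-events along each $T_{i,0}$) is the correct well-definedness argument for it; and your hub-versus-gadget separation scheme matches the paper's pair $R_1$, $R_2$, which differ only in the signature of one connector event $a$ and hence toggle the support of exactly one gadget.

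The genuine gap is in the intra-gadget atoms of the clause paths $T_{i,0}$, $T_{i,1}$, which your proposal dismisses as ``routine''. To separate, say, $t_{i,0,3}$ from the remaining states of $T_{i,0}$ one is forced to give the variable event $X_{i_0}$ a $\swap$ signature while keeping the hub states at support $0$; but $X_{i_0}$ occurs in three clauses, so it has up to five further edges in foreign gadgets, and each of these flips the support of some state there. Consequently the $\swap$-signed events of such a region can \emph{not} all be kept ``inside the $0$-region'' as your description of the local regions demands, and Lemma~\ref{lem:swap_basics} cannot rescue this: it provides necessary constraints on regions, not an extension or sufficiency statement. The paper resolves precisely this point with a dedicated observation: for every edge $s\edge{X_{i_0}}s'$ other than $t_{i,0,3}\edge{X_{i_0}}t_{i,0,4}$, one of its endpoints is incident to a $v$-event, and assigning that $v$-event (together with its partner $w$-event, to keep the corresponding $G$/$F$-gadgets consistent) the signature $\swap$ absorbs the support change locally; this yields the well-defined region $R^t_{i,0,3}$, and the four boundary states $t_{i,0,0},t_{i,0,1},t_{i,0,10},t_{i,0,11}$ are then separated by regions of the $R^g$-family already constructed. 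You correctly flag the shared-event bookkeeping as the hard part, but you supply no mechanism for it; the $v$/$w$-absorption idea is the one missing ingredient that turns your outline into the paper's proof.
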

The proof of Lemma~\ref{lem:nop_free_model_implies_ssp} is shifted to the appendix, Section~\ref{sec:nop_free_model_implies_ssp}.

\section{Conclusion}
\label{sec:conclusion}
In this paper, we present the overall characterization of the computational complexity of the problem $\tau$-SSP for all 256 Boolean types of nets $\tau$.
Our presentation includes 154 new complexity results (Figure~\ref{fig:overview}: \S1 - \S3, \S6, \S9, \S10) and 102 known results (Figure~\ref{fig:overview}: \S4~\cite{DBLP:conf/tamc/TredupR19}, \S5~\cite{DBLP:conf/apn/Tredup19,DBLP:conf/apn/TredupRW18}, \S7~\cite{DBLP:conf/stacs/Schmitt96,DBLP:conf/tamc/TredupR19}, \S8~\cite{DBLP:conf/apn/TredupE20}) and classifies them in the overall context of boolean state separation.
Besides the new 150 hardness- and 4 tractability-results, this classification is one of the main contributions of this paper.
First of all, it becomes apparent that the distinction between \nop-free and \nop-equipped types is meaningful:
Within the class of \nop-free types, $\tau$-SSP turns out to be NP-complete if and only if $\swap\in \tau$ and $\tau\cap\save\not=\emptyset$.
Within the class of \nop-equipped types, a differentiation between types $\tau$ that satisfy $\tau\cap\{\res,\set\}=\emptyset$ and the ones with $\tau\cap\{\res,\set\}\not=\emptyset$ is useful:
$\tau$-SSP for the former ones is NP-complete if and only if $\tau\cap\{\inp,\out\}\not=\emptyset$ and $\swap\not\in\tau$.
In particular, $\{\swap\}\cup\tau$-SSP becomes polynomial for all these types.
On the other hand, for the latter ones, which include $i\in \{\res,\set\}$ such that $i\in \tau$, $\tau$-SSP is NP-complete as long as there is also an interaction in $\tau$ that opposes $i$, that is, $\tau\cap\exit\not=\emptyset$ if $i=\set$ and $\tau\cap\enter\not=\emptyset$ if $i=\res$.

Moreover, our proofs discover that, up to isomorphism, there are essentially four hard kernels that indicate the NP-completeness of $\tau$-SSP, namely $\{\nop,\inp\}$ and $\{\nop,\inp,\out\}$ for the \nop-equipped types and $\{\swap,\free\}$ and $\{\swap,\free,\used\}$ for the \nop-free types.
That means, for a \nop-equipped type $\tau$, the hardness of $\tau$-SSP can either be shown by a reduction of $\{\nop,\inp\}$-SSP or $\{\nop,\inp,\out\}$-SSP (or their isomorphic types), which is basically done in the proof of Lemma~\ref{lem:extensions}, or $\tau$-SSP is polynomial othwerwise.
Similarly, one finds out that for the \nop-free types $\tau$ in question, the hardness of $\tau$-SSP can be shown by a reduction of $\{\swap,\free\}$-SSP or $\{\swap,\free,\used\}$-SSP. 
Due to the space limitation, a reduction that covers \emph{all} hard \nop-free types on one blow is given instead of explicit proof.

For future work, it remains to completely characterize the computational complexity of deciding if a TS $A$ is isomorphic to the reachability graph of a Boolean Petri net, instead of only being embeddable by an injective simulation map.

\newpage
\bibliography{myBibliography}%

\newpage
\begin{appendix}

\section{Supporting Illustrations for the Proof of Lemma~\ref{lem:res_set_free_model_implies_ssp}. The depicted TS $A_\varphi$ originates from $\varphi$ of Example~\ref{ex:varphi}.}\label{sec:res_set_free_model_implies_ssp}


\begin{figure}[H]
\begin{minipage}{\textwidth}
\begin{center} 
\begin{tikzpicture}[new set = import nodes]

\begin{scope}[nodes={set=import nodes}]
		\foreach \i in {0,...,4} { \coordinate (\i) at (\i*1.4cm,0) ;}
		\coordinate (top) at (4*1.5cm, 3*1.2cm) ;
		\foreach \i in {0,...,3,top} {\fill[red!20] (\i) circle (0.35cm);}
		\foreach \i in {0} { \node (t0\i) at (\i) {\nscale{$t_{0,\i}$}};}
		\foreach \i in {1,...,3} { \node (\i) at (\i) {\nscale{$t_{0,\i}$}};}
		\node (top) at (top) {\nscale{$\top$}};
		\graph {
	(import nodes);
			t00 ->["\escale{$X_0$}"]1->["\escale{$X_1$}"]2->["\escale{$X_2$}"]3->[swap, bend right=25,"\escale{$u_0$}"]top;
		};
\end{scope}
\begin{scope}[yshift=1.2cm, nodes={set=import nodes}]
		\foreach \i in {0,...,3} { \coordinate (\i) at (\i*1.4cm,0) ;}
		\foreach \i in {0,...,3} {\fill[red!20] (\i) circle (0.35cm);}
		\foreach \i in {0} { \node (t1\i) at (\i) {\nscale{$t_{1,\i}$}};}
		\foreach \i in {1,...,3} { \node (\i) at (\i) {\nscale{$t_{1,\i}$}};}
		\graph {
	(import nodes);
			t10 ->["\escale{$X_0$}"]1->["\escale{$X_2$}"]2->["\escale{$X_3$}"]3->[swap, bend right=10,"\escale{$u_1$}"]top;
		};
\end{scope}
\begin{scope}[yshift=2.4cm, nodes={set=import nodes}]
		\foreach \i in {0,...,3} { \coordinate (\i) at (\i*1.4cm,0) ;}
		\foreach \i in {0,...,3} {\fill[red!20] (\i) circle (0.35cm);}
		\foreach \i in {0} { \node (t2\i) at (\i) {\nscale{$t_{2,\i}$}};}
		\foreach \i in {1,...,3} { \node (\i) at (\i) {\nscale{$t_{2,\i}$}};}
		\graph {
	(import nodes);
			t20 ->["\escale{$X_0$}"]1->["\escale{$X_1$}"]2->["\escale{$X_3$}"]3->["\escale{$u_2$}"]top;
		};
\end{scope}
\begin{scope}[yshift=3.6cm, nodes={set=import nodes}]
		\foreach \i in {0,...,3} { \coordinate (\i) at (\i*1.4cm,0) ;}
		\foreach \i in {0,...,3} {\fill[red!20] (\i) circle (0.35cm);}
		\foreach \i in {0} { \node (t3\i) at (\i) {\nscale{$t_{3,\i}$}};}
		\foreach \i in {1,...,3} { \node (\i) at (\i) {\nscale{$t_{3,\i}$}};}
		\graph {
	(import nodes);
			t30 ->["\escale{$X_2$}"]1->["\escale{$X_4$}"]2->["\escale{$X_5$}"]3->["\escale{$u_3$}"]top;
		};
\end{scope}
\begin{scope}[yshift=4.8cm, nodes={set=import nodes}]
		\foreach \i in {0,...,3} { \coordinate (\i) at (\i*1.4cm,0) ;}
		\foreach \i in {0,...,3} {\fill[red!20] (\i) circle (0.35cm);}
		\foreach \i in {0} { \node (t4\i) at (\i) {\nscale{$t_{4,\i}$}};}
		\foreach \i in {1,...,3} { \node (\i) at (\i) {\nscale{$t_{4,\i}$}};}
		\graph {
	(import nodes);
			t40 ->["\escale{$X_1$}"]1->["\escale{$X_4$}"]2->["\escale{$X_5$}"]3->["\escale{$u_4$}"]top;
		};
\end{scope}
\begin{scope}[yshift=6cm, nodes={set=import nodes}]
		\foreach \i in {0,...,3} { \coordinate (\i) at (\i*1.4cm,0) ;}
		\foreach \i in {0,...,3} {\fill[red!20] (\i) circle (0.35cm);}
		\foreach \i in {0,...,3} { \node (t5\i) at (\i) {\nscale{$t_{5,\i}$}};}
		\foreach \i in {1,...,3} { \node (\i) at (\i) {\nscale{$t_{5,\i}$}};}
		\graph {
	(import nodes);
			t50 ->["\escale{$X_3$}"]1->["\escale{$X_4$}"]2->["\escale{$X_5$}"]3->[bend left =10,"\escale{$u_5$}"]top;
		};
\end{scope}
\begin{scope}[yshift=7.2cm, nodes={set=import nodes}]%
		\foreach \i in {0,1} { \coordinate (\i) at (\i*4cm,0) ;}
		\foreach \i in {0,1} {\fill[red!20] (\i) circle (0.35cm);}
		\node (t60) at (0) {\nscale{$t_{6,0}$}};
		\node (1) at (1) {\nscale{$t_{7,0}$}};
		\graph {
	(import nodes);
			t60 ->["\escale{$k$}"]1->[bend left =40,"\escale{$v$}"]top;
		};
\end{scope}
\begin{scope}[yshift=7.2cm, nodes={set=import nodes}]%
		\graph {
	(import nodes);
			t00 ->["\escale{$w_0$}"]t10 ->["\escale{$w_1$}"]t20->["\escale{$w_2$}"]t30 ->["\escale{$w_3$}"]t40 ->["\escale{$w_4$}"]t50 ->["\escale{$w_5$}"]t60 ;
		};
\end{scope}
\begin{scope}[nodes={set=import nodes}]
		\foreach \i in {0,...,2} { \coordinate (\i) at (-\i*1.2cm-2.5cm,0) ;}
		\foreach \i in {0} { \node (g0\i) at (\i) {\nscale{$g_{0,\i}$}};}
		\foreach \i in {1,2} { \node (\i) at (\i) {\nscale{$g_{0,\i}$}};}
		\graph {
	(import nodes);
			g00 ->[swap, "\escale{$u_0$}"]1->[swap, "\escale{$k$}"]2;
		};
\end{scope}
\begin{scope}[yshift=1.2cm,nodes={set=import nodes}]
		\foreach \i in {0,...,2} { \coordinate (\i) at (-\i*1.2cm-2.5cm,0) ;}
		\foreach \i in {0} { \node (g1\i) at (\i) {\nscale{$g_{1,\i}$}};}
		\foreach \i in {1,2} { \node (\i) at (\i) {\nscale{$g_{1,\i}$}};}
		\graph {
	(import nodes);
			g10 ->[swap, "\escale{$u_1$}"]1->[swap, "\escale{$k$}"]2;
		};
\end{scope}
\begin{scope}[yshift=2.4cm,nodes={set=import nodes}]
		\foreach \i in {0,...,2} { \coordinate (\i) at (-\i*1.2cm-2.5cm,0) ;}
		\foreach \i in {0} { \node (g2\i) at (\i) {\nscale{$g_{2,\i}$}};}
		\foreach \i in {1,2} { \node (\i) at (\i) {\nscale{$g_{2,\i}$}};}
		\graph {
	(import nodes);
			g20 ->[swap, "\escale{$u_2$}"]1->[swap, "\escale{$k$}"]2;
		};
\end{scope}
\begin{scope}[yshift=3.6cm,nodes={set=import nodes}]
		\foreach \i in {0,...,2} { \coordinate (\i) at (-\i*1.2cm-2.5cm,0) ;}
		\foreach \i in {0} { \node (g3\i) at (\i) {\nscale{$g_{3,\i}$}};}
		\foreach \i in {1,2} { \node (\i) at (\i) {\nscale{$g_{3,\i}$}};}
		\graph {
	(import nodes);
			g30 ->[swap, "\escale{$u_3$}"]1->[swap, "\escale{$k$}"]2;
		};
\end{scope}
\begin{scope}[yshift=4.8cm,nodes={set=import nodes}]
		\foreach \i in {0,...,2} { \coordinate (\i) at (-\i*1.2cm-2.5cm,0) ;}
		\foreach \i in {0} { \node (g4\i) at (\i) {\nscale{$g_{4,\i}$}};}
		\foreach \i in {1,2} { \node (\i) at (\i) {\nscale{$g_{4,\i}$}};}
		\graph {
	(import nodes);
			g40 ->[swap, "\escale{$u_4$}"]1->[swap, "\escale{$k$}"]2;
		};
\end{scope}
\begin{scope}[yshift=6cm,nodes={set=import nodes}]
		\foreach \i in {0,...,2} { \coordinate (\i) at (-\i*1.2cm-2.5cm,0) ;}
		\foreach \i in {0} { \node (g5\i) at (\i) {\nscale{$g_{5,\i}$}};}
		\foreach \i in {1,2} { \node (\i) at (\i) {\nscale{$g_{5,\i}$}};}
		\graph {
	(import nodes);
			g50 ->[swap, "\escale{$u_5$}"]1->[swap, "\escale{$k$}"]2;
		};
\end{scope}
\begin{scope}[yshift=6cm,nodes={set=import nodes}]
		
		\graph {
	(import nodes);
			t00 ->[pos=0.85,"\escale{$y_0$}"]g00;
		};
\end{scope}
\coordinate (c0) at (-1.7,0);
\coordinate (c01) at (-1.7,1.2);
\draw[->] (t00)--(c0)--(c01)--(g10) node [pos=0.4, above] {\escale{$y_1$}};
\coordinate (c1) at (-1.5,0);
\coordinate (c11) at (-1.5,2.4);
\draw[->] (t00)--(c1)--(c11)--(g20) node [pos=0.4, above] {\escale{$y_2$}};
\coordinate (c2) at (-1.3,0);
\coordinate (c21) at (-1.3,3.6);
\draw[->] (t00)--(c2)--(c21)--(g30) node [pos=0.4, above] {\escale{$y_3$}};
\coordinate (c3) at (-1.1,0);
\coordinate (c31) at (-1.1,4.8);
\draw[->] (t00)--(c3)--(c31)--(g40) node [pos=0.4, above] {\escale{$y_4$}};
\coordinate (c4) at (-0.9,0);
\coordinate (c41) at (-0.9,6);
\draw[->] (t00)--(c4)--(c41)--(g50) node [pos=0.4, above] {\escale{$y_5$}};

\end{tikzpicture}
\end{center}
\caption{The region $R_1$ of Lemma~\ref{lem:res_set_free_model_implies_ssp}.}\label{fig:R_1}
\end{minipage}
\begin{minipage}{\textwidth}
\begin{center} 
\begin{tikzpicture}[new set = import nodes]
\begin{scope}[nodes={set=import nodes}]
		\foreach \i in {0,...,3} { \coordinate (\i) at (\i*1.4cm,0) ;}
		\foreach \i in {0,...,3} {\fill[red!20] (\i) circle (0.35cm);}
		\foreach \i in {0} { \node (t0\i) at (\i) {\nscale{$t_{0,\i}$}};}
		\foreach \i in {1,...,3} { \node (\i) at (\i) {\nscale{$t_{0,\i}$}};}
		\node (top) at (4*1.5cm, 3*1.2cm) {\nscale{$\top$}};
		\graph {
	(import nodes);
			t00 ->["\escale{$X_0$}"]1->["\escale{$X_1$}"]2->["\escale{$X_2$}"]3->[swap, bend right=25,"\escale{$u_0$}"]top;
		};
\end{scope}
\begin{scope}[yshift=1.2cm, nodes={set=import nodes}]
		\foreach \i in {0,...,3} { \coordinate (\i) at (\i*1.4cm,0) ;}
		\foreach \i in {0,...,3} {\fill[red!20] (\i) circle (0.35cm);}
		\foreach \i in {0} { \node (t1\i) at (\i) {\nscale{$t_{1,\i}$}};}
		\foreach \i in {1,...,3} { \node (\i) at (\i) {\nscale{$t_{1,\i}$}};}
		\graph {
	(import nodes);
			t10 ->["\escale{$X_0$}"]1->["\escale{$X_2$}"]2->["\escale{$X_3$}"]3->[swap, bend right=10,"\escale{$u_1$}"]top;
		};
\end{scope}
\begin{scope}[yshift=2.4cm, nodes={set=import nodes}]
		\foreach \i in {0,...,3} { \coordinate (\i) at (\i*1.4cm,0) ;}
		\foreach \i in {0} { \node (t2\i) at (\i) {\nscale{$t_{2,\i}$}};}
		\foreach \i in {1,...,3} { \node (\i) at (\i) {\nscale{$t_{2,\i}$}};}
		\graph {
	(import nodes);
			t20 ->["\escale{$X_0$}"]1->["\escale{$X_1$}"]2->["\escale{$X_3$}"]3->["\escale{$u_2$}"]top;
		};
\end{scope}
\begin{scope}[yshift=3.6cm, nodes={set=import nodes}]
		\foreach \i in {0,...,3} { \coordinate (\i) at (\i*1.4cm,0) ;}
		\foreach \i in {0} { \node (t3\i) at (\i) {\nscale{$t_{3,\i}$}};}
		\foreach \i in {1,...,3} { \node (\i) at (\i) {\nscale{$t_{3,\i}$}};}
		\graph {
	(import nodes);
			t30 ->["\escale{$X_2$}"]1->["\escale{$X_4$}"]2->["\escale{$X_5$}"]3->["\escale{$u_3$}"]top;
		};
\end{scope}
\begin{scope}[yshift=4.8cm, nodes={set=import nodes}]
		\foreach \i in {0,...,3} { \coordinate (\i) at (\i*1.4cm,0) ;}
		\foreach \i in {0} { \node (t4\i) at (\i) {\nscale{$t_{4,\i}$}};}
		\foreach \i in {1,...,3} { \node (\i) at (\i) {\nscale{$t_{4,\i}$}};}
		\graph {
	(import nodes);
			t40 ->["\escale{$X_1$}"]1->["\escale{$X_4$}"]2->["\escale{$X_5$}"]3->["\escale{$u_4$}"]top;
		};
\end{scope}
\begin{scope}[yshift=6cm, nodes={set=import nodes}]
		\foreach \i in {0,...,3} { \coordinate (\i) at (\i*1.4cm,0) ;}
		\foreach \i in {0} { \node (t5\i) at (\i) {\nscale{$t_{5,\i}$}};}
		\foreach \i in {1,...,3} { \node (\i) at (\i) {\nscale{$t_{5,\i}$}};}
		\graph {
	(import nodes);
			t50 ->["\escale{$X_3$}"]1->["\escale{$X_4$}"]2->["\escale{$X_5$}"]3->[bend left =10,"\escale{$u_5$}"]top;
		};
\end{scope}
\begin{scope}[yshift=7.2cm, nodes={set=import nodes}]%
		\foreach \i in {0,1} { \coordinate (\i) at (\i*4cm,0) ;}
		\node (t60) at (0) {\nscale{$t_{6,0}$}};
		\node (1) at (1) {\nscale{$t_{7,0}$}};
		\graph {
	(import nodes);
			t60 ->["\escale{$k$}"]1->[bend left =40,"\escale{$v$}"]top;
		};
\end{scope}
\begin{scope}[yshift=7.2cm, nodes={set=import nodes}]%
		\graph {
	(import nodes);
			t00 ->["\escale{$w_0$}"]t10 ->["\escale{$w_1$}"]t20->["\escale{$w_2$}"]t30 ->["\escale{$w_3$}"]t40 ->["\escale{$w_4$}"]t50 ->["\escale{$w_5$}"]t60 ;
		};
\end{scope}
\begin{scope}[nodes={set=import nodes}]
		\foreach \i in {0,...,2} { \coordinate (\i) at (-\i*1.2cm-2.5cm,0) ;}
		\foreach \i in {0} {\fill[red!20] (\i) circle (0.35cm);}
		\foreach \i in {0} { \node (g0\i) at (\i) {\nscale{$g_{0,\i}$}};}
		\foreach \i in {1,2} { \node (\i) at (\i) {\nscale{$g_{0,\i}$}};}
		\graph {
	(import nodes);
			g00 ->[swap, "\escale{$u_0$}"]1->[swap, "\escale{$k$}"]2;
		};
\end{scope}
\begin{scope}[yshift=1.2cm,nodes={set=import nodes}]
		\foreach \i in {0,...,2} { \coordinate (\i) at (-\i*1.2cm-2.5cm,0) ;}
		\foreach \i in {0} {\fill[red!20] (\i) circle (0.35cm);}
		\foreach \i in {0} { \node (g1\i) at (\i) {\nscale{$g_{1,\i}$}};}
		\foreach \i in {1,2} { \node (\i) at (\i) {\nscale{$g_{1,\i}$}};}
		\graph {
	(import nodes);
			g10 ->[swap, "\escale{$u_1$}"]1->[swap, "\escale{$k$}"]2;
		};
\end{scope}
\begin{scope}[yshift=2.4cm,nodes={set=import nodes}]
		\foreach \i in {0,...,2} { \coordinate (\i) at (-\i*1.2cm-2.5cm,0) ;}
		\foreach \i in {0} { \node (g2\i) at (\i) {\nscale{$g_{2,\i}$}};}
		\foreach \i in {1,2} { \node (\i) at (\i) {\nscale{$g_{2,\i}$}};}
		\graph {
	(import nodes);
			g20 ->[swap, "\escale{$u_2$}"]1->[swap, "\escale{$k$}"]2;
		};
\end{scope}
\begin{scope}[yshift=3.6cm,nodes={set=import nodes}]
		\foreach \i in {0,...,2} { \coordinate (\i) at (-\i*1.2cm-2.5cm,0) ;}
		\foreach \i in {0} { \node (g3\i) at (\i) {\nscale{$g_{3,\i}$}};}
		\foreach \i in {1,2} { \node (\i) at (\i) {\nscale{$g_{3,\i}$}};}
		\graph {
	(import nodes);
			g30 ->[swap, "\escale{$u_3$}"]1->[swap, "\escale{$k$}"]2;
		};
\end{scope}
\begin{scope}[yshift=4.8cm,nodes={set=import nodes}]
		\foreach \i in {0,...,2} { \coordinate (\i) at (-\i*1.2cm-2.5cm,0) ;}
		\foreach \i in {0} { \node (g4\i) at (\i) {\nscale{$g_{4,\i}$}};}
		\foreach \i in {1,2} { \node (\i) at (\i) {\nscale{$g_{4,\i}$}};}
		\graph {
	(import nodes);
			g40 ->[swap, "\escale{$u_4$}"]1->[swap, "\escale{$k$}"]2;
		};
\end{scope}
\begin{scope}[yshift=6cm,nodes={set=import nodes}]
		\foreach \i in {0,...,2} { \coordinate (\i) at (-\i*1.2cm-2.5cm,0) ;}
		\foreach \i in {0} { \node (g5\i) at (\i) {\nscale{$g_{5,\i}$}};}
		\foreach \i in {1,2} { \node (\i) at (\i) {\nscale{$g_{5,\i}$}};}
		\graph {
	(import nodes);
			g50 ->[swap, "\escale{$u_5$}"]1->[swap, "\escale{$k$}"]2;
		};
\end{scope}
\begin{scope}[yshift=6cm,nodes={set=import nodes}]%
		
		\graph {
	(import nodes);
			t00 ->[pos=0.85,"\escale{$y_0$}"]g00;
		};
\end{scope}
\coordinate (c0) at (-1.7,0);
\coordinate (c01) at (-1.7,1.2);
\draw[->] (t00)--(c0)--(c01)--(g10) node [pos=0.4, above] {\escale{$y_1$}};
\coordinate (c1) at (-1.5,0);
\coordinate (c11) at (-1.5,2.4);
\draw[->] (t00)--(c1)--(c11)--(g20) node [pos=0.4, above] {\escale{$y_2$}};
\coordinate (c2) at (-1.3,0);
\coordinate (c21) at (-1.3,3.6);
\draw[->] (t00)--(c2)--(c21)--(g30) node [pos=0.4, above] {\escale{$y_3$}};
\coordinate (c3) at (-1.1,0);
\coordinate (c31) at (-1.1,4.8);
\draw[->] (t00)--(c3)--(c31)--(g40) node [pos=0.4, above] {\escale{$y_4$}};
\coordinate (c4) at (-0.9,0);
\coordinate (c41) at (-0.9,6);
\draw[->] (t00)--(c4)--(c41)--(g50) node [pos=0.4, above] {\escale{$y_5$}};

\end{tikzpicture}
\end{center}
\caption{The region $R^T_1$ of Lemma~\ref{lem:res_set_free_model_implies_ssp}.}\label{fig:RT_1}
\end{minipage}
\end{figure}
\begin{figure}[H]
\begin{minipage}{\textwidth}
\begin{center} 
\begin{tikzpicture}[new set = import nodes]

\begin{scope}[nodes={set=import nodes}]
		\foreach \i in {0,...,4} { \coordinate (\i) at (\i*1.4cm,0) ;}
		\coordinate (top) at (4*1.5cm, 3*1.2cm) ;
		\foreach \i in {0,...,3} {\fill[red!20] (\i) circle (0.35cm);}
		\foreach \i in {0} { \node (t0\i) at (\i) {\nscale{$t_{0,\i}$}};}
		\foreach \i in {1,...,3} { \node (\i) at (\i) {\nscale{$t_{0,\i}$}};}
		\node (top) at (top) {\nscale{$\top$}};
		\graph {
	(import nodes);
			t00 ->["\escale{$X_0$}"]1->["\escale{$X_1$}"]2->["\escale{$X_2$}"]3->[swap, bend right=25,"\escale{$u_0$}"]top;
		};
\end{scope}
\begin{scope}[yshift=1.2cm, nodes={set=import nodes}]
		\foreach \i in {0,...,3} { \coordinate (\i) at (\i*1.4cm,0) ;}
		\foreach \i in {0,...,3} {\fill[red!20] (\i) circle (0.35cm);}
		\foreach \i in {0} { \node (t1\i) at (\i) {\nscale{$t_{1,\i}$}};}
		\foreach \i in {1,...,3} { \node (\i) at (\i) {\nscale{$t_{1,\i}$}};}
		\graph {
	(import nodes);
			t10 ->["\escale{$X_0$}"]1->["\escale{$X_2$}"]2->["\escale{$X_3$}"]3->[swap, bend right=10,"\escale{$u_1$}"]top;
		};
\end{scope}
\begin{scope}[yshift=2.4cm, nodes={set=import nodes}]
		\foreach \i in {0,...,3} { \coordinate (\i) at (\i*1.4cm,0) ;}
		\foreach \i in {0,...,3} {\fill[red!20] (\i) circle (0.35cm);}
		\foreach \i in {0} { \node (t2\i) at (\i) {\nscale{$t_{2,\i}$}};}
		\foreach \i in {1,...,3} { \node (\i) at (\i) {\nscale{$t_{2,\i}$}};}
		\graph {
	(import nodes);
			t20 ->["\escale{$X_0$}"]1->["\escale{$X_1$}"]2->["\escale{$X_3$}"]3->["\escale{$u_2$}"]top;
		};
\end{scope}
\begin{scope}[yshift=3.6cm, nodes={set=import nodes}]
		\foreach \i in {0,...,3} { \coordinate (\i) at (\i*1.4cm,0) ;}
		\foreach \i in {0,...,3} {\fill[red!20] (\i) circle (0.35cm);}
		\foreach \i in {0} { \node (t3\i) at (\i) {\nscale{$t_{3,\i}$}};}
		\foreach \i in {1,...,3} { \node (\i) at (\i) {\nscale{$t_{3,\i}$}};}
		\graph {
	(import nodes);
			t30 ->["\escale{$X_2$}"]1->["\escale{$X_4$}"]2->["\escale{$X_5$}"]3->["\escale{$u_3$}"]top;
		};
\end{scope}
\begin{scope}[yshift=4.8cm, nodes={set=import nodes}]
		\foreach \i in {0,...,3} { \coordinate (\i) at (\i*1.4cm,0) ;}
		\foreach \i in {0,...,3} {\fill[red!20] (\i) circle (0.35cm);}
		\foreach \i in {0} { \node (t4\i) at (\i) {\nscale{$t_{4,\i}$}};}
		\foreach \i in {1,...,3} { \node (\i) at (\i) {\nscale{$t_{4,\i}$}};}
		\graph {
	(import nodes);
			t40 ->["\escale{$X_1$}"]1->["\escale{$X_4$}"]2->["\escale{$X_5$}"]3->["\escale{$u_4$}"]top;
		};
\end{scope}
\begin{scope}[yshift=6cm, nodes={set=import nodes}]
		\foreach \i in {0,...,3} { \coordinate (\i) at (\i*1.4cm,0) ;}
		\foreach \i in {0,...,3} {\fill[red!20] (\i) circle (0.35cm);}
		\foreach \i in {0,...,3} { \node (t5\i) at (\i) {\nscale{$t_{5,\i}$}};}
		\foreach \i in {1,...,3} { \node (\i) at (\i) {\nscale{$t_{5,\i}$}};}
		\graph {
	(import nodes);
			t50 ->["\escale{$X_3$}"]1->["\escale{$X_4$}"]2->["\escale{$X_5$}"]3->[bend left =10,"\escale{$u_5$}"]top;
		};
\end{scope}
\begin{scope}[yshift=7.2cm, nodes={set=import nodes}]%
		\foreach \i in {0,1} { \coordinate (\i) at (\i*4cm,0) ;}
		\foreach \i in {0,1} {\fill[red!20] (\i) circle (0.35cm);}
		\node (t60) at (0) {\nscale{$t_{6,0}$}};
		\node (1) at (1) {\nscale{$t_{7,0}$}};
		\graph {
	(import nodes);
			t60 ->["\escale{$k$}"]1->[bend left =40,"\escale{$v$}"]top;
		};
\end{scope}
\begin{scope}[yshift=7.2cm, nodes={set=import nodes}]%
		\graph {
	(import nodes);
			t00 ->["\escale{$w_0$}"]t10 ->["\escale{$w_1$}"]t20->["\escale{$w_2$}"]t30 ->["\escale{$w_3$}"]t40 ->["\escale{$w_4$}"]t50 ->["\escale{$w_5$}"]t60 ;
		};
\end{scope}
\begin{scope}[nodes={set=import nodes}]
		\foreach \i in {0,...,2} { \coordinate (\i) at (-\i*1.2cm-2.5cm,0) ;}
		\foreach \i in {0} {\fill[red!20] (\i) circle (0.35cm);}
		\foreach \i in {0} { \node (g0\i) at (\i) {\nscale{$g_{0,\i}$}};}
		\foreach \i in {1,2} { \node (\i) at (\i) {\nscale{$g_{0,\i}$}};}
		\graph {
	(import nodes);
			g00 ->[swap, "\escale{$u_0$}"]1->[swap, "\escale{$k$}"]2;
		};
\end{scope}
\begin{scope}[yshift=1.2cm,nodes={set=import nodes}]
		\foreach \i in {0,...,2} { \coordinate (\i) at (-\i*1.2cm-2.5cm,0) ;}
		\foreach \i in {0} {\fill[red!20] (\i) circle (0.35cm);}
		\foreach \i in {0} { \node (g1\i) at (\i) {\nscale{$g_{1,\i}$}};}
		\foreach \i in {1,2} { \node (\i) at (\i) {\nscale{$g_{1,\i}$}};}
		\graph {
	(import nodes);
			g10 ->[swap, "\escale{$u_1$}"]1->[swap, "\escale{$k$}"]2;
		};
\end{scope}
\begin{scope}[yshift=2.4cm,nodes={set=import nodes}]
		\foreach \i in {0,...,2} { \coordinate (\i) at (-\i*1.2cm-2.5cm,0) ;}
		\foreach \i in {0} {\fill[red!20] (\i) circle (0.35cm);}
		\foreach \i in {0} { \node (g2\i) at (\i) {\nscale{$g_{2,\i}$}};}
		\foreach \i in {1,2} { \node (\i) at (\i) {\nscale{$g_{2,\i}$}};}
		\graph {
	(import nodes);
			g20 ->[swap, "\escale{$u_2$}"]1->[swap, "\escale{$k$}"]2;
		};
\end{scope}
\begin{scope}[yshift=3.6cm,nodes={set=import nodes}]
		\foreach \i in {0,...,2} { \coordinate (\i) at (-\i*1.2cm-2.5cm,0) ;}
		\foreach \i in {0} {\fill[red!20] (\i) circle (0.35cm);}
		\foreach \i in {0} { \node (g3\i) at (\i) {\nscale{$g_{3,\i}$}};}
		\foreach \i in {1,2} { \node (\i) at (\i) {\nscale{$g_{3,\i}$}};}
		\graph {
	(import nodes);
			g30 ->[swap, "\escale{$u_3$}"]1->[swap, "\escale{$k$}"]2;
		};
\end{scope}
\begin{scope}[yshift=4.8cm,nodes={set=import nodes}]
		\foreach \i in {0,...,2} { \coordinate (\i) at (-\i*1.2cm-2.5cm,0) ;}
		\foreach \i in {0} {\fill[red!20] (\i) circle (0.35cm);}
		\foreach \i in {0} { \node (g4\i) at (\i) {\nscale{$g_{4,\i}$}};}
		\foreach \i in {1,2} { \node (\i) at (\i) {\nscale{$g_{4,\i}$}};}
		\graph {
	(import nodes);
			g40 ->[swap, "\escale{$u_4$}"]1->[swap, "\escale{$k$}"]2;
		};
\end{scope}
\begin{scope}[yshift=6cm,nodes={set=import nodes}]
		\foreach \i in {0,...,2} { \coordinate (\i) at (-\i*1.2cm-2.5cm,0) ;}
		\foreach \i in {0} {\fill[red!20] (\i) circle (0.35cm);}
		\foreach \i in {0} { \node (g5\i) at (\i) {\nscale{$g_{5,\i}$}};}
		\foreach \i in {1,2} { \node (\i) at (\i) {\nscale{$g_{5,\i}$}};}
		\graph {
	(import nodes);
			g50 ->[swap, "\escale{$u_5$}"]1->[swap, "\escale{$k$}"]2;
		};
\end{scope}
\begin{scope}[yshift=6cm,nodes={set=import nodes}]
		
		\graph {
	(import nodes);
			t00 ->[pos=0.85,"\escale{$y_0$}"]g00;
		};
\end{scope}
\coordinate (c0) at (-1.7,0);
\coordinate (c01) at (-1.7,1.2);
\draw[->] (t00)--(c0)--(c01)--(g10) node [pos=0.4, above] {\escale{$y_1$}};
\coordinate (c1) at (-1.5,0);
\coordinate (c11) at (-1.5,2.4);
\draw[->] (t00)--(c1)--(c11)--(g20) node [pos=0.4, above] {\escale{$y_2$}};
\coordinate (c2) at (-1.3,0);
\coordinate (c21) at (-1.3,3.6);
\draw[->] (t00)--(c2)--(c21)--(g30) node [pos=0.4, above] {\escale{$y_3$}};
\coordinate (c3) at (-1.1,0);
\coordinate (c31) at (-1.1,4.8);
\draw[->] (t00)--(c3)--(c31)--(g40) node [pos=0.4, above] {\escale{$y_4$}};
\coordinate (c4) at (-0.9,0);
\coordinate (c41) at (-0.9,6);
\draw[->] (t00)--(c4)--(c41)--(g50) node [pos=0.4, above] {\escale{$y_5$}};

\end{tikzpicture}
\end{center}
\caption{The region $R_2$ of Lemma~\ref{lem:res_set_free_model_implies_ssp}.}\label{fig:R_2}
\end{minipage}
\begin{minipage}{\textwidth}
\begin{center} 
\begin{tikzpicture}[new set = import nodes]

\begin{scope}[nodes={set=import nodes}]
		\foreach \i in {0,...,4} { \coordinate (\i) at (\i*1.4cm,0) ;}
		\coordinate (top) at (4*1.5cm, 3*1.2cm) ;
		\foreach \i in {0,...,3,top} {\fill[red!20] (\i) circle (0.35cm);}
		\foreach \i in {0} { \node (t0\i) at (\i) {\nscale{$t_{0,\i}$}};}
		\foreach \i in {1,...,3} { \node (\i) at (\i) {\nscale{$t_{0,\i}$}};}
		\node (top) at (top) {\nscale{$\top$}};
		\graph {
	(import nodes);
			t00 ->["\escale{$X_0$}"]1->["\escale{$X_1$}"]2->["\escale{$X_2$}"]3->[swap, bend right=25,"\escale{$u_0$}"]top;
		};
\end{scope}
\begin{scope}[yshift=1.2cm, nodes={set=import nodes}]
		\foreach \i in {0,...,3} { \coordinate (\i) at (\i*1.4cm,0) ;}
		\foreach \i in {0,...,3} {\fill[red!20] (\i) circle (0.35cm);}
		\foreach \i in {0} { \node (t1\i) at (\i) {\nscale{$t_{1,\i}$}};}
		\foreach \i in {1,...,3} { \node (\i) at (\i) {\nscale{$t_{1,\i}$}};}
		\graph {
	(import nodes);
			t10 ->["\escale{$X_0$}"]1->["\escale{$X_2$}"]2->["\escale{$X_3$}"]3->[swap, bend right=10,"\escale{$u_1$}"]top;
		};
\end{scope}
\begin{scope}[yshift=2.4cm, nodes={set=import nodes}]
		\foreach \i in {0,...,3} { \coordinate (\i) at (\i*1.4cm,0) ;}
		\foreach \i in {0,...,3} {\fill[red!20] (\i) circle (0.35cm);}
		\foreach \i in {0} { \node (t2\i) at (\i) {\nscale{$t_{2,\i}$}};}
		\foreach \i in {1,...,3} { \node (\i) at (\i) {\nscale{$t_{2,\i}$}};}
		\graph {
	(import nodes);
			t20 ->["\escale{$X_0$}"]1->["\escale{$X_1$}"]2->["\escale{$X_3$}"]3->["\escale{$u_2$}"]top;
		};
\end{scope}
\begin{scope}[yshift=3.6cm, nodes={set=import nodes}]
		\foreach \i in {0,...,3} { \coordinate (\i) at (\i*1.4cm,0) ;}
		\foreach \i in {0,...,3} {\fill[red!20] (\i) circle (0.35cm);}
		\foreach \i in {0} { \node (t3\i) at (\i) {\nscale{$t_{3,\i}$}};}
		\foreach \i in {1,...,3} { \node (\i) at (\i) {\nscale{$t_{3,\i}$}};}
		\graph {
	(import nodes);
			t30 ->["\escale{$X_2$}"]1->["\escale{$X_4$}"]2->["\escale{$X_5$}"]3->["\escale{$u_3$}"]top;
		};
\end{scope}
\begin{scope}[yshift=4.8cm, nodes={set=import nodes}]
		\foreach \i in {0,...,3} { \coordinate (\i) at (\i*1.4cm,0) ;}
		\foreach \i in {0,...,3} {\fill[red!20] (\i) circle (0.35cm);}
		\foreach \i in {0} { \node (t4\i) at (\i) {\nscale{$t_{4,\i}$}};}
		\foreach \i in {1,...,3} { \node (\i) at (\i) {\nscale{$t_{4,\i}$}};}
		\graph {
	(import nodes);
			t40 ->["\escale{$X_1$}"]1->["\escale{$X_4$}"]2->["\escale{$X_5$}"]3->["\escale{$u_4$}"]top;
		};
\end{scope}
\begin{scope}[yshift=6cm, nodes={set=import nodes}]
		\foreach \i in {0,...,3} { \coordinate (\i) at (\i*1.4cm,0) ;}
		\foreach \i in {0,...,3} {\fill[red!20] (\i) circle (0.35cm);}
		\foreach \i in {0,...,3} { \node (t5\i) at (\i) {\nscale{$t_{5,\i}$}};}
		\foreach \i in {1,...,3} { \node (\i) at (\i) {\nscale{$t_{5,\i}$}};}
		\graph {
	(import nodes);
			t50 ->["\escale{$X_3$}"]1->["\escale{$X_4$}"]2->["\escale{$X_5$}"]3->[bend left =10,"\escale{$u_5$}"]top;
		};
\end{scope}
\begin{scope}[yshift=7.2cm, nodes={set=import nodes}]%
		\foreach \i in {0,1} { \coordinate (\i) at (\i*4cm,0) ;}
		\foreach \i in {0,1} {\fill[red!20] (\i) circle (0.35cm);}
		\node (t60) at (0) {\nscale{$t_{6,0}$}};
		\node (1) at (1) {\nscale{$t_{7,0}$}};
		\graph {
	(import nodes);
			t60 ->["\escale{$k$}"]1->[bend left =40,"\escale{$v$}"]top;
		};
\end{scope}
\begin{scope}[yshift=7.2cm, nodes={set=import nodes}]%
		\graph {
	(import nodes);
			t00 ->["\escale{$w_0$}"]t10 ->["\escale{$w_1$}"]t20->["\escale{$w_2$}"]t30 ->["\escale{$w_3$}"]t40 ->["\escale{$w_4$}"]t50 ->["\escale{$w_5$}"]t60 ;
		};
\end{scope}
\begin{scope}[nodes={set=import nodes}]
		\foreach \i in {0,...,2} { \coordinate (\i) at (-\i*1.2cm-2.5cm,0) ;}
		\foreach \i in {0,1,2} {\fill[red!20] (\i) circle (0.35cm);}
		\foreach \i in {0} { \node (g0\i) at (\i) {\nscale{$g_{0,\i}$}};}
		\foreach \i in {1,2} { \node (\i) at (\i) {\nscale{$g_{0,\i}$}};}
		\graph {
	(import nodes);
			g00 ->[swap, "\escale{$u_0$}"]1->[swap, "\escale{$k$}"]2;
		};
\end{scope}
\begin{scope}[yshift=1.2cm,nodes={set=import nodes}]
		\foreach \i in {0,...,2} { \coordinate (\i) at (-\i*1.2cm-2.5cm,0) ;}
		\foreach \i in {0,1,2} {\fill[red!20] (\i) circle (0.35cm);}
		\foreach \i in {0} { \node (g1\i) at (\i) {\nscale{$g_{1,\i}$}};}
		\foreach \i in {1,2} { \node (\i) at (\i) {\nscale{$g_{1,\i}$}};}
		\graph {
	(import nodes);
			g10 ->[swap, "\escale{$u_1$}"]1->[swap, "\escale{$k$}"]2;
		};
\end{scope}
\begin{scope}[yshift=2.4cm,nodes={set=import nodes}]
		\foreach \i in {0,...,2} { \coordinate (\i) at (-\i*1.2cm-2.5cm,0) ;}
		\foreach \i in {0} { \node (g2\i) at (\i) {\nscale{$g_{2,\i}$}};}
		\foreach \i in {1,2} { \node (\i) at (\i) {\nscale{$g_{2,\i}$}};}
		\graph {
	(import nodes);
			g20 ->[swap, "\escale{$u_2$}"]1->[swap, "\escale{$k$}"]2;
		};
\end{scope}
\begin{scope}[yshift=3.6cm,nodes={set=import nodes}]
		\foreach \i in {0,...,2} { \coordinate (\i) at (-\i*1.2cm-2.5cm,0) ;}
		\foreach \i in {0} { \node (g3\i) at (\i) {\nscale{$g_{3,\i}$}};}
		\foreach \i in {1,2} { \node (\i) at (\i) {\nscale{$g_{3,\i}$}};}
		\graph {
	(import nodes);
			g30 ->[swap, "\escale{$u_3$}"]1->[swap, "\escale{$k$}"]2;
		};
\end{scope}
\begin{scope}[yshift=4.8cm,nodes={set=import nodes}]
		\foreach \i in {0,...,2} { \coordinate (\i) at (-\i*1.2cm-2.5cm,0) ;}
		\foreach \i in {0} { \node (g4\i) at (\i) {\nscale{$g_{4,\i}$}};}
		\foreach \i in {1,2} { \node (\i) at (\i) {\nscale{$g_{4,\i}$}};}
		\graph {
	(import nodes);
			g40 ->[swap, "\escale{$u_4$}"]1->[swap, "\escale{$k$}"]2;
		};
\end{scope}
\begin{scope}[yshift=6cm,nodes={set=import nodes}]
		\foreach \i in {0,...,2} { \coordinate (\i) at (-\i*1.2cm-2.5cm,0) ;}
		\foreach \i in {0} { \node (g5\i) at (\i) {\nscale{$g_{5,\i}$}};}
		\foreach \i in {1,2} { \node (\i) at (\i) {\nscale{$g_{5,\i}$}};}
		\graph {
	(import nodes);
			g50 ->[swap, "\escale{$u_5$}"]1->[swap, "\escale{$k$}"]2;
		};
\end{scope}
\begin{scope}[yshift=6cm,nodes={set=import nodes}]
		
		\graph {
	(import nodes);
			t00 ->[pos=0.85,"\escale{$y_0$}"]g00;
		};
\end{scope}
\coordinate (c0) at (-1.7,0);
\coordinate (c01) at (-1.7,1.2);
\draw[->] (t00)--(c0)--(c01)--(g10) node [pos=0.4, above] {\escale{$y_1$}};
\coordinate (c1) at (-1.5,0);
\coordinate (c11) at (-1.5,2.4);
\draw[->] (t00)--(c1)--(c11)--(g20) node [pos=0.4, above] {\escale{$y_2$}};
\coordinate (c2) at (-1.3,0);
\coordinate (c21) at (-1.3,3.6);
\draw[->] (t00)--(c2)--(c21)--(g30) node [pos=0.4, above] {\escale{$y_3$}};
\coordinate (c3) at (-1.1,0);
\coordinate (c31) at (-1.1,4.8);
\draw[->] (t00)--(c3)--(c31)--(g40) node [pos=0.4, above] {\escale{$y_4$}};
\coordinate (c4) at (-0.9,0);
\coordinate (c41) at (-0.9,6);
\draw[->] (t00)--(c4)--(c41)--(g50) node [pos=0.4, above] {\escale{$y_5$}};

\end{tikzpicture}
\end{center}
\caption{The region $R^G_1$ of Lemma~\ref{lem:res_set_free_model_implies_ssp}.}\label{fig:RG_1}
\end{minipage}
\end{figure}
\begin{figure}[H]
\begin{minipage}{\textwidth}
\begin{center} 
\begin{tikzpicture}[new set = import nodes]

\begin{scope}[nodes={set=import nodes}]
		\foreach \i in {0,...,4} { \coordinate (\i) at (\i*1.4cm,0) ;}
		\coordinate (top) at (4*1.5cm, 3*1.2cm) ;
		\foreach \i in {0,1} {\fill[red!20] (\i) circle (0.35cm);}
		\foreach \i in {0} { \node (t0\i) at (\i) {\nscale{$t_{0,\i}$}};}
		\foreach \i in {1,...,3} { \node (\i) at (\i) {\nscale{$t_{0,\i}$}};}
		\node (top) at (top) {\nscale{$\top$}};
		\graph {
	(import nodes);
			t00 ->["\escale{$X_0$}"]1->["\escale{$X_1$}"]2->["\escale{$X_2$}"]3->[swap, bend right=25,"\escale{$u_0$}"]top;
		};
\end{scope}
\begin{scope}[yshift=1.2cm, nodes={set=import nodes}]
		\foreach \i in {0,...,3} { \coordinate (\i) at (\i*1.4cm,0) ;}
		\foreach \i in {0,...,3} {\fill[red!20] (\i) circle (0.35cm);}
		\foreach \i in {0} { \node (t1\i) at (\i) {\nscale{$t_{1,\i}$}};}
		\foreach \i in {1,...,3} { \node (\i) at (\i) {\nscale{$t_{1,\i}$}};}
		\graph {
	(import nodes);
			t10 ->["\escale{$X_0$}"]1->["\escale{$X_2$}"]2->["\escale{$X_3$}"]3->[swap, bend right=10,"\escale{$u_1$}"]top;
		};
\end{scope}
\begin{scope}[yshift=2.4cm, nodes={set=import nodes}]
		\foreach \i in {0,...,3} { \coordinate (\i) at (\i*1.4cm,0) ;}
		\foreach \i in {0,1} {\fill[red!20] (\i) circle (0.35cm);}
		\foreach \i in {0} { \node (t2\i) at (\i) {\nscale{$t_{2,\i}$}};}
		\foreach \i in {1,...,3} { \node (\i) at (\i) {\nscale{$t_{2,\i}$}};}
		\graph {
	(import nodes);
			t20 ->["\escale{$X_0$}"]1->["\escale{$X_1$}"]2->["\escale{$X_3$}"]3->["\escale{$u_2$}"]top;
		};
\end{scope}
\begin{scope}[yshift=3.6cm, nodes={set=import nodes}]
		\foreach \i in {0,...,3} { \coordinate (\i) at (\i*1.4cm,0) ;}
		\foreach \i in {0,...,3} {\fill[red!20] (\i) circle (0.35cm);}
		\foreach \i in {0} { \node (t3\i) at (\i) {\nscale{$t_{3,\i}$}};}
		\foreach \i in {1,...,3} { \node (\i) at (\i) {\nscale{$t_{3,\i}$}};}
		\graph {
	(import nodes);
			t30 ->["\escale{$X_2$}"]1->["\escale{$X_4$}"]2->["\escale{$X_5$}"]3->["\escale{$u_3$}"]top;
		};
\end{scope}
\begin{scope}[yshift=4.8cm, nodes={set=import nodes}]
		\foreach \i in {0,...,3} { \coordinate (\i) at (\i*1.4cm,0) ;}
		\foreach \i in {0} {\fill[red!20] (\i) circle (0.35cm);}
		\foreach \i in {0} { \node (t4\i) at (\i) {\nscale{$t_{4,\i}$}};}
		\foreach \i in {1,...,3} { \node (\i) at (\i) {\nscale{$t_{4,\i}$}};}
		\graph {
	(import nodes);
			t40 ->["\escale{$X_1$}"]1->["\escale{$X_4$}"]2->["\escale{$X_5$}"]3->["\escale{$u_4$}"]top;
		};
\end{scope}
\begin{scope}[yshift=6cm, nodes={set=import nodes}]
		\foreach \i in {0,...,3} { \coordinate (\i) at (\i*1.4cm,0) ;}
		\foreach \i in {0,...,3} {\fill[red!20] (\i) circle (0.35cm);}
		\foreach \i in {0,...,3} { \node (t5\i) at (\i) {\nscale{$t_{5,\i}$}};}
		\foreach \i in {1,...,3} { \node (\i) at (\i) {\nscale{$t_{5,\i}$}};}
		\graph {
	(import nodes);
			t50 ->["\escale{$X_3$}"]1->["\escale{$X_4$}"]2->["\escale{$X_5$}"]3->[bend left =10,"\escale{$u_5$}"]top;
		};
\end{scope}
\begin{scope}[yshift=7.2cm, nodes={set=import nodes}]%
		\foreach \i in {0,1} { \coordinate (\i) at (\i*4cm,0) ;}
		\foreach \i in {0,1} {\fill[red!20] (\i) circle (0.35cm);}
		\node (t60) at (0) {\nscale{$t_{6,0}$}};
		\node (1) at (1) {\nscale{$t_{7,0}$}};
		\graph {
	(import nodes);
			t60 ->["\escale{$k$}"]1->[bend left =40,"\escale{$v$}"]top;
		};
\end{scope}
\begin{scope}[yshift=7.2cm, nodes={set=import nodes}]%
		\graph {
	(import nodes);
			t00 ->["\escale{$w_0$}"]t10 ->["\escale{$w_1$}"]t20->["\escale{$w_2$}"]t30 ->["\escale{$w_3$}"]t40 ->["\escale{$w_4$}"]t50 ->["\escale{$w_5$}"]t60 ;
		};
\end{scope}
\begin{scope}[nodes={set=import nodes}]
		\foreach \i in {0,...,2} { \coordinate (\i) at (-\i*1.2cm-2.5cm,0) ;}
		\foreach \i in {0,1,2} {\fill[red!20] (\i) circle (0.35cm);}
		\foreach \i in {0} { \node (g0\i) at (\i) {\nscale{$g_{0,\i}$}};}
		\foreach \i in {1,2} { \node (\i) at (\i) {\nscale{$g_{0,\i}$}};}
		\graph {
	(import nodes);
			g00 ->[swap, "\escale{$u_0$}"]1->[swap, "\escale{$k$}"]2;
		};
\end{scope}
\begin{scope}[yshift=1.2cm,nodes={set=import nodes}]
		\foreach \i in {0,...,2} { \coordinate (\i) at (-\i*1.2cm-2.5cm,0) ;}
		\foreach \i in {0} {\fill[red!20] (\i) circle (0.35cm);}
		\foreach \i in {0} { \node (g1\i) at (\i) {\nscale{$g_{1,\i}$}};}
		\foreach \i in {1,2} { \node (\i) at (\i) {\nscale{$g_{1,\i}$}};}
		\graph {
	(import nodes);
			g10 ->[swap, "\escale{$u_1$}"]1->[swap, "\escale{$k$}"]2;
		};
\end{scope}
\begin{scope}[yshift=2.4cm,nodes={set=import nodes}]
		\foreach \i in {0,...,2} { \coordinate (\i) at (-\i*1.2cm-2.5cm,0) ;}
		\foreach \i in {0,1,2} {\fill[red!20] (\i) circle (0.35cm);}
		\foreach \i in {0} { \node (g2\i) at (\i) {\nscale{$g_{2,\i}$}};}
		\foreach \i in {1,2} { \node (\i) at (\i) {\nscale{$g_{2,\i}$}};}
		\graph {
	(import nodes);
			g20 ->[swap, "\escale{$u_2$}"]1->[swap, "\escale{$k$}"]2;
		};
\end{scope}
\begin{scope}[yshift=3.6cm,nodes={set=import nodes}]
		\foreach \i in {0,...,2} { \coordinate (\i) at (-\i*1.2cm-2.5cm,0) ;}
		\foreach \i in {0} {\fill[red!20] (\i) circle (0.35cm);}
		\foreach \i in {0} { \node (g3\i) at (\i) {\nscale{$g_{3,\i}$}};}
		\foreach \i in {1,2} { \node (\i) at (\i) {\nscale{$g_{3,\i}$}};}
		\graph {
	(import nodes);
			g30 ->[swap, "\escale{$u_3$}"]1->[swap, "\escale{$k$}"]2;
		};
\end{scope}
\begin{scope}[yshift=4.8cm,nodes={set=import nodes}]
		\foreach \i in {0,...,2} { \coordinate (\i) at (-\i*1.2cm-2.5cm,0) ;}
		\foreach \i in {0,1,2} {\fill[red!20] (\i) circle (0.35cm);}
		\foreach \i in {0} { \node (g4\i) at (\i) {\nscale{$g_{4,\i}$}};}
		\foreach \i in {1,2} { \node (\i) at (\i) {\nscale{$g_{4,\i}$}};}
		\graph {
	(import nodes);
			g40 ->[swap, "\escale{$u_4$}"]1->[swap, "\escale{$k$}"]2;
		};
\end{scope}
\begin{scope}[yshift=6cm,nodes={set=import nodes}]
		\foreach \i in {0,...,2} { \coordinate (\i) at (-\i*1.2cm-2.5cm,0) ;}
		\foreach \i in {0} {\fill[red!20] (\i) circle (0.35cm);}
		\foreach \i in {0} { \node (g5\i) at (\i) {\nscale{$g_{5,\i}$}};}
		\foreach \i in {1,2} { \node (\i) at (\i) {\nscale{$g_{5,\i}$}};}
		\graph {
	(import nodes);
			g50 ->[swap, "\escale{$u_5$}"]1->[swap, "\escale{$k$}"]2;
		};
\end{scope}
\begin{scope}[yshift=6cm,nodes={set=import nodes}]
		
		\graph {
	(import nodes);
			t00 ->[pos=0.85,"\escale{$y_0$}"]g00;
		};
\end{scope}
\coordinate (c0) at (-1.7,0);
\coordinate (c01) at (-1.7,1.2);
\draw[->] (t00)--(c0)--(c01)--(g10) node [pos=0.4, above] {\escale{$y_1$}};
\coordinate (c1) at (-1.5,0);
\coordinate (c11) at (-1.5,2.4);
\draw[->] (t00)--(c1)--(c11)--(g20) node [pos=0.4, above] {\escale{$y_2$}};
\coordinate (c2) at (-1.3,0);
\coordinate (c21) at (-1.3,3.6);
\draw[->] (t00)--(c2)--(c21)--(g30) node [pos=0.4, above] {\escale{$y_3$}};
\coordinate (c3) at (-1.1,0);
\coordinate (c31) at (-1.1,4.8);
\draw[->] (t00)--(c3)--(c31)--(g40) node [pos=0.4, above] {\escale{$y_4$}};
\coordinate (c4) at (-0.9,0);
\coordinate (c41) at (-0.9,6);
\draw[->] (t00)--(c4)--(c41)--(g50) node [pos=0.4, above] {\escale{$y_5$}};

\end{tikzpicture}
\end{center}
\caption{The region $R^X_{4,0}$ of Lemma~\ref{lem:res_set_free_model_implies_ssp}, where $i=4$, $j=0$ and $\ell=2$.}\label{fig:RX_4_0}
\end{minipage}
\end{figure}
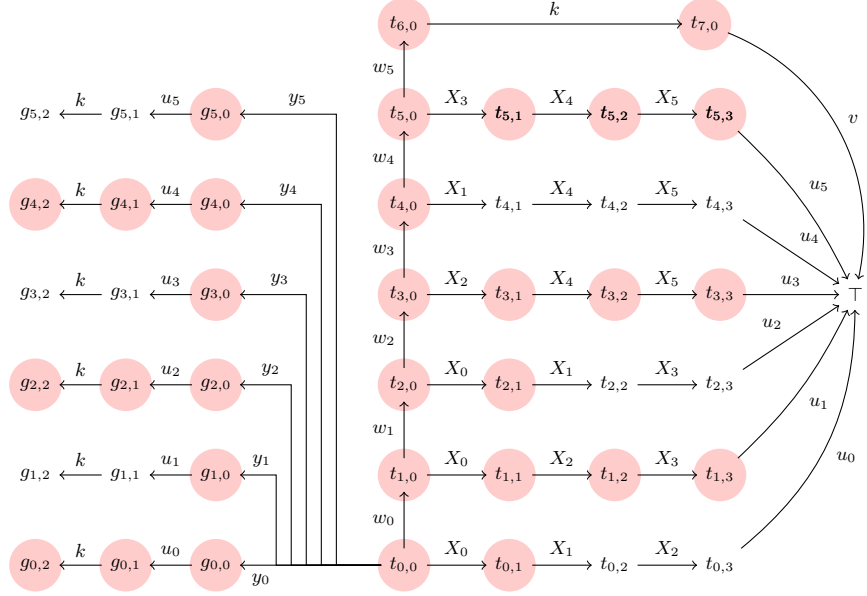

\section{The Proof of Lemma~\ref{lem:nop_free_model_implies_ssp}.}\label{sec:nop_free_model_implies_ssp}%

\begin{proof}[The Proof of Lemma~\ref{lem:nop_free_model_implies_ssp}.]
Let $M$ be a one-in-three model of $\varphi$.
In what follows, we assume $\free\in \tau$ and argue that $A_\varphi$ has the $\tau$-SSP.
The assumption $\free\in \tau$ is possible without loss of generality:
In particular, every presented region is a $\{\free,\swap\}$-region.
Hence, every presented support allows a $\{\swap,\res\}$-region by simply replacing $sig(e)=\free$ by $sig(e)=\res$.
Thus, by Lemma~\ref{lem:isomorphic_types} and $\{\swap,\free\}\cong \{\swap,\used\}$ and $\{\res,\swap\}\cong \{\set,\swap\}$, the $\tau$-SSP of $A_\varphi$ implies the its $\tilde{\tau}$-SSP also for the types in question where $\tilde{\tau}\cap\save_0=\emptyset$, which implies $\tilde{\tau}\cap\save_1\not=\emptyset$.

For abbreviation, let $\oplus=\{\oplus_0,\dots,\oplus_{2m-1}\}$, $\otimes=\{\otimes_0,\dots,\otimes_{14m-1}\}$, $\ominus=\{\ominus_0,\dots, \ominus_{2m-1}\}$,  $\odot=\{\odot_0,\dots, \odot_{14m-1}\}$, $V=\{v_0,\dots, v_{7m-1}\}$, $W=\{w_0,\dots, w_{7m-1}\}$, $X=\{X_0,\dots, X_{m-1}\}$, $\bot=\{\bot_0,\dots, \bot_{2m-1}\}$ and $\top=\{\top_0,\dots, \top_{14m-1}\}$, and let $\ominus'$, $\odot'$, $V'$, $W'$ and $X'$ be defined accordingly.

First of all, it is easy to see that, for all $s\in \bot\cup\top$, the SSP atom $(s,s')$ is $\tau$-solvable for all $s'\in S_{A_\varphi}\setminus\{s\}$.
Secondly, if $A$ is an arbitrary but fixed gadget of $A_\varphi$ and $s\in S_A$ and $s'\in S_{A_\varphi}\setminus S_A$, then the atom $(s,s')$ is $\tau$-solvable.
This can be seen as follows:
Let $a\in \ominus\cup\odot\cup\ominus'\cup\odot'$ be arbitrary but fixed.
(This event connects a certain gadget, say $A$, with the others.)
On the one hand, there is a region $R_1=(sup_1, sig_1)$, such that $sup_1(\iota)=0$ and, for all $e\in E_{A_\varphi}$, if $e\in \oplus\cup\otimes\cup\ominus\cup\odot\cup\ominus'\cup\odot'$, then $sig(e)=\free$, otherwise $sig(e)=\swap$.
\begin{center}
\begin{tikzpicture}[new set = import nodes]
\begin{scope}[ nodes={set=import nodes}]
		\foreach \i in {0,...,4} { \coordinate (\i) at (\i*1.9cm,0) ;}
		\foreach \i in {1,3} {\fill[red!20] (\i) circle (0.38cm);}
		\foreach \i in {0,...,4} {\node (\i) at (\i) {\nscale{$g_{7i+j,\i}$}};}
		\graph { (import nodes);

			0<->["\escale{$v_{7i+j}$}"]1<->["\escale{$w_{7i+j}$}"]2<->["\escale{$k_0$}"]3<->["\escale{$k_1$}"]4;
		};
\end{scope}
\begin{scope}[yshift=-1.2cm,nodes={set=import nodes}]
		\foreach \i in {0,...,4} { \coordinate (\i) at (\i*1.9cm,0) ;}
		\foreach \i in {1,3} {\fill[red!20] (\i) circle (0.38cm);}
		\foreach \i in {0,...,4} { \node (\i) at (\i) {\nscale{$f_{7i+j,\i}$}};}
		\graph { (import nodes);

			0<->["\escale{$v_{7i+j}$}"]1<->["\escale{$w_{7i+j}$}"]2<->["\escale{$k_1$}"]3<->["\escale{$k_0$}"]4;
		};
\end{scope}
\node (text) at (4,-1.8cm) {\nscale{The region $R_1$ restricted to $G_{7i+j}$ and $F_{7i+j}$.}};
\end{tikzpicture}
\end{center}
On the other hand, there is a region $R_2=(sup_2, sig_2)$, such that $sup(\iota)=0$ and, for all $e\in E_{A_\varphi}$, if  $e\in \oplus\cup\otimes\cup\ominus\cup\odot\cup\ominus'\cup\odot'$ and $e\not=a$, then $sig(e)=\free$, otherwise $sig(e)=\swap$.
Notice that for the unique gadget $A$ of $A_\varphi$ whose starting state is adjacent to $a$ holds, for all $s\in S_{A_\varphi}$,  that $sup_2(s)=sup_1(s)$ if $s\not\in S_A$, otherwise $sup_2(s)=1-sup_1(s)$.
Thus, if $s\in S_A$ and $s'\in S_{A_\varphi}\setminus S_A$, then $(s,s')$ is either solved by $R_1$ or by $R_2$.
\begin{center}
\begin{tikzpicture}[new set = import nodes]
\begin{scope}[ nodes={set=import nodes}]
		\foreach \i in {0,...,4} { \coordinate (\i) at (\i*1.9cm,0) ;}
		\foreach \i in {0,2,4} {\fill[red!20] (\i) circle (0.38cm);}
		\foreach \i in {0,...,4} {\node (\i) at (\i) {\nscale{$g_{7i+j,\i}$}};}
		\graph { (import nodes);

			0<->["\escale{$v_{7i+j}$}"]1<->["\escale{$w_{7i+j}$}"]2<->["\escale{$k_0$}"]3<->["\escale{$k_1$}"]4;
		};
\end{scope}
\begin{scope}[yshift=-1.2cm,nodes={set=import nodes}]
		\foreach \i in {0,...,4} { \coordinate (\i) at (\i*1.9cm,0) ;}
		\foreach \i in {1,3} {\fill[red!20] (\i) circle (0.38cm);}
		\foreach \i in {0,...,4} { \node (\i) at (\i) {\nscale{$f_{7i+j,\i}$}};}
		\graph { (import nodes);

			0<->["\escale{$v_{7i+j}$}"]1<->["\escale{$w_{7i+j}$}"]2<->["\escale{$k_1$}"]3<->["\escale{$k_0$}"]4;
		};
\end{scope}
\node (text) at (4,-1.8cm) {\nscale{The region $R_2$, where $a=\odot_{14i+2j}$, restricted to $G_{7i+j}$ and $F_{7i+j}$.}};
\end{tikzpicture}
\end{center}

Justified by these observations, it only remains to consider separation atoms $(s,s')$ where $s$ and $s'$ belong to the same gadget $A$ of $A_\varphi$.

Let $i\in \{0,\dots, m-1\}$ and $j\in \{0,\dots,6\}$ be arbitrary but fixed and let's argue for the solvability of $s$ for all $s\in S_{G_{7i+j}}$.
The following $\tau$-region $R^g_{7i+j,3}=(sup, sig)$ solves $(g_{7i+j,3},s)$ for all $s\in \{g_{7i+j,0},g_{7i+j,1}, g_{7i+j,2}, g_{7i+j,4}\}$:
$sup(\iota)=0$;
for all $e\in E_{A_\varphi}$, if $e\in\{k_0,k_1\}\cup\{\ominus_{2i},\ominus_{2i}'\mid i\in \{0,\dots, m-1\}\}$, then $sig(e)=\swap$, otherwise $sig(e)=\free$.
\begin{center}
\begin{tikzpicture}[new set = import nodes]
\begin{scope}
\foreach \i in {0,...,8} {\coordinate (\i) at (\i*1.6cm,0);}
\foreach \i in {9,...,11} {\pgfmathparse{\i-9}\coordinate (\i) at (12.8cm-\pgfmathresult*1.7cm,-1.3);}
\foreach \i in {0,11} {\fill[red!20] (\i) circle (0.38cm);}
\foreach \i in {0,...,11} {\node (t\i) at (\i) {\nscale{$t_{i,0,\i}$}};}
\graph { 
(t0) <->["\escale{$k_0$}"] (t1) <->["\escale{$v_{7i}$}"] (t2) <->["\escale{$v_{7i+1}$}"] (t3) <->["\escale{$X_{i_0}$}"] (t4) <->["\escale{$v_{7i+2}$}"] (t5)<->["\escale{$X_{i_1}$}"](t6)<->["\escale{$v_{7i+3}$}"](t7)<->["\escale{$X_{i_2}$}"](t8)<->["\escale{$v_{7i+4}$}"](t9)<->["\escale{$v_{7i+5}$}"](t10)<->["\escale{$k_0$}"](t11);
};
\end{scope}
\begin{scope}[ yshift=-1.3cm]

\foreach \i in {0,...,3} { \coordinate (\i) at (\i*1.8cm,0) ;}
\foreach \i in {0,...,3} {\node (t\i) at (\i) {\nscale{$t_{i,1,\i}$}};}
\graph { 
(t0) <->["\escale{$X_{i_0}$}"] (t1) <->["\escale{$v_{7i+6}$}"] (t2) <->["\escale{$X_{i_2}$}"] (t3);%
};
\end{scope}
\begin{scope}[yshift=-2.6cm, nodes={set=import nodes}]%
		\foreach \i in {0,...,4} { \coordinate (\i) at (\i*1.9cm,0) ;}
		\foreach \i in {3} {\fill[red!20] (\i) circle (0.38cm);}
		\foreach \i in {0,...,4} {\node (\i) at (\i) {\nscale{$g_{7i+j,\i}$}};}
		\graph { (import nodes);

			0<->["\escale{$v_{7i+j}$}"]1<->["\escale{$w_{7i+j}$}"]2<->["\escale{$k_0$}"]3<->["\escale{$k_1$}"]4;
		};
\end{scope}
\begin{scope}[yshift=-3.8cm,nodes={set=import nodes}]%
		\foreach \i in {0,...,4} { \coordinate (\i) at (\i*1.9cm,0) ;}
		\foreach \i in {3} {\fill[red!20] (\i) circle (0.38cm);}
		\foreach \i in {0,...,4} { \node (\i) at (\i) {\nscale{$f_{7i+j,\i}$}};}
		\graph { (import nodes);

			0<->["\escale{$v_{7i+j}$}"]1<->["\escale{$w_{7i+j}$}"]2<->["\escale{$k_1$}"]3<->["\escale{$k_0$}"]4;
		};
\end{scope}
\node (text) at (5,-4.5cm) {\nscale{The region $R^g_{7i+j,3}$ restricted to $T_{i,0}, T_{i,1}, G_{7i+j}$ and $F_{7i+j}$.}};
\end{tikzpicture}
\end{center}

The following $\tau$-region $R^g_{7i+j,0}=(sup, sig)$ solves $(g_{7i+j,0},s)$ for all $s\in \{g_{7i+j,1}, g_{7i+j,2}, g_{7i+j,4}\}$:
$sup(\iota)=0$;
for all $e\in E_{A_\varphi}$, if $e\in\{k_0,k_1\} \cup \odot\cup\odot'\cup V\cup V'\cup X\cup X'$, then $sig(e)=\swap$, otherwise $sig(e)=\free$.
\begin{center}
\begin{tikzpicture}[new set = import nodes]
\begin{scope}
\foreach \i in {0,...,8} {\coordinate (\i) at (\i*1.6cm,0);}
\foreach \i in {9,...,11} {\pgfmathparse{\i-9}\coordinate (\i) at (12.8cm-\pgfmathresult*1.7cm,-1.3);}
\foreach \i in {1,3,5,7,9,11} {\fill[red!20] (\i) circle (0.38cm);}
\foreach \i in {0,...,11} {\node (t\i) at (\i) {\nscale{$t_{i,0,\i}$}};}
\graph { 
(t0) <->["\escale{$k_0$}"] (t1) <->["\escale{$v_{7i}$}"] (t2) <->["\escale{$v_{7i+1}$}"] (t3) <->["\escale{$X_{i_0}$}"] (t4) <->["\escale{$v_{7i+2}$}"] (t5)<->["\escale{$X_{i_1}$}"](t6)<->["\escale{$v_{7i+3}$}"](t7)<->["\escale{$X_{i_2}$}"](t8)<->["\escale{$v_{7i+4}$}"](t9)<->["\escale{$v_{7i+5}$}"](t10)<->["\escale{$k_0$}"](t11);
};
\end{scope}
\begin{scope}[ yshift=-1.3cm]

\foreach \i in {0,...,3} { \coordinate (\i) at (\i*1.8cm,0) ;}
\foreach \i in {1,3} {\fill[red!20] (\i) circle (0.38cm);}
\foreach \i in {0,...,3} {\node (t\i) at (\i) {\nscale{$t_{i,1,\i}$}};}
\graph { 
(t0) <->["\escale{$X_{i_0}$}"] (t1) <->["\escale{$v_{7i+6}$}"] (t2) <->["\escale{$X_{i_2}$}"] (t3);%
};
\end{scope}
\begin{scope}[yshift=-2.6cm, nodes={set=import nodes}]%
		\foreach \i in {0,...,4} { \coordinate (\i) at (\i*1.9cm,0) ;}
		\foreach \i in {0,3} {\fill[red!20] (\i) circle (0.38cm);}
		\foreach \i in {0,...,4} {\node (\i) at (\i) {\nscale{$g_{7i+j,\i}$}};}
		\graph { (import nodes);

			0<->["\escale{$v_{7i+j}$}"]1<->["\escale{$w_{7i+j}$}"]2<->["\escale{$k_0$}"]3<->["\escale{$k_1$}"]4;
		};
\end{scope}
\begin{scope}[yshift=-3.8cm,nodes={set=import nodes}]%
		\foreach \i in {0,...,4} { \coordinate (\i) at (\i*1.9cm,0) ;}
		\foreach \i in {0,3} {\fill[red!20] (\i) circle (0.38cm);}
		\foreach \i in {0,...,4} { \node (\i) at (\i) {\nscale{$f_{7i+j,\i}$}};}
		\graph { (import nodes);

			0<->["\escale{$v_{7i+j}$}"]1<->["\escale{$w_{7i+j}$}"]2<->["\escale{$k_1$}"]3<->["\escale{$k_0$}"]4;
		};
\end{scope}
\node (text) at (5,-4.5cm) {\nscale{The region $R^g_{7i+j,0}$ restricted to $T_{i,0}, T_{i,1}, G_{7i+j}$ and $F_{7i+j}$.}};
\end{tikzpicture}
\end{center}

The following $\tau$-region $R^g_{7i+j,1}=(sup, sig)$ solves $(g_{7i+j,1},s)$ for all $s\in \{g_{7i+j,2}, g_{7i+j,4}\}$:
$sup(\iota)=0$;
for all $e\in E_{A_\varphi}$, if $e\in\{k_0,k_1\}\cup V\cup V'\cup W\cup W'\cup X\cup X'$, then $sig(e)=\swap$, otherwise $sig(e)=\free$.
\begin{center}
\begin{tikzpicture}[new set = import nodes]
\begin{scope}
\foreach \i in {0,...,8} {\coordinate (\i) at (\i*1.6cm,0);}
\foreach \i in {9,...,11} {\pgfmathparse{\i-9}\coordinate (\i) at (12.8cm-\pgfmathresult*1.7cm,-1.3);}
\foreach \i in {1,3,5,7,9,11} {\fill[red!20] (\i) circle (0.38cm);}
\foreach \i in {0,...,11} {\node (t\i) at (\i) {\nscale{$t_{i,0,\i}$}};}
\graph { 
(t0) <->["\escale{$k_0$}"] (t1) <->["\escale{$v_{7i}$}"] (t2) <->["\escale{$v_{7i+1}$}"] (t3) <->["\escale{$X_{i_0}$}"] (t4) <->["\escale{$v_{7i+2}$}"] (t5)<->["\escale{$X_{i_1}$}"](t6)<->["\escale{$v_{7i+3}$}"](t7)<->["\escale{$X_{i_2}$}"](t8)<->["\escale{$v_{7i+4}$}"](t9)<->["\escale{$v_{7i+5}$}"](t10)<->["\escale{$k_0$}"](t11);
};
\end{scope}
\begin{scope}[ yshift=-1.3cm]

\foreach \i in {0,...,3} { \coordinate (\i) at (\i*1.8cm,0) ;}
\foreach \i in {1,3} {\fill[red!20] (\i) circle (0.38cm);}
\foreach \i in {0,...,3} {\node (t\i) at (\i) {\nscale{$t_{i,1,\i}$}};}
\graph { 
(t0) <->["\escale{$X_{i_0}$}"] (t1) <->["\escale{$v_{7i+6}$}"] (t2) <->["\escale{$X_{i_2}$}"] (t3);%
};
\end{scope}
\begin{scope}[yshift=-2.6cm, nodes={set=import nodes}]%
		\foreach \i in {0,...,4} { \coordinate (\i) at (\i*1.9cm,0) ;}
		\foreach \i in {1,3} {\fill[red!20] (\i) circle (0.38cm);}
		\foreach \i in {0,...,4} {\node (\i) at (\i) {\nscale{$g_{7i+j,\i}$}};}
		\graph { (import nodes);

			0<->["\escale{$v_{7i+j}$}"]1<->["\escale{$w_{7i+j}$}"]2<->["\escale{$k_0$}"]3<->["\escale{$k_1$}"]4;
		};
\end{scope}
\begin{scope}[yshift=-3.8cm,nodes={set=import nodes}]%
		\foreach \i in {0,...,4} { \coordinate (\i) at (\i*1.9cm,0) ;}
		\foreach \i in {1,3} {\fill[red!20] (\i) circle (0.38cm);}
		\foreach \i in {0,...,4} { \node (\i) at (\i) {\nscale{$f_{7i+j,\i}$}};}
		\graph { (import nodes);

			0<->["\escale{$v_{7i+j}$}"]1<->["\escale{$w_{7i+j}$}"]2<->["\escale{$k_1$}"]3<->["\escale{$k_0$}"]4;
		};
\end{scope}
\node (text) at (5,-4.5cm) {\nscale{The region $R^g_{7i+j,1}$ restricted to $T_{i,0}, T_{i,1}, G_{7i+j}$ and $F_{7i+j}$.}};
\end{tikzpicture}
\end{center}

Restricted to $G_{7i+j}$, it remains to solve the atom $(g_{7i+j,2}, g_{7i+j,4})$.
The following region $R^g_{7i+j,2}=(sup, sig)$ uses the existence of the one-in-three model $M$ and solves this atom:
$sup(\iota)=0$;
for all $e\in E_{A_\varphi}$, if $e\in \{k_1\}\cup V\cup V'\cup W\cup W'\cup (X\setminus M)\cup X'\cup\{\odot_{2i+1}, \odot_{2i}'\mid i\in \{0,\dots, 7m-1\}\}$, then $sig(e)=\swap$, otherwise $sig(e)=\free$.
\begin{center}
\begin{tikzpicture}[new set = import nodes]
\begin{scope}
\foreach \i in {0,...,8} {\coordinate (\i) at (\i*1.6cm,0);}
\foreach \i in {9,...,11} {\pgfmathparse{\i-9}\coordinate (\i) at (12.8cm-\pgfmathresult*1.7cm,-1.3);}
\foreach \i in {2,5,7,9} {\fill[red!20] (\i) circle (0.38cm);}
\foreach \i in {0,...,11} {\node (t\i) at (\i) {\nscale{$t_{i,0,\i}$}};}
\graph { 
(t0) <->["\escale{$k_0$}"] (t1) <->["\escale{$v_{7i}$}"] (t2) <->["\escale{$v_{7i+1}$}"] (t3) <->["\escale{$X_{i_0}$}"] (t4) <->["\escale{$v_{7i+2}$}"] (t5)<->["\escale{$X_{i_1}$}"](t6)<->["\escale{$v_{7i+3}$}"](t7)<->["\escale{$X_{i_2}$}"](t8)<->["\escale{$v_{7i+4}$}"](t9)<->["\escale{$v_{7i+5}$}"](t10)<->["\escale{$k_0$}"](t11);
};
\end{scope}
\begin{scope}[ yshift=-1.3cm]

\foreach \i in {0,...,3} { \coordinate (\i) at (\i*1.8cm,0) ;}
\foreach \i in {2} {\fill[red!20] (\i) circle (0.38cm);}
\foreach \i in {0,...,3} {\node (t\i) at (\i) {\nscale{$t_{i,1,\i}$}};}
\graph { 
(t0) <->["\escale{$X_{i_0}$}"] (t1) <->["\escale{$v_{7i+6}$}"] (t2) <->["\escale{$X_{i_2}$}"] (t3);%
};
\end{scope}
\begin{scope}[yshift=-2.6cm, nodes={set=import nodes}]%
		\foreach \i in {0,...,4} { \coordinate (\i) at (\i*1.9cm,0) ;}
		\foreach \i in {1,4} {\fill[red!20] (\i) circle (0.38cm);}
		\foreach \i in {0,...,4} {\node (\i) at (\i) {\nscale{$g_{7i+j,\i}$}};}
		\graph { (import nodes);

			0<->["\escale{$v_{7i+j}$}"]1<->["\escale{$w_{7i+j}$}"]2<->["\escale{$k_0$}"]3<->["\escale{$k_1$}"]4;
		};
\end{scope}
\begin{scope}[yshift=-3.8cm,nodes={set=import nodes}]%
		\foreach \i in {0,...,4} { \coordinate (\i) at (\i*1.9cm,0) ;}
		\foreach \i in {0,2} {\fill[red!20] (\i) circle (0.38cm);}
		\foreach \i in {0,...,4} { \node (\i) at (\i) {\nscale{$f_{7i+j,\i}$}};}
		\graph { (import nodes);

			0<->["\escale{$v_{7i+j}$}"]1<->["\escale{$w_{7i+j}$}"]2<->["\escale{$k_1$}"]3<->["\escale{$k_0$}"]4;
		};
\end{scope}
\begin{scope}[yshift=-5.5cm]
\begin{scope}
\foreach \i in {0,...,8} {\coordinate (\i) at (\i*1.6cm,0);}
\foreach \i in {9,...,11} {\pgfmathparse{\i-9}\coordinate (\i) at (12.8cm-\pgfmathresult*1.7cm,-1.3);}
\foreach \i in {1,3,5,7,9,11} {\fill[red!20] (\i) circle (0.38cm);}
\foreach \i in {0,...,11} {\node (t\i) at (\i) {\nscale{$t_{i,0,\i}'$}};}
\graph { 
(t0) <->["\escale{$k_1$}"] (t1) <->["\escale{$v_{7i}'$}"] (t2) <->["\escale{$v_{7i+1}'$}"] (t3) <->["\escale{$X_{i_0}'$}"] (t4) <->["\escale{$v_{7i+2}'$}"] (t5)<->["\escale{$X_{i_1}'$}"](t6)<->["\escale{$v_{7i+3}'$}"](t7)<->["\escale{$X_{i_2}'$}"](t8)<->["\escale{$v_{7i+4}'$}"](t9)<->["\escale{$v_{7i+5}'$}"](t10)<->["\escale{$k_1$}"](t11);
};
\end{scope}
\begin{scope}[ yshift=-1.3cm]

\foreach \i in {0,...,3} { \coordinate (\i) at (\i*1.8cm,0) ;}
\foreach \i in {1,3} {\fill[red!20] (\i) circle (0.38cm);}
\foreach \i in {0,...,3} {\node (t\i) at (\i) {\nscale{$t_{i,1,\i}'$}};}
\graph { 
(t0) <->["\escale{$X_{i_0}'$}"] (t1) <->["\escale{$v_{7i+6}'$}"] (t2) <->["\escale{$X_{i_2}'$}"] (t3);%
};
\end{scope}
\begin{scope}[yshift=-2.6cm, nodes={set=import nodes}]%
		\foreach \i in {0,...,4} { \coordinate (\i) at (\i*1.9cm,0) ;}
		\foreach \i in {0,2} {\fill[red!20] (\i) circle (0.38cm);}
		\foreach \i in {0,...,4} {\node (\i) at (\i) {\nscale{$g_{7i+j,\i}'$}};}
		\graph { (import nodes);

			0<->["\escale{$v_{7i+j}'$}"]1<->["\escale{$w_{7i+j}'$}"]2<->["\escale{$k_1$}"]3<->["\escale{$k_0$}"]4;
		};
\end{scope}
\begin{scope}[yshift=-3.8cm,nodes={set=import nodes}]%
		\foreach \i in {0,...,4} { \coordinate (\i) at (\i*1.9cm,0) ;}
		\foreach \i in {1,4} {\fill[red!20] (\i) circle (0.38cm);}
		\foreach \i in {0,...,4} { \node (\i) at (\i) {\nscale{$f_{7i+j,\i}'$}};}
		\graph { (import nodes);

			0<->["\escale{$v_{7i+j}'$}"]1<->["\escale{$w_{7i+j}'$}"]2<->["\escale{$k_0$}"]3<->["\escale{$k_1$}"]4;
		};
\end{scope}
\end{scope}
\node (text) at (6.5,-10.25cm) {\nscale{The region $R^g_{7i+j,2}$ restricted to $T_{i,0}, T_{i,1}, G_{7i+j}, F_{7i+j}$ and $T_{i,0}', T_{i,1}', G_{7i+j}', F_{7i+j}'$, where $X_{i_0}\in M$.}};
\end{tikzpicture}
\end{center}

Note that the regions $R^g_{7i+j,0},\dots, R^g_{7i+j,0}$, also show that if $s\in S_{F_{7i+j}}$, then $s$ is $\tau$-solvable.
Altogether, by the arbitrariness of $i$ and $j$ and the symmetry of a gadget $A$ and its counterpart $A'$, this shows, for all $\ell\in \{0,\dots, 7m-1\}$, that if $s\in S_{G_\ell}\cup S_{F_\ell}\cup S_{G_\ell'}\cup S_{F_\ell'}$, then $s$ is $\tau$-solvable. 

Next, we let $i\in \{0,\dots, m-1\}$ be arbitrary but fixed and argue that $s$ is $\tau$-solvable if $s\in S_{T_{i,0}}$.
The following $\tau$-region $R^t_{i,0,2}=(sup, sig)$ solves $(t_{i,0,2}, s)$ for all relevant $s\in S_{T_{i,0}}$:
$sup(\iota)=0$;
for all $e\in E_{A_\varphi}$, if $e\in \{v_{7i}, w_{7i}, v_{7i+1}, w_{7i+1}\}$, then $sig(e)=\swap$, otherwise $sig(e)=\free$.
Similarly, one proves that $(t_{i,0,9}, s)$ is $\tau$-solvable for all relevant $s\in S_{T_{i,0}}$.
\begin{center}
\begin{tikzpicture}[new set = import nodes]
\begin{scope}
\foreach \i in {0,...,8} {\coordinate (\i) at (\i*1.6cm,0);}
\foreach \i in {9,...,11} {\pgfmathparse{\i-9}\coordinate (\i) at (12.8cm-\pgfmathresult*1.7cm,-1.3);}
\foreach \i in {2} {\fill[red!20] (\i) circle (0.38cm);}
\foreach \i in {0,...,11} {\node (t\i) at (\i) {\nscale{$t_{i,0,\i}$}};}
\graph { 
(t0) <->["\escale{$k_0$}"] (t1) <->["\escale{$v_{7i}$}"] (t2) <->["\escale{$v_{7i+1}$}"] (t3) <->["\escale{$X_{i_0}$}"] (t4) <->["\escale{$v_{7i+2}$}"] (t5)<->["\escale{$X_{i_1}$}"](t6)<->["\escale{$v_{7i+3}$}"](t7)<->["\escale{$X_{i_2}$}"](t8)<->["\escale{$v_{7i+4}$}"](t9)<->["\escale{$v_{7i+5}$}"](t10)<->["\escale{$k_0$}"](t11);
};
\end{scope}
\begin{scope}[ yshift=-1.3cm]

\foreach \i in {0,...,3} { \coordinate (\i) at (\i*1.8cm,0) ;}
\foreach \i in {0,...,3} {\node (t\i) at (\i) {\nscale{$t_{i,1,\i}$}};}
\graph { 
(t0) <->["\escale{$X_{i_0}$}"] (t1) <->["\escale{$v_{7i+6}$}"] (t2) <->["\escale{$X_{i_2}$}"] (t3);%
};
\end{scope}
\begin{scope}[yshift=-2.6cm, nodes={set=import nodes}]%
		\foreach \i in {0,...,4} { \coordinate (\i) at (\i*1.9cm,0) ;}
		\foreach \i in {1} {\fill[red!20] (\i) circle (0.38cm);}
		\foreach \i in {0,...,4} {\node (\i) at (\i) {\nscale{$g_{7i,\i}$}};}
		\graph { (import nodes);

			0<->["\escale{$v_{7i}$}"]1<->["\escale{$w_{7i}$}"]2<->["\escale{$k_0$}"]3<->["\escale{$k_1$}"]4;
		};
\end{scope}
\begin{scope}[yshift=-3.8cm,nodes={set=import nodes}]%
		\foreach \i in {0,...,4} { \coordinate (\i) at (\i*1.9cm,0) ;}
		\foreach \i in {1} {\fill[red!20] (\i) circle (0.38cm);}
		\foreach \i in {0,...,4} { \node (\i) at (\i) {\nscale{$f_{7i,\i}$}};}
		\graph { (import nodes);

			0<->["\escale{$v_{7i}$}"]1<->["\escale{$w_{7i}$}"]2<->["\escale{$k_1$}"]3<->["\escale{$k_0$}"]4;
		};
\end{scope}
\node (text) at (5,-4.5cm) {\nscale{The region $R^t_{i,0,2}$ restricted to $T_{i,0}, T_{i,1}, G_{7i}$ and $F_{7i}$.}};
\end{tikzpicture}
\end{center}

Let's consider the $\tau$-solvability of $(t_{i,0,3},s)$ for all relevant states $s\in S_{T_{i,0}}$.
In the following, we argue that there is a $\tau$-region $R^t_{i,0,3}=(sup, sig)$ that, for all $e\in E(T_{i,0})$, maps $e$ to swap if $e\in \{v_{7i+1}, X_{i_0}\}$ and to $\res$ otherwise, such that $sup(t_{i,0,3})=1$ and $sup(s)=0$ for all $s\in S_{T_{i,0}}\setminus\{t_{i,0,3}\}$.
Since $sig(X_{i_0})=\swap$, we have to deal with at most five further occurrences of $X_{i_0}$ in $A_\varphi$:
the one in $T_{i,1}$ and at most four further occurrences in say $T_{j,0}, T_{j,1}$ and $T_{\ell,0}, T_{\ell,1}$ for some $j,\ell\in \{0,\dots, m-1\}\setminus\{i\}$.
Thus, to define $R^t_{i,0,3}$ completely, we would have to do lots of tedious and mostly meaningless case analyses.
To avoid the latter, in the following, we argue informally for the existence of $R^t_{i,0,3}$:
The essential observation is that, by construction, if $s\edge{X_{i_0}}s'\in A_\varphi\setminus\{t_{i,0,3}\edge{X_{i_0}}t_{i,0,4}\}$, then there is a state $q\in \{s,s'\}$ and an event $v\in V$ such that $\fbedge{v}q$.
Notice that there might be events $v,v'\in V$ such that $\fbedge{v}s$ and $\fbedge{v'}s'$ as to be seen for $s=t_{i,0,3}$ and $s'=t_{i,0,4}$.
Anyway, to obtain $R^t_{i,0,3}$, we let $sup(\iota)=0$ and select exactly one arbitrary but fixed v-event and its corresponding w-event for every $s\edge{X_{i_0}}s'\in A_\varphi\setminus\{t_{i,0,3}\edge{X_{i_0}}t_{i,0,4}\}$ and map it to $\swap$ just like $X_{i_0}$ and $v_{7i+1}$.
The other events get a \free-signature.
One finds out that the resulting region is well-defined and behaves as announced.
Similarly, one proves the $\tau$-solvability for all $s\in \{t_{i,0,4},\dots, t_{i,0,8}\}$.
\begin{center}
\begin{tikzpicture}[new set = import nodes]
\begin{scope}
\foreach \i in {0,...,8} {\coordinate (\i) at (\i*1.6cm,0);}
\foreach \i in {9,...,11} {\pgfmathparse{\i-9}\coordinate (\i) at (12.8cm-\pgfmathresult*1.7cm,-1.3);}
\foreach \i in {3} {\fill[red!20] (\i) circle (0.38cm);}
\foreach \i in {0,...,11} {\node (t\i) at (\i) {\nscale{$t_{i,0,\i}$}};}
\graph { 
(t0) <->["\escale{$k_0$}"] (t1) <->["\escale{$v_{7i}$}"] (t2) <->["\escale{$v_{7i+1}$}"] (t3) <->["\escale{$X_{i_0}$}"] (t4) <->["\escale{$v_{7i+2}$}"] (t5)<->["\escale{$X_{i_1}$}"](t6)<->["\escale{$v_{7i+3}$}"](t7)<->["\escale{$X_{i_2}$}"](t8)<->["\escale{$v_{7i+4}$}"](t9)<->["\escale{$v_{7i+5}$}"](t10)<->["\escale{$k_0$}"](t11);
};
\end{scope}
\begin{scope}[ yshift=-1.3cm]

\foreach \i in {0,...,3} { \coordinate (\i) at (\i*1.8cm,0) ;}
\foreach \i in {1} {\fill[red!20] (\i) circle (0.38cm);}
\foreach \i in {0,...,3} {\node (t\i) at (\i) {\nscale{$t_{i,1,\i}$}};}
\graph { 
(t0) <->["\escale{$X_{i_0}$}"] (t1) <->["\escale{$v_{7i+6}$}"] (t2) <->["\escale{$X_{i_2}$}"] (t3);%
};
\end{scope}
\begin{scope}[yshift=-2.6cm, nodes={set=import nodes}]%
		\foreach \i in {0,...,4} { \coordinate (\i) at (\i*1.9cm,0) ;}
		\foreach \i in {1} {\fill[red!20] (\i) circle (0.38cm);}
		\foreach \i in {0,...,4} {\node (\i) at (\i) {\nscale{$g_{7i+1,\i}$}};}
		\graph { (import nodes);

			0<->["\escale{$v_{7i+1}$}"]1<->["\escale{$w_{7i+1}$}"]2<->["\escale{$k_0$}"]3<->["\escale{$k_1$}"]4;
		};
\end{scope}
\begin{scope}[yshift=-3.8cm,nodes={set=import nodes}]%
		\foreach \i in {0,...,4} { \coordinate (\i) at (\i*1.9cm,0) ;}
		\foreach \i in {1} {\fill[red!20] (\i) circle (0.38cm);}
		\foreach \i in {0,...,4} { \node (\i) at (\i) {\nscale{$f_{7i+1,\i}$}};}
		\graph { (import nodes);

			0<->["\escale{$v_{7i+1}$}"]1<->["\escale{$w_{7i+1}$}"]2<->["\escale{$k_1$}"]3<->["\escale{$k_0$}"]4;
		};
\end{scope}
\node (text) at (5,-4.5cm) {\nscale{The region $R^t_{i,0,3}$ restricted to $T_{i,0}, T_{i,1}, G_{7i+1}$ and $F_{7i+1}$.}};
\end{tikzpicture}
\end{center}

Restricted to $T_{i,0}$, it remains to argue that $t_{i,0,0}, t_{i,0,1}, t_{i,0,10}$ and $t_{i,0,11}$ are pairwise separable.
Fortunately, the corresponding SSP atoms are already solved by regions defined above.
More exactly, the region $R^g_{7i+j,0}$ solves $(t_{i,0,0}, t_{i,0,1})$, $(t_{i,0,0}, t_{i,0,11})$, $(t_{i,0,1}, t_{i,0,11})$ and $(t_{i,0,10}, t_{i,0,11})$;
moreover, the region $R^g_{7i+j,3}$ solves $(t_{i,0,0}, t_{i,0,10})$ and $(t_{i,0,1}, t_{i,0,11})$.
Hence, $s$ is $\tau$-solvable for all $s\in S_{T_{i,0}}$.

Finally, it is easy to see that, on the one hand, the SSP atoms corresponding to $T_{i,1}$ are similarly solvable as the ones of $T_{i,0}$ and, on the other hand, the SSP atoms of $T_{i,0}'$ and $T_{i,1}'$ are similarly solvable by symmetry.
Hence, since $i$ was arbitrary, this proves that if $s\in S_{T_{i,0}}\cup S_{T_{i,1}}\cup S_{T_{i,0}'}\cup S_{T_{i,1}'}$, then $s$ is $\tau$-solvable for all $i\in \{0,\dots, m-1\}$.
\end{proof}

\end{appendix}%

\end{document}